\begin{document}

\title{Witnessing Wigner Negativity}

\author{Ulysse Chabaud\texorpdfstring{$^*$}{*}}
\email{uchabaud@caltech.edu}
\orcid{0000-0003-0135-9819}
\affiliation{Institute 
for Quantum Information and Matter, Caltech}
\affiliation{Universit\'e de Paris, IRIF, CNRS, France}

\author{Pierre-Emmanuel Emeriau\texorpdfstring{$^*$}{*}}
\email{pierre-emmanuel.emeriau@lip6.fr}
\orcid{0000-0001-5155-1783}
\affiliation{Sorbonne Université, CNRS, LIP6, F-75005 Paris, France}
\author{Frédéric Grosshans}
\email{frederic.grosshans@lip6.fr}
\affiliation{Sorbonne Université, CNRS, LIP6, F-75005 Paris, France}
\orcid{0000-0001-8170-9668}

\newcommand{\pe}[1]{{\color{red}{#1}}}

\maketitle

\begin{abstract}
\blfootnote{\hspace{-17pt}\vspace{-11pt}$^*$ These authors contributed equally.}
Negativity of the Wigner function is arguably one of the most striking non-classical features of quantum states. Beyond its fundamental relevance, it is also a necessary resource for quantum speedup with continuous variables. As quantum technologies emerge, the need to identify and characterize the resources which provide an advantage over existing classical technologies becomes more pressing. 
Here we derive witnesses for Wigner negativity of single-mode and multimode quantum states, based on fidelities with Fock states, which can be reliably measured using standard detection setups. They possess a threshold expectation value indicating whether the measured state has a negative Wigner function. Moreover, the amount of violation provides an operational quantification of Wigner negativity. 
We phrase the problem of finding the threshold values for our witnesses as an infinite-dimensional linear optimisation problem. By relaxing and restricting the corresponding linear programs, we derive two converging hierarchies of semidefinite programs, which provide numerical sequences of increasingly tighter upper and lower bounds for the threshold values. We further show that our witnesses form a complete family---each Wigner negative state is detected by at least one witness---thus providing a reliable method for experimentally witnessing Wigner negativity of quantum states from few measurements. From a foundational perspective, our findings provide insights on the set of positive Wigner functions which still lacks a proper characterisation.

\end{abstract}

\section{Introduction}
\label{sec:intro}

Quantum information with continuous variables~\cite{lloyd1999quantum}---where information is encoded in continuous degrees of freedom of quantum systems---is one of the promising directions for the future of quantum technologies. For example, continuous-variable quantum optics enables the deterministic experimental preparation of entangled states over millions of modes~\cite{yokoyama2013ultra} and also offers reliable and efficient detection methods, such as homodyne or heterodyne detection~\cite{Leonhardt-essential}. From a theoretical point of view, quantum information with continuous variables provides different perspectives from quantum information with discrete variables and is described via the formalism of infinite-dimensional Hilbert spaces.

To handily manipulate states in those infinite spaces, mathematical tools initially inspired by physics have been developed such as phase-space formalism~\cite{moyal1949quantum}. In this framework, quantum states are represented by a quasiprobability distribution over phase space, like the Wigner function~\cite{wigner1997quantum}. These representations provide a geometric intuition of quantum states~\cite{lee1991measure}: quantum states are separated into two categories, Gaussian and non-Gaussian, depending on whether their Wigner function is a Gaussian function or not. 

Non-Gaussian quantum states are essential to a variety of quantum information processing tasks such as quantum state distillation~\cite{giedke2002characterization,eisert2002distilling,fiuravsek2002gaussian}, quantum error-correction~\cite{niset2009no}, universal quantum computing~\cite{lloyd1999quantum,ghose2007non} or quantum computational speedup~\cite{Bartlett2002,chabaud2020classical}.
Within those, an important subclass of non-Gaussian states are the states which display negativity in the Wigner function. These two classes of states coincide for pure states---namely, non-Gaussian pure states have a negative Wigner function---as pure states with a positive Wigner function are necessarily Gaussian states by Hudson theorem~\cite{hudson1974wigner,soto1983wigner}. However, this is not the case for mixed states and the (convex) set of states with a positive Wigner function becomes much harder to characterise~\cite{mandilara2009extending,filip2011detecting}.

The negativity of other phase-space quasiprobability distributions has been used to define different notions of quantumness~\cite{tan2020negativity}. In particular, the negativity of the Glauber--Sudarshan $P$ quasiprobability distribution of a quantum state is known as its non-classicality~\cite{titulaer1965correlation}. The Wigner function can be obtained from the more singular $P$ function by a Gaussian convolution, thus positivity of the latter implies positivity of the former. In particular, negativity of the Wigner function implies non-classicality, although there are non-classical states with positive Wigner function, such as squeezed Gaussian states.

In addition to its fundamental relevance as a non-classical property of physical systems~\cite{kenfack2004negativity}, Wigner negativity is also essential for quantum computing, since continuous-variable quantum computations described by positive Wigner functions can be simulated efficiently classically~\cite{Mari2012}. Wigner negativity is thus a necessary resource, though not sufficient~\cite{garcia2020efficient}, for quantum computational speedup with continuous variables.

With the rapid development of quantum technologies~\cite{preskill2018quantum}, finding efficient methods for assessing the correct functioning of quantum devices is of timely importance~\cite{eisert2020quantum}. Detecting key properties of a quantum state, such as entanglement or Wigner negativity, can be done by a full tomographic reconstruction~\cite{d2003quantum}. However, such reconstructions are very costly in terms of the number of measurements needed, and require performing a tomographically complete set of measurements---thus usually involving multiple measurement settings. In the continuous-variable setting the task is even more daunting, since the Hilbert space of quantum states is infinite-dimensional~\cite{lvovsky2009continuous,chabaud2020building}. 

Instead, one may introduce witnesses for specific properties of quantum states~\cite{terhal2001family,lewenstein2000optimization,mari2011directly,kiesel2012universal,chabaud2020certification} that are easier to measure experimentally.
These witnesses should be observables that possess a threshold expectation value indicating whether the measured state exhibits the desired property or not.  
Intuitively, a witness for a given property can be thought of as a separating hyperplane in the set of quantum states, such that any state on one side of this hyperplane has this property (see~\refig{witness}). In particular, some states with the sought property may remain unnoticed by the witness. In that regard, one may use a complete set of witnesses such that for each state exhibiting the desired property, there exists at least one witness in the set that captures it.
\begin{figure}[ht!]
	\begin{center}
		\includegraphics[width=0.9\columnwidth]{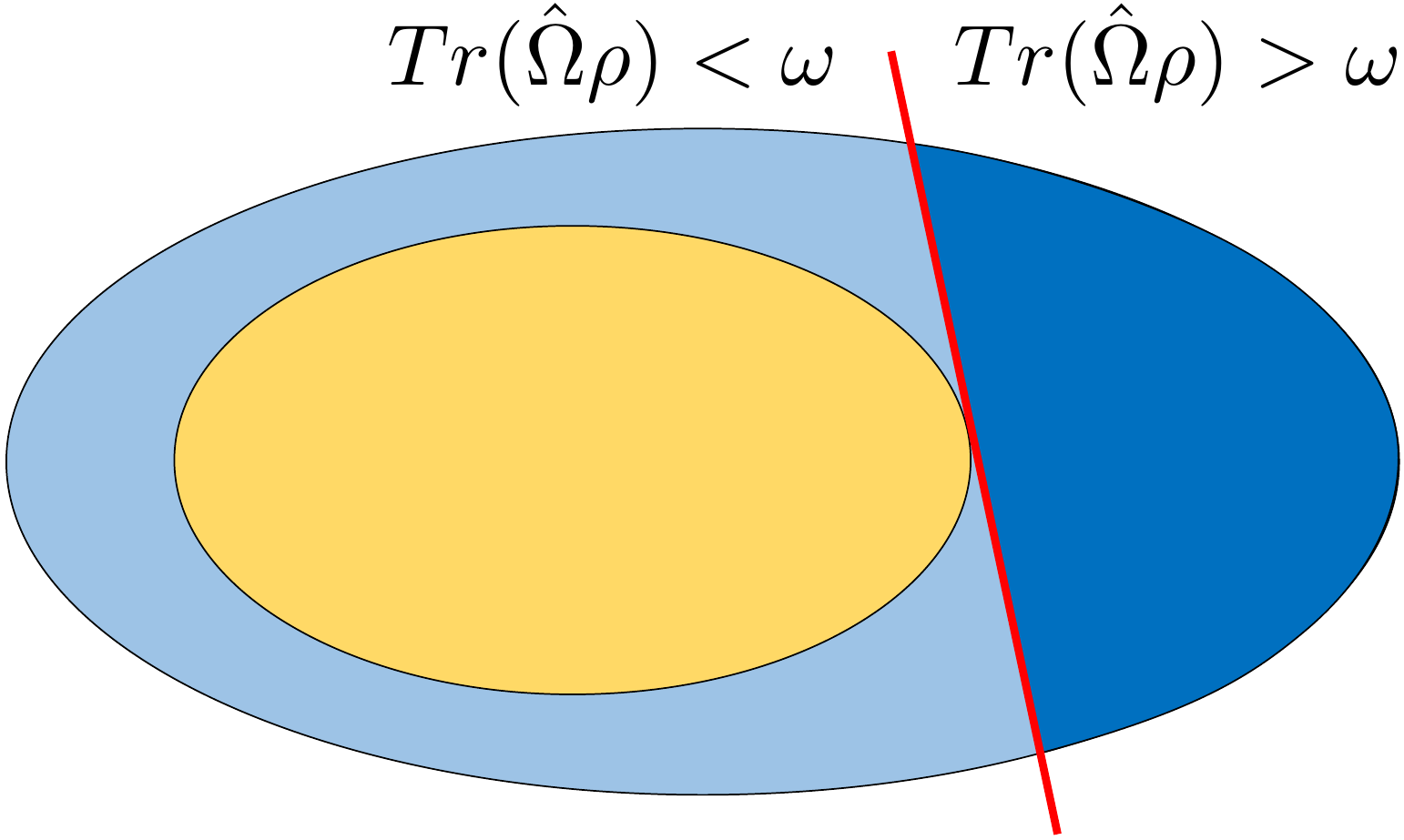}
		\caption{Pictorial representation of a witness $\hat\Omega$ with threshold value $\omega$ for a given property. In yellow: states without the property. In blue: states with the property. In red: witness threshold value. In light blue: states with the property undetected by the witness.}
		\label{fig:witness}
	\end{center}
\end{figure}

Here, we introduce and study a complete family of witnesses for Wigner negativity of single-mode quantum states, expressed using Fock states projectors. The expectation values of our witnesses are linear functions of the state which may be efficiently estimated experimentally using standard homodyne or heterodyne detection, thus providing a reliable method for detecting Wigner negativity with certifiable bounds. 
Additionally, we show that the amount by which the measured expectation value exceeds the threshold value of the witness provides an operational measure of Wigner negativity: it directly lower bounds the distance between the measured state and the set of states with positive Wigner function.

We cast the computation of the threshold values of the witnesses as infinite-dimensional linear programs, which can be either relaxed or restricted. Upper and lower bounds for the threshold values of our witnesses are then given by two converging hierarchies of finite-dimensional semidefinite programs, similar in spirit to the Lasserre--Parrilo hierarchy~\cite{lasserre2001global,parrilo2000structured} and the subsequent Lasserre hierarchy~\cite{lasserre2011new}. 

Finally, we discuss the generalisation to multimode quantum states and show that most of our results are also applicable in this case.

Our work is thought to be relevant for physicists interested in characterising Wigner negativity of quantum states---either theoretically or experimentally---and mathematicians interested in infinite-dimensional convex optimisation theory.
To that end, the rest of the paper is structured as follows: we give some notations and background in the next section~\ref{sec:background} before a detailed exposition of our witnesses in section~\ref{sec:results}. Section~\ref{sec:protocol} describes the experimental procedure for witnessing Wigner negativity of a quantum state using these witnesses, together with use-case examples. The following section~\ref{sec:opti}---which deals with infinite-dimensional optimisation techniques of independent interest---is devoted to estimating the threshold values of our witnesses: after some technical background in section~\ref{sec:functionspaces}, section~\ref{sec:LP} reformulates the problem of finding the threshold value of a witness as an infinite-dimensional linear optimisation, while section~\ref{sec:SDP} derives two hierarchies of semidefinite relaxations and restrictions for this linear program, yielding numerical upper and lower bounds for the threshold value. Section~\ref{sec:CVproof} establishes the proof of convergence of these hierarchies of upper and lower bounds to the threshold values in their respective optimisation spaces. 
We introduce the generalisation to the multimode case in section~\ref{sec:multi} and conclude with a few open questions in section~\ref{sec:conclusion}.

\section{Notations and background}
\label{sec:background}

\subsection{Preliminary notations}

For all $m\in\mathbb N^*$, Sym$_m$ denotes the space of $m\times m$ real symmetric matrices. An exponent $T$ denotes the transpose while an exponent $\dag$ denotes the conjugate transpose.

$\mathcal H$ denotes a separable infinite-dimensional Hilbert space equipped with a countable orthonormal single-mode Fock basis $\{\ket n\}_{n\in\mathbb N}$. We write $\mathcal D(\mathcal H)$ the set of quantum states (positive semidefinite operators with unit trace) over $\mathcal H$. A single-mode quantum state $\rho$ can be expanded in Fock basis as $\rho=\sum_{k,l=0}^{+\infty}\rho_{kl}\ket k\!\bra l$.

The fidelity between two quantum states $\rho$ and $\sigma$ is denoted $F(\rho,\sigma)=\Tr(\sqrt{\sqrt\rho\sigma\sqrt\rho})^2$. When one of the states is pure, it reduces to $F(\rho,\sigma)=\Tr(\rho\sigma)$. The trace distance between two quantum states $\rho$ and $\sigma$ is denoted $D(\rho,\sigma)=\frac12\Tr(\sqrt{(\rho-\sigma)^2})$. The trace distance can be related to the maximum probability of distinguishing between two quantum states. The fidelity and trace distance are related by $1-F\le D\le\sqrt{1-F}$~\cite{nielsen2002quantum}.

We denote by $\hat a$ and $\hat a^\dag$ the single-mode annihilation and creation operators, respectively, defined by their action on the Fock basis:
\begin{equation}
    \begin{aligned}
        \,&\hat a\ket n=\sqrt n\ket{n-1},\quad\quad\quad\text{for }n\in\mathbb N^*,\\
        \,&\hat a\ket 0=0,\\
        \,&\hat a^\dag\ket n=\sqrt{n+1}\ket{n+1},\;\quad\text{for }n\in\mathbb N.
    \end{aligned}
\end{equation}
These operators satisfy the canonical commutation relation $[\hat a,\hat a^\dag]=\mymathbb 1$. For all $\alpha\in\mathbb C$, we write
\begin{equation}
    \hat D(\alpha)=e^{\alpha\hat a^\dag-\alpha^*\hat a}
\end{equation}
the displacement operator of amplitude $\alpha$~\cite{weedbrook2012gaussian}. The coefficients of the displacement operator in Fock basis are given by~\cite{wunsche1998laguerre}:
\begin{equation}\label{eq:coefD}
    \begin{aligned}
      &\braket{k|\hat D(\alpha)|l}=e^{-\frac12|\alpha|^2}\\
      &\quad\times\sum_{p=0}^{\min k,l}\frac{\sqrt{k!l!}(-1)^{l-p}}{p!(k-p)!(l-p)!}\alpha^{k-p}\alpha^{*l-p},
    \end{aligned}
\end{equation}
for all $k,l\in\N$ and all $\alpha\in\C$.

\subsection{Wigner function}
\label{sec:Wigner}

The Wigner function of a single-mode quantum state $\rho$ is an equivalent representation of the state in phase space which can be expressed as~\cite{royer1977wigner,banaszek1999direct}:
\begin{equation}
    \label{eq:defWigner}
    W_\rho(\alpha)=\frac2\pi\Tr\left[\hat D(\alpha)\hat\Pi\hat D^\dag(\alpha)\rho\right],
\end{equation}
for all $\alpha\in\mathbb C$, where $\hat D$ is defined above and $\hat\Pi=(-1)^{\hat a^\dag\hat a}=\sum_{n\ge0}{(-1)^n\ket n\!\bra n}$ is the parity operator. In particular, the Wigner function of a quantum state is related to the expectation value of displaced parity operators. 

The Wigner function is a real-valued quasi-probability distribution~\cite{cahill1969density}, i.e.\@, a normalised distribution which can take negative values. Hence, it cannot be sampled directly experimentally. However, its marginals are probability distributions which can be sampled using homodyne detection~\cite{lvovsky2009continuous}. Alternatively, heterodyne detection (also called double homodyne detection) allows for sampling from a smoothed version of the Wigner function~\cite{husimi1940some,richter1998determination}. In both cases, applying a displacement before the detection is equivalent to measuring directly with homodyne or heterodyne detection and applying a classical post-processing procedure---namely, a translation of the classical outcome according to the displacement amplitude~\cite{chabaud2020efficient,chabaud2020certification}.

As mentioned in the introduction, continuous-variable quantum states are classified in two categories, Gaussian and non-Gaussian, depending on the shape of their Wigner function. The set of Gaussian states is well-understood~\cite{ferraro2005gaussian} but has a limited power, while characterising the set of non-Gaussian states is an active research topic~\cite{albarelli2018resource,takagi2018convex,zhuang2018resource,chabaud2020stellar}.

The negativity of the Wigner function can only decrease under Gaussian operations~\cite{albarelli2018resource}, i.e., operations that map Gaussian states to Gaussian states. In particular, it is invariant under displacements. It is also a robust property, since two almost indistinguishable quantum states have similar Wigner functions. 
An operational measure of Wigner negativity for a quantum state $\rho\in\mathcal D(\mathcal H)$ is given by its distance with the set of states having a positive Wigner function~\cite{mari2011directly}:
\begin{equation}\label{eq:eta}
    \eta_\rho=\inf_{\substack{\sigma\in\mathcal D(\mathcal H)\\W_\sigma \geq 0}}D(\rho,\sigma),
\end{equation}
where $D$ denotes the trace distance, thus quantifying the operational distinguishability between the state $\rho$ and any state having a positive Wigner function~\cite{nielsen2002quantum}.

A natural choice for a witness of Wigner negativity is the fidelity with a pure state having a Wigner function with negative values, since it is a quantity that can be accessed experimentally by direct fidelity estimation~\cite{d2003quantum,chabaud2020building}. Building on this intuition, and given that all Fock states---with the exception of the (Gaussian) vacuum state $\ket0$---have a negative Wigner function, we introduce in the following section a broad family of  Wigner negativity witnesses for single-mode continuous-variable quantum states based on fidelities with Fock states.

\section{Wigner negativity witnesses}
\label{sec:results}

We introduce the following Wigner negativity witnesses:
\begin{equation}\label{eq:witnessOmega}
    \hat\Omega_{\bm a,\alpha}:=\sum_{k=1}^na_k\hat D(\alpha)\ket k\!\bra k\hat D^\dag(\alpha),
\end{equation}
for $n\in\mathbb N^*$, $\bm a=(a_1,\dots,a_n)\in[0,1]^n$, 
with $\max_ka_k=1$, and $\alpha\in\mathbb C$. These operators are weighted sums of displaced Fock states projectors. 
They can be thought of as Positive Operator-Valued Measure (POVM) elements, and their expectation value for a quantum state $\rho$ is given by
\begin{equation}
    \Tr\!\left(\hat\Omega_{\bm a,\alpha}\,\rho\right)=\sum_{k=1}^na_kF\!\left(\hat D^\dag(\alpha)\rho\hat D(\alpha),\ket k\right),
\end{equation}
where $F$ is the fidelity. This quantity can be directly estimated from homodyne or heterodyne detection of multiple copies of the state $\rho$ by 
translating the samples obtained by the amplitude $\alpha$ in the classical postprocessing and performing fidelity estimation with the Fock states $\ket1,\dots,\ket n$~\cite{d2003quantum,chabaud2020building}.

For $n\in\mathbb N^*$, each choice of $(\bm a,\alpha)\in[0,1]^n\times\mathbb C$ yields a different Wigner negativity witness. In particular, when $\alpha=0$ and one entry of the vector $\bm a$ is equal to $1$ and all the other entries are $0$, the expectation value of the witness is given by the fidelity with a single Fock state.

To each witness $\hat\Omega_{\bm a,\alpha}$ is associated a threshold value defined as:
\begin{equation}\label{eq:threshold}
    \omega_{\bm a}:=\sup_{\substack{\rho\in\mathcal D(\mathcal H)\\W_\rho \geq 0}}\Tr\!\left(\hat\Omega_{\bm a,\alpha}\,\rho\right).
\end{equation}
Since negativity of the Wigner function is invariant under displacements, the threshold values do not depend on the value of the displacement amplitude $\alpha$ and we thus write $\omega_{\bm a}$ (rather than $\omega_{\bm a,\alpha}$) for the threshold value associated to the witness $\hat\Omega_{\bm a,\alpha}$. This is sensible, given that the threshold value asks for non-negativity anywhere in phase space, so a displacement in phase space should not change its value. 
Combining \begin{enumerate*}[label=(\roman*)]\item that the threshold value associated to a witness does not depend on the displacement parameter $\alpha$ and \item that we can always take into account displacement via classical post-processing if one uses homodyne or heterodyne detection associated to $\hat \Omega_{\bm a,0}$~\cite{chabaud2020efficient,chabaud2020certification}, we can restrict the analysis to witnesses of the form $\hat \Omega_{\bm a,0}$ that will generate the family $\{ \hat \Omega_{\bm a,\alpha} \}_{\alpha \in \C}$.\end{enumerate*}
Note however that the choice of displacement amplitude can play an important role for certifying negativity of certain quantum states. 

If the measured expectation value for an experimental state is higher than the threshold value given by \refeq{threshold}, this implies by definition that its Wigner function takes negative values.
Moreover, the following result shows that the amount by which the expectation value exceeds the threshold value directly provides an operational quantification of Wigner negativity for that state:

\begin{lemma}\label{lem:operational}
Let $\rho\in\mathcal D(\mathcal H)$ Wigner negative, and fix a witness $\hat\Omega_{\bm a,\alpha}$ defined in~\refeq{witnessOmega}, for $n\in\N^*$, $\bm a=(a_1,\dots,a_n)\in[0,1]^n$, and $\alpha\in\C$, with threshold value $\omega_{\bm a}$. Let us further assume that that it violates the threshold value of the witness \ie $\Tr(\hat\Omega_{\bm a,\alpha}\,\rho) > \omega_{\bm a}$ and denote the amount of violation as
\begin{equation}
    \delta_{\bm a,\alpha}(\rho):=\Tr\!\left(\hat\Omega_{\bm a,\alpha}\,\rho\right)-\omega_{\bm a}.
\end{equation}
Then,
\begin{equation}
    \eta_\rho\ge\delta_{\bm a,\alpha}(\rho),
\end{equation}
where $\eta_\rho$ is the distance between $\rho$ and the set of states having a positive Wigner function, defined in~\refeq{eta}.
\end{lemma}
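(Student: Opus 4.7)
The plan is to exploit the fact that the witness is a bounded observable with $0 \preceq \hat\Omega_{\bm a,\alpha} \preceq \mymathbb 1$ and then use the variational (dual) characterisation of the trace distance.

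First I would observe that since displacements are unitary, the vectors $\hat D(\alpha)\ket{k}$ for $k=1,\dots,n$ are orthonormal. Hence $\hat\Omega_{\bm a,\alpha}$ is a weighted sum of mutually orthogonal rank-one projectors with weights $a_k\in[0,1]$, so that
\begin{equation}
    0\preceq\hat\Omega_{\bm a,\alpha}\preceq\sum_{k=1}^n\hat D(\alpha)\ket k\!\bra k\hat D^\dag(\alpha)\preceq\mymathbb 1.
\end{equation}
In particular $\hat\Omega_{\bm a,\alpha}$ is a valid effect operator, so it can be completed into a two-outcome POVM $\{\hat\Omega_{\bm a,\alpha},\mymathbb 1-\hat\Omega_{\bm a,\alpha}\}$.

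Next I would invoke the standard operational characterisation of the trace distance: for any two quantum states $\rho,\sigma\in\mathcal D(\mathcal H)$,
\begin{equation}
    D(\rho,\sigma)=\max_{0\preceq M\preceq\mymathbb 1}\Tr\!\left[M(\rho-\sigma)\right].
\end{equation}
Applied to $M=\hat\Omega_{\bm a,\alpha}$ this gives $\Tr[\hat\Omega_{\bm a,\alpha}(\rho-\sigma)]\le D(\rho,\sigma)$ for every $\sigma\in\mathcal D(\mathcal H)$. If furthermore $\sigma$ has a non-negative Wigner function, then by the very definition of the threshold value in \refeq{threshold} we have $\Tr(\hat\Omega_{\bm a,\alpha}\sigma)\le\omega_{\bm a}$, so that
\begin{equation}
    \delta_{\bm a,\alpha}(\rho)=\Tr(\hat\Omega_{\bm a,\alpha}\rho)-\omega_{\bm a}\le\Tr\!\left[\hat\Omega_{\bm a,\alpha}(\rho-\sigma)\right]\le D(\rho,\sigma).
\end{equation}

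Finally I would take the infimum over all $\sigma\in\mathcal D(\mathcal H)$ with $W_\sigma\ge 0$ on the right-hand side: by \refeq{eta} this infimum is exactly $\eta_\rho$, which yields $\delta_{\bm a,\alpha}(\rho)\le\eta_\rho$ and closes the proof. There is no real obstacle here—the only subtle point is verifying $\hat\Omega_{\bm a,\alpha}\preceq\mymathbb 1$, which is what allows us to invoke the dual formula for the trace distance; every other step is a direct application of a definition.
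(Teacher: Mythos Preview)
Your proof is correct and follows essentially the same route as the paper's own proof. The only cosmetic difference is that the paper phrases the key step through the total variation distance of the binary POVM $\{\hat\Omega_{\bm a,\alpha},\mymathbb 1-\hat\Omega_{\bm a,\alpha}\}$, while you invoke the equivalent variational characterisation $D(\rho,\sigma)=\max_{0\preceq M\preceq\mymathbb 1}\Tr[M(\rho-\sigma)]$ directly; the underlying argument is identical.
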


\begin{proof}
We use the notations of the lemma. Let us consider the binary POVM $\{\hat\Omega_{\bm a,\alpha},\mymathbb 1-\hat\Omega_{\bm a,\alpha}\}$. For all $\sigma\in\mathcal D(\mathcal H)$, we write $P^\sigma_{\bm a,\alpha}$ the associated probability distribution: $P^\sigma_{\bm a,\alpha}(0)=1-P^\sigma_{\bm a,\alpha}(1)=\Tr(\hat\Omega_{\bm a,\alpha}\,\sigma)$. 

Let $\sigma$ be a state with a positive Wigner function, so that $\Tr(\hat\Omega_{\bm a,\alpha}\,\sigma)\le\omega_{\bm a}$, by definition of the threshold value. We have:
\begin{equation}
    \begin{aligned}
        \delta_{\bm a,\alpha}(\rho)&=\Tr(\hat\Omega_{\bm a,\alpha}\rho)-\omega_{\bm a}\\
        &\le|\Tr(\hat\Omega_{\bm a,\alpha}\rho)-\Tr(\hat\Omega_{\bm a,\alpha}\sigma)|\\
        &=|P^\rho_{\bm a,\alpha}(0)-P^\sigma_{\bm a,\alpha}(0)|\\
        &=\|P^\rho_{\bm a,\alpha}-P^\sigma_{\bm a,\alpha}\|_{tvd}\\
        &\le D(\rho,\sigma),
    \end{aligned}
\end{equation}
where we used $\delta_{\bm a,\alpha}(\rho)\ge0$ in the second line, $\|P-Q\|_{tvd}=\frac12\sum_x|P(x)-Q(x)|$ denotes the total variation distance, and we used the operational property of the trace distance in the last line~\cite{nielsen2002quantum}. With~\refeq{eta}, taking the infimum over $\sigma$ concludes the proof.
\end{proof}

\noindent This result directly extends to the case where only an upper bound of the threshold value is known: the amount by which the expectation value exceeds this upper bound is also a lower bound of the distance to the set of states having a positive Wigner function.

Importantly, the family of Wigner negativity\index{Wigner negativity} witnesses $\{\hat\Omega_{\bm a,\alpha}\}$ is \textit{complete}, i.e., for any quantum state with negative Wigner function there exists a choice of witness $(\bm a,\alpha)$ such that the expectation value of $\hat\Omega_{\bm a,\alpha}$ for this state is higher than the threshold value. Indeed, by taking $\bm a=(1,0,1,0,1,\dots)$, this family includes as a subclass the complete family of witnesses from~\cite{chabaud2020certification}. 
Indeed, by expanding the definition of the Wigner function in Eq.~\eqref{eq:defWigner} with the expression of the parity operator and using the completeness relation
$\sum_{n \in \N} \ket n \bra n = \Id$, the Wigner function of any density operator $\rho \in \mathcal D(\mathcal H)$ reads:
\begin{equation*}
    W_{\rho}(\alpha) = \frac 2 \pi \left(1 - 2 \Tr\left( \hat \Omega_{(1,0,1,0,\dots),\alpha} \rho \right) \right) \Mdot
\end{equation*}
Hence, for any state with a negative Wigner function, there exists a choice of $\alpha\in\C$ such that the witness $\hat \Omega_{(1,0,1,0,\dots),\alpha}$ with threshold value $\frac 12$ can detect its negativity.

The threshold value in \refeq{threshold} is given by an optimisation problem over quantum states having a positive Wigner function, which is a convex subset of an infinite-dimensional space that does not possess a well-characterised structure. While solving this optimisation problem thus seems unfeasible in general, it turns out that we can obtain increasingly good numerical upper and lower bounds for the threshold value using semidefinite programming. 

Semidefinite programming is a particular case of conic programming---a subfield of convex optimisation---where one optimises linear functions within the convex cone of positive semidefinite matrices~\cite{vandenberghe1996semidefinite}. 
This is a powerful optimisation technique as semidefinite programs (SDP) can be solved efficiently using interior point methods.

The relevant programs are derived in section \ref{sec:SDPupper} and ~\ref{sec:SDPlower} where their convergence is proven. Since these proofs introduce several intermediate forms of the programs,  we explicitly give them below to avoid confusion on which programs to implement numerically.
For  $n\in\mathbb N^*$, $\bm a=(a_1,\dots,a_n)\in[0,1]^n$, and $m\ge n$,
the hierarchies of semidefinite programs that respectively provide lower bounds and upper bounds for the threshold value $\omega_{\bm a}$ associated to the witnesses $\{ \hat \Omega_{\bm a,\alpha} \}_{\alpha \in  \C}$ are:
\begin{fleqn}
\begin{equation*}
    \label{prog:lowerSDP}
    \tag*{$(\text{SDP}^{m,\leq}_{\bm a})$}
        \begin{aligned}
            & \quad \text{Find } Q \in \text{Sym}_{m+1} \text{ and } \bm{F} \in \R^{m+1} \\
            & \quad \text{maximising } \textstyle \sum_{k=1}^na_kF_k \\
            & \quad \text{subject to}  \\
            & \begin{dcases}
            \sum_{k=0}^{m} F_k = 1  \\
            \forall k \in \llbracket 0,m \rrbracket, \; F_k \geq 0\\
            \forall l \in \llbracket 1,m \rrbracket, \;
            \sum_{i+j=2l-1} Q_{ij} = 0 \\
            \forall l \in \llbracket 0,m \rrbracket, \; \sum_{i+j=2l} Q_{ij} = \sum_{k=l}^{m} \frac{(-1)^{k+l}}{l!}  \binom kl F_k \hspace{-3cm} \\
            Q \succeq 0,
            \end{dcases}
        \end{aligned}
\end{equation*}
\end{fleqn}
and
\begin{fleqn}
\begin{equation*}
    \label{prog:upperSDP}
    \tag*{$(\text{SDP}^{m,\geq}_{\bm a})$}
        \begin{aligned}
            & \quad \text{Find } A \in \text{Sym}_{m+1} \text{ and } \bm{F} \in \R^{m+1} \\
            & \quad \text{maximising } \textstyle \sum_{k=1}^na_kF_k \\
            & \quad \text{subject to}  \\
            & \begin{dcases}
            \sum_{k=0}^{m} F_k = 1  \\
            \forall k \in \llbracket 0,m \rrbracket, \; F_k \geq 0\\
            \forall l \in \llbracket 1,m \rrbracket, \forall i+j=2l-1, \;
            A_{ij} = 0 \hspace{-3cm}  \\
            \forall l \leq m, \forall i+j = 2l, \;  A_{ij} = \sum_{k=0}^{l} F_k  \binom lk l! \hspace{-3cm} \\
            A \succeq 0,
            \end{dcases}
        \end{aligned}
\end{equation*}
\end{fleqn}
Let $\omega_{\bm a}^{m,\geq}$ be the optimal value of \refprog{upperSDP}. We show in section~\ref{sec:opti} that the sequence $\{\omega_{\bm a}^{m,\geq}\}_{m\ge n}$ is a decreasing sequence of upper bounds of $\omega_{\bm a}$, which converges to $\omega_{\bm a}$. Similarly, let $\omega_{\bm a}^{m,\leq}$ be the optimal value of \refprog{lowerSDP}. We show that the sequence $\{\omega_{\bm a}^{m,\leq}\}_{m\ge n}$ is an increasing sequence of lower bounds of $\omega_{\bm a}$, which converges to $\omega_{\bm a}^\mathcal S$---a modified threshold value computed with Schwartz functions rather than square-integrable functions; we have $\omega_{\bm a}^\mathcal S\le\omega_{\bm a}$, and the equality between the two values is open.

In particular, the numerical upper bounds $\omega_{\bm a}^{m,\geq}$ can be used instead of the threshold value $\omega_{\bm a}$ to witness Wigner negativity, while the numerical lower bounds $\omega_{\bm a}^{m,\leq}$ may be used to control how much the upper bounds differ from the threshold value. We give a detailed procedure in the following section, together with use-case examples and details on the numerical implementation.

Note that although both problems are maximisation problems, the constraints of \refprog{upperSDP} get tighter as $m$ increases, and thus the corresponding sequence of optimal values is decreasing, while the constraints of \refprog{lowerSDP} get looser as m increases, and thus the corresponding sequence of optimal values is increasing.

\paragraph{State-of-the-art}

Our Wigner negativity witnesses outperform existing ones~\cite{mari2011directly,fiuravsek2013witnessing} in terms of generality and practicality, since they form a complete family and provide much more flexibility with the choice of $n\in\N^*$, $\bm a\in[0,1]^n$ and $\alpha\in\mathbb C$. They are accessible with optical homodyne or heterodyne measurement, and do not require making any assumption on the measured state, unlike other existing methods~\cite{PhysRevLett.119.183601}. Moreover, our witnesses generalise those of~\cite{chabaud2020certification}, and may provide simpler alternatives to detect Wigner negativity. We also provide two converging hierarchies to approximate the threshold values associated to these witnesses. 
Proving convergence is 
important 
and was not considered in the other approaches mentioned above. Finally, our approach also generalises to the multimode setting, as detailed in section~\ref{sec:multi}.

\section{Witnessing Wigner negativity}
\label{sec:protocol}

\subsection{Procedure}
\label{sec:procedure}

In this section, mainly devoted to experimentalists, we describe a procedure to check whether a continuous-variable quantum state exhibits Wigner negativity using our witnesses.
\begin{figure}[ht!]
	\begin{mybox}{Witnessing Wigner negativity:}\label{box:protocol}
    \justifying{
        \begin{enumerate}
            \item Choose a fidelity-based witness $\hat\Omega_{\bm a,\alpha}$ defined in \refeq{witnessOmega} by picking $n\in\mathbb N^*$, $\bm a\in[0,1]^n$ and $\alpha\in\mathbb C$.
            \item Run the upper bound semidefinite program \refprog{upperSDP} for $m\ge n$, get a numerical estimate $\omega^{m,\geq}_{\bm a}$. These values are already computed for $\bm a=(0,\dots,0,1)$ and $n\le10$ in Table~\ref{tab:Fockbounds}.
             \item Run the lower bound semidefinite program \refprog{lowerSDP} for $m \ge n$, get a numerical estimate $\omega^{m,\leq}_{\bm a}$. These values are already computed for $\bm a=(0,\dots,0,1)$ and $n\le10$ in Table~\ref{tab:Fockbounds}.
             \item Estimate the expectation value for that witness of the experimental state from samples of homodyne or heterodyne detection by translating the samples by $\alpha$ and performing fidelity estimation with the corresponding Fock states. This yields an experimental witness value denoted $\omega_{\text{exp}}$.
            \item Compare the value obtained experimentally with the numerical bounds:
            if it is greater than the numerical upper bound, i.e.,
            $\omega_{\text{exp}}\ge\omega^{m,\geq}_{\bm a}$ then the state displays Wigner negativity, and its distance to the set of Wigner positive states is lower bounded by $\omega_{\text{exp}}-\omega^{m,\geq}_{\bm a}$. 
            Otherwise, if it is lower than the numerical lower bound, i.e., $\omega_{\text{exp}}\leq \omega^{m,\leq}_{\bm a}$ then the witness cannot detect Wigner negativity for this state. 
        \end{enumerate}
    }
	\end{mybox}
\end{figure}

The main subroutine of this procedure is to estimate fidelities with displaced Fock states using classical samples from homodyne or heterodyne detection\footnote{Actually, using a fidelity witness rather than a fidelity estimate is sufficient for our purpose.}, in order to compute the experimental value for a Wigner negativity witness. Displacement can be achieved with classical post-processing by translating the classical samples according to the displacement amplitude, and performing direct fidelity estimation with Fock states (see section~\ref{sec:Wigner} and~\cite{lvovsky2009continuous,d2003quantum,chabaud2020building}).
Upper and lower bounds on the threshold value of the witness are then obtained using semidefinite programming, and comparing the experimental witness value to these bounds gives insight about the Wigner negativity of the measured quantum state.

We give a detailed procedure for using our witnesses for detecting Wigner negativity in the framed box. This procedure starts by the choice of a specific witness, and we explain hereafter a heuristic method for picking a good witness.

If the experimental state is anticipated to have negativity at $\alpha$, then one may use the witness with parameters $(n,\bm a,\alpha)$ with $\bm a=(1,0,1,0,\dots)$, which will detect negativity for $n$ large enough~\cite{chabaud2020certification}. However, this may imply having to estimate fidelities with Fock states having a high photon number with homodyne or heterodyne detection, which requires a lot of samples, while simpler witnesses can suffice for the task and be more efficient, as we show in the next section. Moreover, there are cases where the state to be characterised is fully unknown. 

Instead, a simple heuristic for picking a good witness for Wigner negativity is the following: 
\begin{itemize}
\item From samples of homodyne or heterodyne detection of multiple copies of an experimental state, estimate the expected values of witnesses in \refeq{witnessOmega} for a small value of $n$ and a large set of values $\bm a$ and $\alpha$, using the same samples for all witnesses. 
\item Based on these values, pick the simplest witness possible---with the smallest value of $n$---that is able to witness Wigner negativity with a reasonable violation. This is done by comparing the estimated expected values with the upper bounds on the corresponding threshold values. These bounds depend only the choice of the witness parameters $n,\bm a$ and can be precomputed using~\refprog{upperSDP}. To facilitate the use of our methods, we have collected such bounds for $\bm a=(0,\dots,0,1)$ and $n\le10$ in Table~\ref{tab:Fockbounds}. We also precomputed these bounds for $n=3$ and a large number of values of $\bm a$ in Appendix~\ref{sec:weightedwitness_n3} and~\cite{codes}. 
\item Then, estimate the expected value for that witness using a new collection of samples---thus obtaining proper error bars and avoiding the accumulation of statistical errors.
\end{itemize}

\noindent In what follows, we give a few theoretical examples for using our witnesses to detect negativity of the Wigner function of single-mode quantum states.

\subsection{Examples}
\label{sec:examples}

We identify three levels of generality within our family of witnesses in \refeq{witnessOmega}: \begin{enumerate*}[label=(\roman*)]\item fidelities with single Fock states,\item linear combinations of fidelities with Fock states, and \item displaced linear combinations of fidelities with Fock states.\end{enumerate*} 

Fidelities with Fock states are the most practical of our witnesses, since they require the estimation of only one diagonal element of the density matrix of the measured state.
The corresponding values in Table~\ref{tab:Fockbounds} can be used directly by experimentalists: if an estimate of $\braket{n | \rho | n}$ for appropriate $n$ is above one of these numerical upper bounds then it ensures that $\rho$ has a Wigner function with negative values. Moreover, by Lemma~\ref{lem:operational}, the amount by which the estimate of $\braket{n | \rho | n}$ exceeds the numerical upper bound directly provides a lower bound on the distance between $\rho$ and the set of states having a positive Wigner function.

For instance, if we focus on $n=3$ in Table~\ref{tab:Fockbounds}, the threshold value $\omega_3$ satisfies $0.378 \le \omega_3 \le 0.427$. Having a state $\rho$ such that $\braket{3|\rho|3} > 0.427$ guarantees that $\rho$ has Wigner negativity. If $\braket{3|\rho|3} < 0.378$ then we conclude that the witness cannot detect negativity for this state.
\begin{table}[htp]
    \centering
        \begin{tabular}{@{}l|cc@{}}
            \toprule
            $n$ & Lower bound & Upper bound \\ \midrule
            1 & 0.5 & 0.5   \\
            2 & 0.5 & 0.5   \\
            3 & 0.378 & 0.427 \\
            4 & 0.375 & 0.441 \\
            5 & 0.314 & 0.385 \\
            6 & 0.314 & 0.378 \\
            7 & 0.277 & 0.344 \\
            8 & 0.280 & 0.348 \\
            9 & 0.256 & 0.341 \\
            10 & 0.262 & 0.334 \\
            \bottomrule
        \end{tabular}
    \caption{Table of numerical upper and lower bounds for the threshold value $\omega_n$ of various Wigner negativity witnesses obtained using our hierarchies of semidefinite programs at rank up to around $30$. The witnesses considered here are Fock states projectors (photon-number states $\ket n$) from $1$ to $10$. Note that the gap between the lower and upper bounds never exceeds $0.1$. Additionally, the bounds in the first two lines are analytical (see section~\ref{sec:LP}) and for the upper bounds, the corresponding numerical values are $0.528$ and $0.551$, respectively. See section \ref{sec:numerical} and \cite{codes} for the numerical implementation.}
    \label{tab:Fockbounds}
\end{table}
When the experimental state is close to a Fock state (different from the vacuum), a natural choice for the witness thus is the fidelity with the corresponding Fock state. 
For instance, consider a photon-subtracted squeezed vacuum state~\cite{ourjoumtsev2006generating}
\begin{equation}
    \ket{\text{p-ssvs}(r)}=\frac1{\sinh r}\hat a\hat S(r)\ket 0,
\end{equation}
where $\hat S(r)=e^{\frac r2(\hat a^2-\hat a^{\dag2})}$ is a squeezing operator with parameter $r\in\mathbb R$. Its fidelity with the single-photon Fock state $\ket1$ is given by:
\begin{equation}\label{eq:fidepssvs}
    \frac1{(\sinh r)^2}\left|\braket{1|\hat a\hat S(r)|0}\right|^2=\frac1{(\cosh r)^3}.
\end{equation}
When the squeezing parameter is small, this state is close to a single-photon Fock state. In particular, for $0<r<0.70$, the fidelity $F(\text{p-ssvs}(r),1)$ in~\refeq{fidepssvs} is greater than $\omega_1^\ge=\frac12$ and our witness can be used to detect Wigner negativity of this state (see~\refig{examples} (a)).

Another example is given by superpositions of coherent states states: we consider the cat state~\cite{sanders1992entangled}
\begin{equation}
    \ket{\text{cat}_2(\alpha)}=\frac{\ket\alpha+\ket{-\alpha}}{\sqrt{2(1+e^{-2|\alpha|^2})}},
\end{equation}
and the compass state~\cite{zurek2001sub}
\begin{equation}
 \ket{\text{cat}_4(\alpha)}=\frac{\ket\alpha+\ket{-\alpha}+\ket{i\alpha}+\ket{-i\alpha}}{2\sqrt{1+e^{-|\alpha|^2}(2\cos(|\alpha|^2)+1)}},   
\end{equation}
where $\ket\alpha=e^{-\frac12|\alpha|^2}\sum_{k\ge0}\frac{\alpha^k}{\sqrt{k!}}\ket k$ is the coherent state of amplitude $\alpha\in\mathbb C$. 

\begin{figure}[htp]
	\begin{center}
		\subfloat{
		    \includegraphics[width=.9\columnwidth]{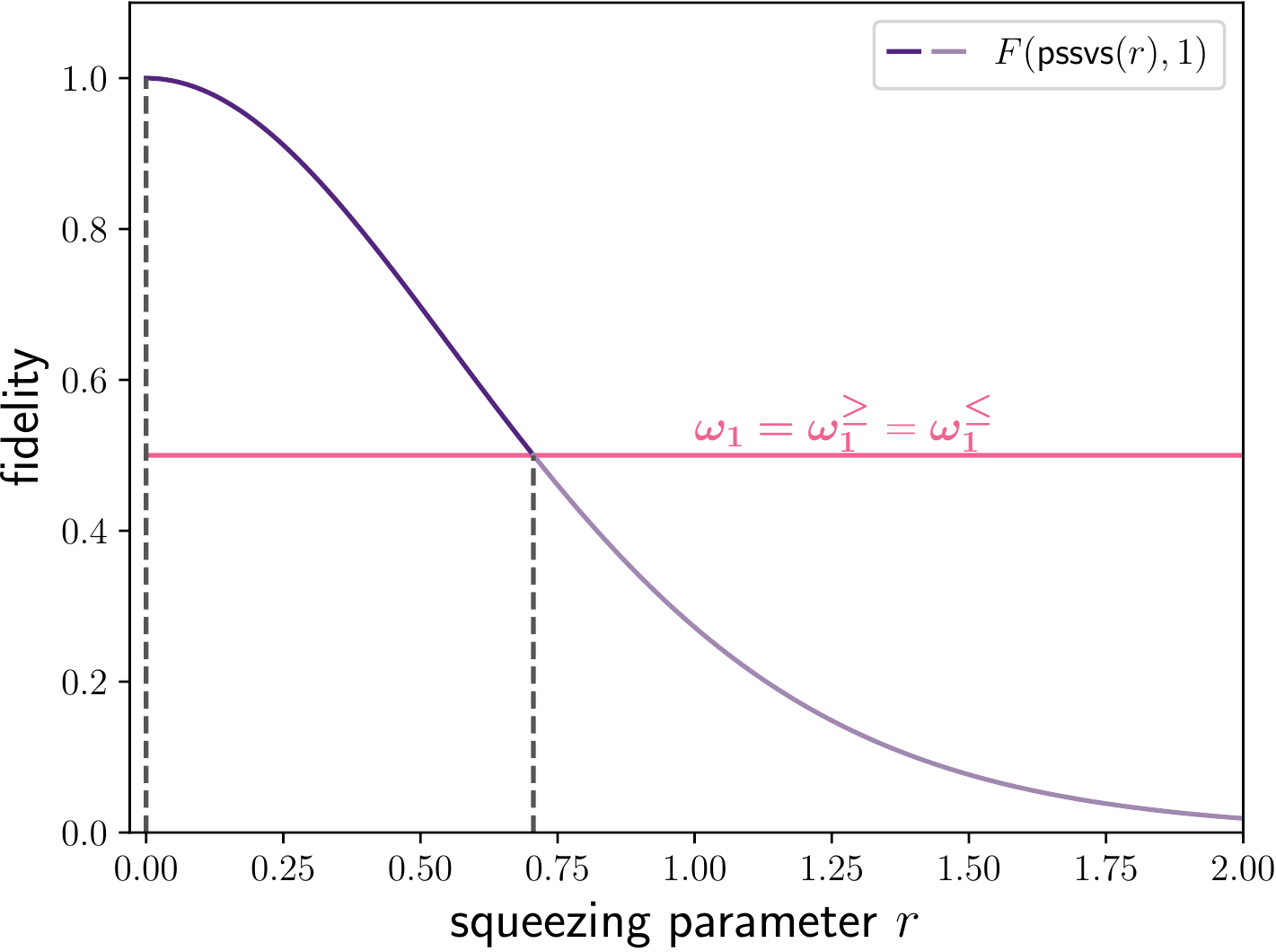}}
		
		\subfloat{
		    \includegraphics[width=.9\columnwidth]{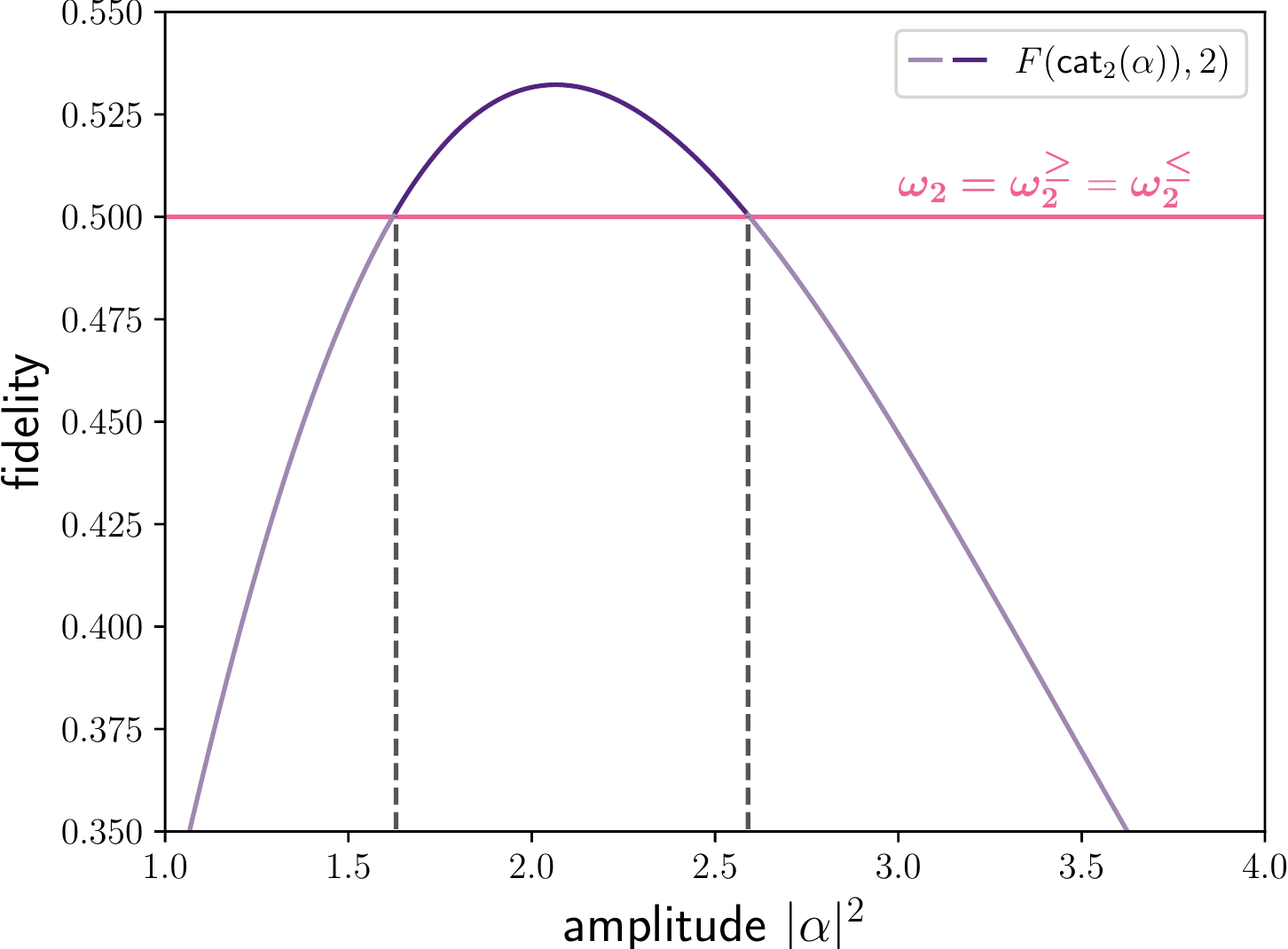}}
		    
		\subfloat{
		    \includegraphics[width=.9\columnwidth]{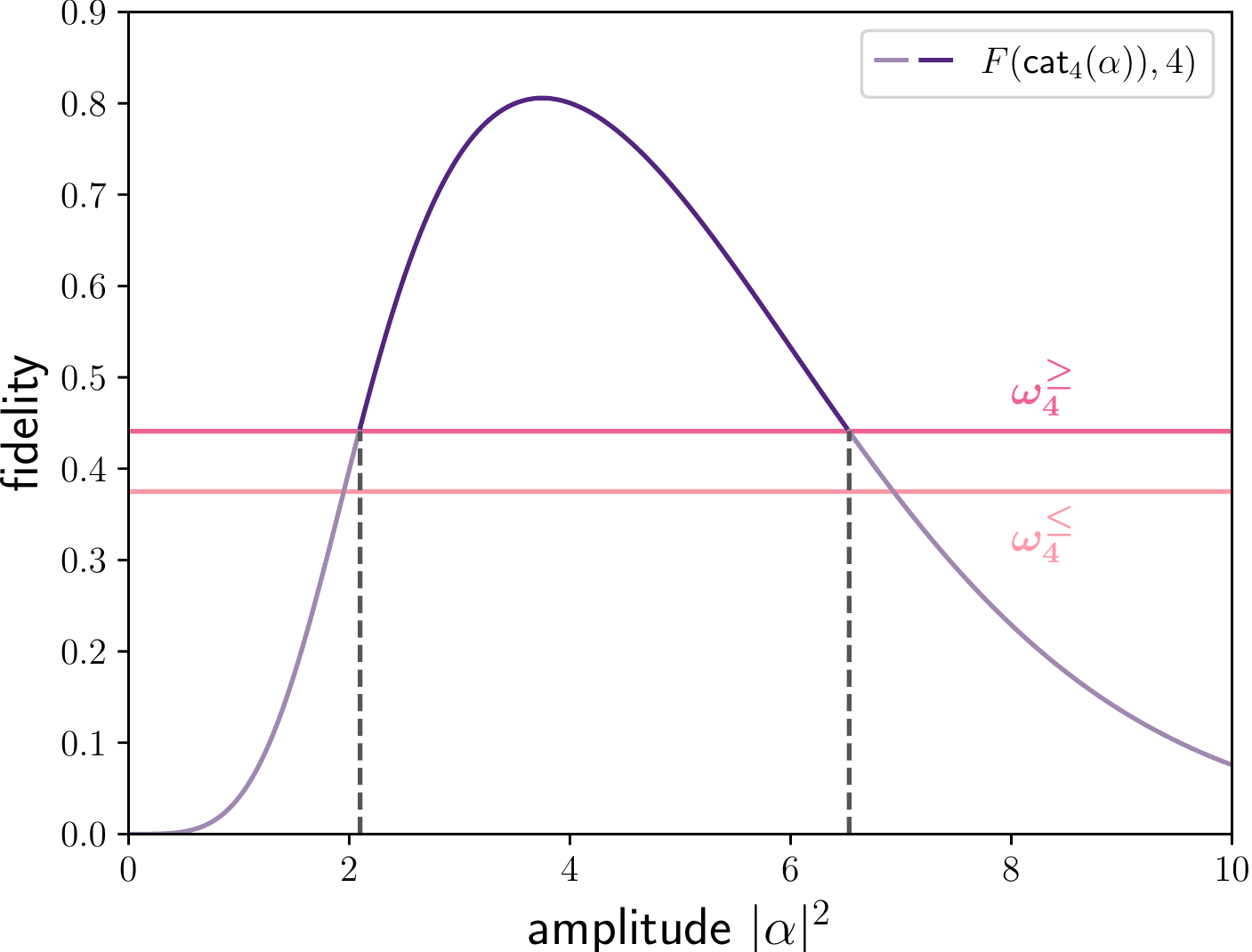}}
		\caption{(a) Fidelities of photon-subtracted squeezed vacuum states $\ket{\text{p-ssvs}(r)}$ with squeezing parameter $r\in\R$ with the Fock state $\ket1$. (b) Fidelities of cat states $\ket{\text{cat}_2(\alpha)}$ with amplitude $\alpha\in\mathbb C$ with the Fock state $\ket2$. (c) Fidelities of compass states $\ket{\text{cat}_4(\alpha)}$ with amplitude $\alpha\in\mathbb C$ with the Fock state $\ket4$. The dashed black lines delimit the intervals of parameter values where our witnesses from Table~\ref{tab:Fockbounds} can be used to detect Wigner negativity of the corresponding state, i.e., when the fidelity (purple curve) is above the witness upper bound (pink horizontal line). When it is below the witness lower bound, we are guaranteed that the witness (here $\ket4\!\bra4$) cannot be used to detect Wigner negativity of the state. 
		}
		\label{fig:examples}
	\end{center}
\end{figure}

We have
\begin{equation}\label{eq:fidecat}
    \left|\braket{2|\text{cat}_2(\alpha)}\right|^2=\frac{|\alpha|^4}{2\cosh(|\alpha|^2)},
\end{equation}
and 
\begin{equation}\label{eq:fidecompass}
    \left|\braket{4|\text{cat}_4(\alpha)}\right|^2=\frac{|\alpha|^8/12}{\cosh(|\alpha|^2)+\cos(|\alpha|^2)}.
\end{equation}

\noindent For $1.63\le|\alpha|^2\le2.59$, the fidelity $F(\text{cat}_2(\alpha),2)$ in~\refeq{fidecat} is greater than $\omega_2^\ge=\frac12$ and our witness corresponding to $n=2$ can be used to detect Wigner negativity of this state. Similarly, for $2.10\le|\alpha|^2\le6.70$, the fidelity $F(\text{cat}_4(\alpha),4)$ in~\refeq{fidecompass} is greater than $\omega_4^\ge=0.441$ and our witness corresponding to $n=4$ can be used to detect Wigner negativity of this state (see~\refig{examples} (b) and (c)).
Note that in all cases, the height difference when the purple curve is above the pink horizontal line directly provides a lower bound on the distance between the corresponding state and the set of states having a positive Wigner function.

Some quantum states will remain unnoticed by all single Fock state negativity witnesses. For example, the state $\rho_{0,1,2} := \frac19 \ket0\!\bra0 + \frac49 \ket1\!\bra1+ \frac49 \ket2\!\bra2$ has a negative Wigner function but is not detected by any of the single Fock state negativity witnesses, since the lower bounds for $n=1,2$ in Table~\ref{tab:Fockbounds} are higher than $\frac49$, and this state has fidelity $0$ with higher Fock states. However, it is detected by simple witnesses based on linear combinations of fidelities. For example, with $n=2$ and $\bm a=(1,1)$ we find numerically that the threshold value of the witness $\ket1\!\bra1+\ket2\!\bra2$ is less than $0.875$ when running the corresponding program \refprog{upperSDP} for $m=7$. 
And since $\Tr(\rho_{0,1,2} (\ket1\!\bra1+\ket2\!\bra2)) = \frac89 > 0.875$, this linear combination of Fock state fidelities can indeed detect Wigner negativity for this state.

However, some quantum states with a negative Wigner function will always go unnoticed by the previous witnesses because these witnesses are invariant under phase-space rotations while the Wigner function of those states becomes positive under random dephasing: consider for instance the superposition $\sqrt{1-\frac1s}\ket0+\frac1{\sqrt s}\ket1$, for $s>2$ (which under random dephasing is mapped to $(1-\frac1s)\ket0\!\bra0+\frac1s\ket1\!\bra1$). In that case, the Wigner negativity of such states can still be witnessed by using the displaced version of our witnesses. In particular, if any single-mode quantum state has a Wigner function negative at $\alpha\in\mathbb C$, then there is a choice of $n\in\mathbb N^*$ such that the witness in \refeq{witnessOmega} defined by $\bm a=(1,0,1,0,1,\dots)$ and the displacement amplitude $\alpha$ detects its negativity~\cite{chabaud2020certification}.
In practice, simpler witnesses may suffice to detect negativity, and the choice of witness will ultimately depend on the experimental state at hand.

\begin{figure}[ht!]
	\begin{center}
		\includegraphics[width=\columnwidth]{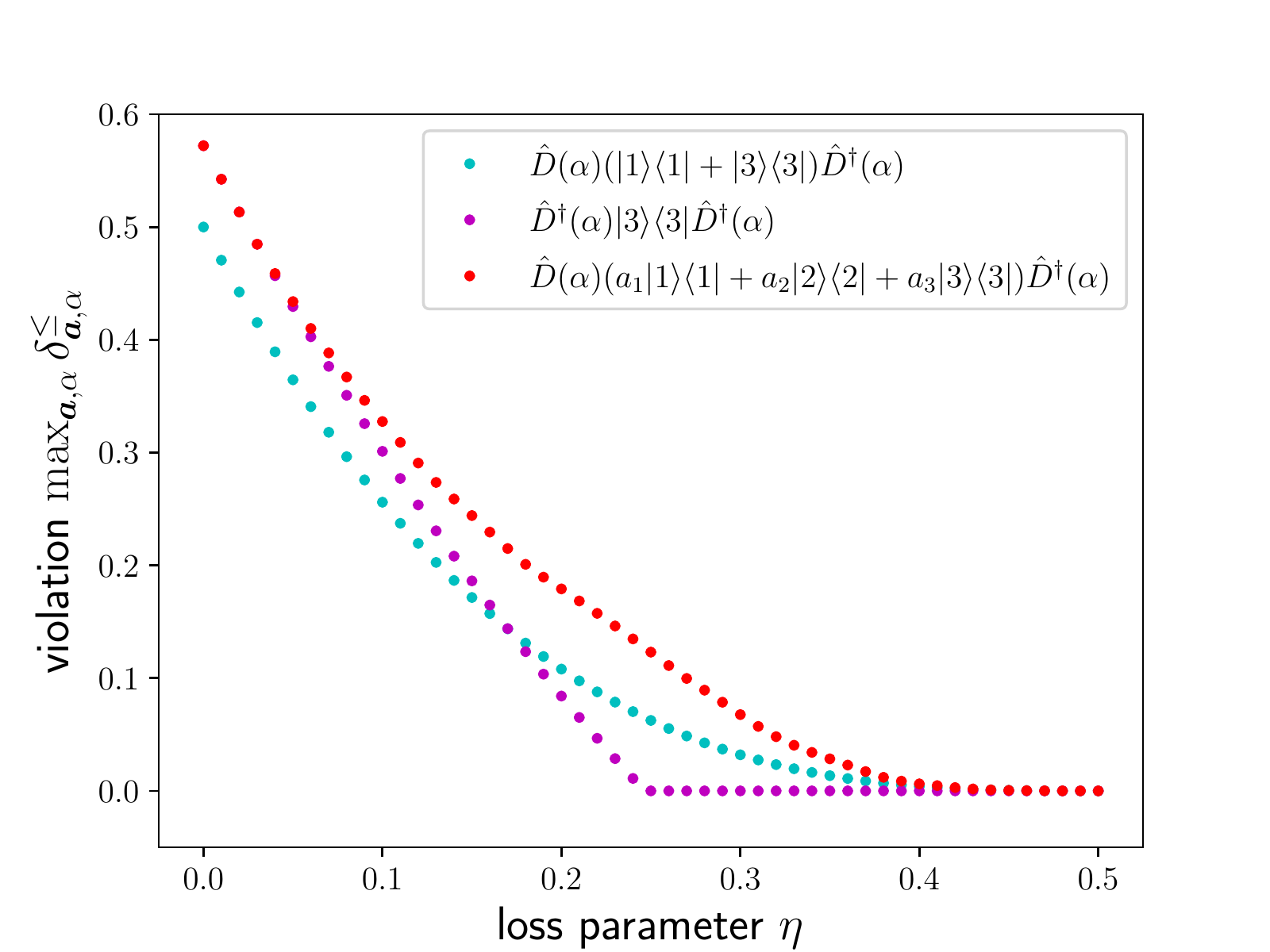}
		\caption{Lower bounds on the violation $\delta_{\bm a,\alpha} = \Tr(\hat\Omega_{\bm a,\alpha}\,\rho_{3,\eta})-\omega_{\bm a}$ for a lossy 3-photon Fock state $\rho_{3,\eta} = \eta^3\ket3\!\bra3+3\eta^2(1-\eta)\ket2\!\bra2
    +3\eta(1-\eta)^2\ket1\!\bra1+(1-\eta)^3\ket0\!\bra0$ for $\bm a = (a_1,a_2,a_3)$ with the loss parameter $\eta$. Precomputed bounds on threshold values for witnesses of the form $\hat \Omega_{\bm a}= a_1 \ket1\!\bra1 + a_2 \ket2\!\bra2 + a_3 \ket3\!\bra3 $ can be found in Appendix \ref{sec:weightedwitness_n3}. We use these values to find the witness $\hat\Omega_{\bm a,\alpha}$ giving the maximum lower bound $\delta_{\bm a,\alpha}^{\le} =\Tr(\hat\Omega_{\bm a,\alpha}\,\rho_{3,\eta})-\omega_{\bm a}^{\ge}$ on the violation $\delta_{\bm a,\alpha}$, for different values of the loss parameter $\eta$. These lower bounds are represented in red. In blue is the maximal violation that can be detected using the witnesses $\hat \Omega_{(1,0,1),\alpha}=\hat D(\alpha)(\ket1\!\bra1+\ket3\!\bra3)\hat D^\dag(\alpha)$~\cite{chabaud2020certification}. In violet is the maximal violation that can be detected using the more naive witness $\hat \Omega_{(0,0,1),\alpha}=D(\alpha)\ket3\!\bra3D^\dag(\alpha)$. Note that $\rho_{3,\eta}$ has a non-negative Wigner function for $\eta \geq 0.5$.
    }
		\label{fig:lossy}
	\end{center}
\end{figure}

Hereafter we discuss the heuristics for picking a good witness, with the theoretical example of the lossy $3$-photon Fock state:
\begin{align}
\begin{split}
\rho_{3,\eta} &:= (1-\eta)^3\ket3\!\bra3+3\eta(1-\eta)^2\ket2\!\bra2\\
&\;\;+3\eta^2(1-\eta)\ket1\!\bra1+\eta^3\ket0\!\bra0,
\end{split}
\end{align}
where $0\le\eta\le1$ is the loss parameter. Setting $\eta=0$ gives $\rho_{3,\eta}=\ket3\!\bra3$ while setting $\eta=1$ gives $\rho_{3,\eta}=\ket0\!\bra0$. This state has a non-negative Wigner function for $\eta\geq\frac12$. 
The fidelities of $\rho_{3,\eta}$ with displaced Fock states $\hat D(\alpha)\ket l$ are given by:
\begin{equation}\label{eq:fiderho3etal}
    \begin{aligned}
        F(\rho_{3,\eta},\hat D(\alpha)&\ket l)\\
        &=(1-\eta)^3|\braket{3|\hat D(\alpha)|l}|^2\\
        &+3\eta(1-\eta)^2|\braket{2|\hat D(\alpha)|l}|^2\\
        &+3\eta^2(1-\eta)|\braket{1|\hat D(\alpha)|l}|^2\\
        &+\eta^3|\braket{0|\hat D(\alpha)|l}|^2.    
    \end{aligned}
\end{equation}
where the coefficients of the displacement operator in Fock basis are given in Eq.~\eqref{eq:coefD}.
In an experimental scenario, the state would be unknown and these fidelities should be estimated using samples from a homodyne or heterodyne detection of the state translated by $\alpha$, and estimating the fidelities with Fock states~\cite{chabaud2020efficient,chabaud2020certification}. 

Following the heuristic detailed in the previous section, we have determined good Wigner negativity witnesses for $50$ values of the loss parameter $\eta$ between $0$ and $0.5$ as follows: for each value $\eta$, we have computed numerically the values of the fidelities in Eq.~\eqref{eq:fiderho3etal} for $l=1,2,3$, and for displacement parameters $\alpha=q/10+ip/10$ for all $q,p\in\llbracket0,10\rrbracket$. Using these values, we have computed the expectation value of the witnesses $\hat\Omega_{\bm a,\alpha}$ for multiple choices of $\bm a=(a_1,a_2,a_3)$ with $\max_ia_i=1$. We have used the corresponding precomputed bounds $\omega_{\bm a}^{\ge}$ on the threshold values of $\hat\Omega_{\bm a,\alpha}$ in Appendix~\ref{sec:weightedwitness_n3} to determine the witness leading to the maximal lower bound $\delta_{\bm a,\alpha}^{\le} :=\Tr(\hat\Omega_{\bm a,\alpha}\,\rho_{3,\eta})-\omega_{\bm a}^{\ge}$ on the violation $\delta_{\bm a,\alpha}=\Tr(\hat\Omega_{\bm a,\alpha}\,\rho_{3,\eta})-\omega_{\bm a}$ over the choice of $(\bm a,\alpha)$.

We have represented these violations for each value of the loss parameter $\eta$ in Fig.~\ref{fig:lossy}. For all values of $\eta$, we find that the optimal displacement parameter is $\alpha=0$. On the other hand, we find different optimal choices of $\bm a$ for different values of $\eta$. To illustrate the usefulness of the optimisation over the choice of witnesses parametrised by $(\bm a,\alpha)$, we have also represented the violations obtained when using the witnesses $\hat\Omega_{(1,0,1),\alpha}=\hat D(\alpha)(\ket1\!\bra1+\ket3\!\bra3)\hat D^\dag(\alpha)$ from~\cite{chabaud2020certification} for all values of $\eta$. In that setting, the violations obtained quantify how hard it is to detect the Wigner negativity of the state: a larger violation implies that a less precise estimate of the witness expectation value is needed to witness Wigner negativity.
In particular, we obtain that our optimised witnesses always provide a greater violation to detect negativity than the previous witnesses which will result in an easier experimental detection. 
We also represented the violation obtained when using the more naive witnesses $\hat\Omega_{(0,0,1),\alpha}=\hat D(\alpha)\ket3\!\bra3\hat D^\dag(\alpha)$ and we see that it is only useful when the loss parameter is smaller than $0.25$, while the optimised witnesses may detect negativity of the state $\rho_{3,\eta}$ up to $\eta=0.5$---when the Wigner functions becomes non-negative---provided the estimates of the fidelities are precise enough.

Overall, this procedure only amounts to a simple classical post-processing of samples from homodyne or heterodyne detection and yields a good witness for detecting Wigner negativity.

\subsection{Numerical implementation}
\label{sec:numerical}

Here we discuss numerical implementations of the semidefinite programs \refprog{lowerSDP} and \refprog{upperSDP}. All codes are available \href{https://archive.softwareheritage.org/swh:1:dir:d98f70e386783ef69bf8c2ecafdb7b328b19b7ec}{here} \cite{codes}.

We implemented the semidefinite programs with Python through the interface provided by PICOS~\cite{sagnol2012picos}. We first used the solver Mosek \cite{mosek} to solve these problems but, while the size of the semidefinite programs remains relatively low for small values of $n$ and $m$, binomial terms grow rapidly and numerical precision issues arise quickly (usually for $m=12$, $n \le m$). 
The linear constraints involving $Q_{ij}$ in the semidefinite programs come from a polynomial equality (see Lemma~\ref{lem:pospolyR}). While polynomial equalities are usually written in the canonical basis, a first trick is to express them in a different basis---for instance the basis $(1,\frac X{1!},\frac {X^2}{2!}, \dots)$---to counterbalance the binomial terms. 

However, this may not be sufficient to probe larger values of $m$. Instead, we used the solver SDPA-GMP~\cite{nakata2010numerical,fujisawa2002sdpa} which allows arbitrary precision arithmetic. While much slower, this solver is dedicated to solve problems requiring a lot of precision. Because our problems remain rather small, time efficiency is not an issue and this solver is particularly well-suited. All problems were initially solved on a regular laptop as warning flags on optimality were raised before the problems were too large. A high-performance computer\footnote{DELL PowerEdge R440, 384 Gb RAM, Intel Xeon Silver 4216 processor, 64 threads from LIP6. }---handling floating point arithmetic more accurately---was later used to compute further ranks in the hierarchy. 

Using the semidefinite programs \refprog{upperSDP} and \refprog{lowerSDP} for values of $m$ up to around $30$ and $\bm a=(0,0,\dots,0,1)$ (where the size $n$ of the vector $\bm a$ is ranging from $1$ to $10$), we have obtained upper and lower bounds for the threshold values of Wigner negativity witnesses corresponding to fidelities with Fock states from $1$ to $10$, reported in Table~\ref{tab:Fockbounds}.

We also computed upper and lower bounds on the threshold values of witnesses of the form:
\begin{equation}
    \hat \Omega_{(a_1,a_2,a_3)} = \sum_{k=1}^3a_k\ket k\!\bra k,
\end{equation}
where $\forall i \in \{1,2,3\},\, 0 \leq a_i \leq 1$ and $\max_i a_i =1$. We focused on these particular witnesses for experimental considerations as it is challenging to obtain fidelities with higher Fock states. We fix one coefficient equal to $1$ and vary each other $a_i$ from $0$ to $1$ with a step of $0.1$. The resulting bounds on the threshold values can be found in appendix \ref{sec:weightedwitness_n3}. 

We now turn to the mathematical proofs of our results, i.e., that the threshold values in \refeq{threshold} can be upper bounded and lower bounded by the optimal values of the converging hierarchies of semidefinite programs \refprog{upperSDP}$_{m\ge n}$ and \refprog{lowerSDP}$_{m\ge n}$, respectively.

The following section is rather technical as we dive into infinite-dimensional optimisation techniques to prove the convergence of the two hierarchies of semidefinite programs. Some readers may want to skip directly to section \ref{sec:multi}.

\section{Infinite-dimensional optimisation}
\label{sec:opti}

In this section we use infinite-dimensional optimisation techniques:
\begin{enumerate*}[label=(\roman*)]
    \item to phrase the computation of the witness threshold value introduced in \refeq{threshold} as an infinite-dimensional linear program in section~\ref{sec:LP},
    \item to derive two hierarchies of finite-dimensional semidefinite programs that upper bound and lower bound the threshold value in section~\ref{sec:SDP}, and
    \item to show, in section~\ref{sec:CVproof}, that the sequence of upper bounds converges to the threshold value computed over square-integrable functions and the sequence of lower bounds converges to the threshold value computed over Schwartz functions (see~\refig{structure}). Given the technicalities of the proofs above, we sketch them in section~\ref{sec:sketch} before detailing them in the following sections.
\end{enumerate*}

As a convention, except if specifically mentioned, we will use the terminology `relaxation' and `restriction' from the point of view of the primal program. We will refer to the hierarchy of semidefinite programs providing the upper bounds as a \textit{hierarchy of relaxations} because the obtained semidefinite programs are indeed relaxations of the primal program (while they are restrictions of the dual program). Likewise, we will refer to the hierarchy providing the lower bounds as a \textit{hierarchy of restrictions}.
\begin{figure}[t]
	\begin{center}
		\centering

\begin{tikzpicture}[scale=0.56]

\node[inner sep=0pt] (A) at (-0.8,-1) {};
\node[inner sep=0pt] (B) at (-0.8,8) {};
\node[inner sep=0pt] (C1) at (-0.8,0) {};
\node[inner sep=0pt] (C2') at (-0.8,2.75) {};
\node[inner sep=0pt] (C2) at (-0.8,4.25) {};
\node[inner sep=0pt] (C3) at (-0.8,7) {};

\draw[|-|] (-0.8,-1) -- (-0.8,-1);
\draw[-|] (-0.8,-1) -- (-0.8,0);
\draw[-|] (-0.8,0) -- (-0.8,2.75);
\draw[dashed] (-0.8,2.85) -- (-0.8,4.2);
\draw[|-|] (-0.8,4.25) -- (-0.8,7);
\draw[-|] (-0.8,7) -- (-0.8,8);

\node[left] (zero) at (A) {0};
\node[left] (one) at (B) {1};
\node[left] (omegainf) at (C1) {$\omega_n^{m,\le}$};
\node[left] (omega) at (C2) {$\omega_n^{L^2}$};
\node[left] (omega) at (C2') {$\omega_n^\mathcal{S}$};
\node[left] (omegasup) at (C3) {$\omega_n^{m,\ge}$};

\hypersetup{linkcolor=black}
\node[inner sep=0pt] (SDPsup) at (2,7) {\Large \ref{prog:upperSDPn}};
\node[inner sep=0pt] (D-SDPsup) at (9,7) {\Large \ref{prog:upperDSDPn}};

\node[inner sep=0pt] (LP) at (2,4.25) {\Large \ref{prog:LP}};
\node[inner sep=0pt] (D-LP) at (9,4.25) {\Large \ref{prog:DLP}};

\node[inner sep=0pt] (LPS) at (2,2.75) {\Large \ref{prog:LPS}};
\node[inner sep=0pt] (D-LPS) at (9,2.75) {\Large \ref{prog:DLPS}};

\node[inner sep=0pt] (SDPinf) at (2,0) {\Large \ref{prog:lowerSDPn}};
\node[inner sep=0pt] (D-SDPinf) at (9,0) {\Large \ref{prog:lowerDSDPn}};
\hypersetup{linkcolor=cyan}

\node[inner sep=0pt] (eqsup) at ($.54*(SDPsup)+.46*(D-SDPsup)$) 
{$\substack{\text{Theorem}~\ref{th:sdupper} \\ =\joinrel=}$};

\node[inner sep=0pt] (eq) at ($.54*(LP)+.46*(D-LP)$) 
{$\substack{\text{Theorem}~\ref{th:sdLP} \\ =\joinrel=}$};

\node[inner sep=0pt] (eq) at ($.54*(LPS)+.46*(D-LPS)$) 
{$\substack{\text{Theorem}~\ref{th:sdLP} \\ =\joinrel=}$};

\node[inner sep=0pt] (eqinf) at ($.54*(SDPinf)+.46*(D-SDPinf)$) 
{$\substack{\text{Theorem}~\ref{th:sdlower} \\ =\joinrel=}$};

\node[inner sep=0pt] (CVlu) at ($.5*(SDPsup)+.5*(LP)$)
{$\substack{\text{Theorem~\ref{th:upperCV}} \\\hspace{20pt} \big\downarrow m\rightarrow \infty}$};

\node[inner sep=0pt] (CVld) at ($.5*(SDPinf)+.5*(LPS)$) 
{$\substack{\hspace{20pt} \big\uparrow m\rightarrow \infty \\ \text{Theorem~\ref{th:lowerCV}}}$};

\node[inner sep=0pt] (CVru) at ($.5*(D-SDPsup)+.5*(D-LP)$) 
{$\substack{\text{Theorem~\ref{th:upperCV}} \\\hspace{20pt} \big\downarrow m\rightarrow \infty}$};

\node[inner sep=0pt] (CVrd) at ($.5*(D-SDPinf)+.5*(D-LPS)$) 
{$\substack{\hspace{20pt} \big\uparrow m\rightarrow \infty \\ \text{Theorem~\ref{th:lowerCV}}}$};

\end{tikzpicture}
		\hypersetup{linkcolor=black}
		\caption{Hierarchy of semidefinite relaxations converging to the linear program \ref{prog:LP} and hierarchy of semidefinite restrictions converging to the linear program \ref{prog:LPS}, together with their dual programs. The upper index $m$ denotes the level of the relaxation or restriction. On the left are the associated optimal values. The equal sign denotes strong duality, i.e., equality of optimal values, and the arrows denote convergence of the corresponding sequences of optimal values. Note that the question of the closing the gap between the values of \ref{prog:LP} and \ref{prog:LPS} is left open.
		}
		\hypersetup{linkcolor=cyan}
		\label{fig:structure}
	\end{center}
\end{figure}
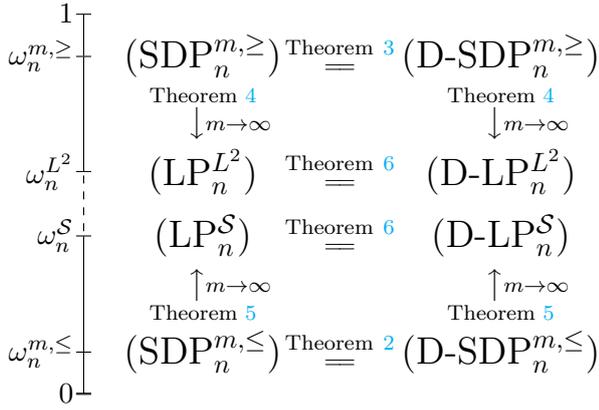

For clarity, we treat the case where the witnesses are given by the fidelity with a single Fock state, corresponding to the case where one entry of the vector $\bm a$ is equal to $1$ and all the other entries are $0$. The generalisation to linear combinations of fidelities with Fock states is straightforward by linearity. 

A linear program is an optimisation problem where variables are linearly constrained \cite{barvinok_02}. This can be expressed as:
\begin{equation}
\label{eq:gen_opti}
\Sup{\bm x\in K}\;f(\bm x),    
\end{equation}
for some linear real-valued function $f$ and some set $K \subset E$ where $E$ is the optimisation space which has the structure of a locally convex topological vector space and $K$ is specified by a set of linear constraints.

An element $\bm x \in E$ which belongs to $K$, i.e., which satisfies all the linear constraints, is called a \textit{feasible plan} or \textit{feasible solution}, and \textit{strictly feasible solution} when it strictly satisfies the constraints. The set of feasible plans is called the \textit{feasible set}. If the supremum in \refeq{gen_opti} is attained, a feasible plan that reaches it is called an \textit{optimal plan} or \textit{optimal solution}.

\subsection{Sketch of the proofs}
\label{sec:sketch}
This section aims at giving an overview of the rather technical proofs that follow. 
Our reasoning can be split into the following steps:

\subsubsection*{Obtaining an infinite linear program and its dual}

\begin{itemize}
    \item We exploit the phase-space rotational invariance of Fock states in Lemma~\ref{lemma:random_deph} to express the computation of threshold values \refeq{threshold} on states that are diagonal in the Fock basis.
    \item This provides a maximisation problem which can be rephrased as an infinite-dimensional linear program \refprog{LP} (resp.\ \refprog{LPS}) on the space of square-integrable functions $L^2(\R_+)$ (resp.\ Schwartz functions $\mathcal S(\R_+)$). We denote $\omega_n^{L^2}$ its optimal value (resp.\ $\omega_n^\mathcal S$).
    \item By duality, we recast this as an optimisation problem on measures which is given by the dual infinite-dimensional linear program \refprog{DLP} (resp.\ \refprog{DLPS}). 
\end{itemize}

\subsubsection*{Hierarchy of relaxations}
See upper part of figure~\ref{fig:structure}.
\begin{itemize}
    \item We relax the program \refprog{LP}: instead of asking for a positive function, we require that this function has a positive inner product with positive polynomials of fixed degree $m$. This degree fixes a level within a hierarchy of relaxations. These constraints can be cast as a positive semidefinite constraint and we thus obtain a hierarchy of semidefinite programs (see \refprog{upperSDPn}).
    \item For each level $m$, we derive the dual program \refprog{upperDSDPn} and show that strong duality holds in Theorem \ref{th:sdupper} by finding a strictly feasible solution of \refprog{upperSDPn}.
    \item We prove the convergence of the hierarchy of semidefinite relaxations towards \refprog{DLP} in Theorem~\ref{th:upperCV} by first showing that the feasible set of \refprog{upperSDPn} is compact. Then, we perform a diagonal extraction on a sequence of optimal solutions of \refprog{upperSDPn} and we finally show that this provides a feasible solution of \refprog{LP} proving that $\underset{m\rightarrow+\infty}{\lim}\omega^{m,\ge}_n=\omega_n^{L_2}$. 
\end{itemize}

\subsubsection*{Convergence of hierarchy of restrictions}
See bottom part of figure~\ref{fig:structure}.
\begin{itemize}
    \item We restrict the program \refprog{LP} (or equivalently the program \refprog{LPS}): instead of optimising over positive functions, we optimise over positive polynomials of fixed degree $m$. Again, this degree fixes a level within a hierarchy of restrictions. Using the fact that univariate positive polynomials are sum-of-squares which can be written as a semidefiniteness constraint, we obtain a hierarchy of semidefinite programs (see \refprog{lowerSDPn}).
    \item For each level $m$, we derive the dual program \refprog{lowerDSDPn} and show that strong duality holds in Theorem \ref{th:sdlower} by finding a strictly feasible solution of \refprog{lowerSDPn}.
    \item We prove the convergence of the hierarchy towards \refprog{DLPS} in Theorem~\ref{th:lowerCV} by first showing that the feasible set of \refprog{lowerDSDPn} is compact. This is highly nontrivial since it requires exhibiting an analytical feasible solution of \refprog{lowerSDPn} which is difficult in general and in the present case. Then, we perform a diagonal extraction on a sequence of optimal solutions of \refprog{lowerDSDPn}. A technicality arises as it does not necessarily provide a feasible solution of \refprog{DLP} which is why we introduce the linear program expressed over Schwartz functions.
    We then show that a diagonal extraction indeed provides a feasible solution of \refprog{DLPS} 
    proving that $\underset{m\rightarrow+\infty}{\lim}\omega^{m,\le}_n=\omega_n^{S}$. 
\end{itemize}

\subsection{Function spaces}
\label{sec:functionspaces}

We now review some function spaces which appear in the following sections, together with a few notations.

Hereafter, the half line of non-negative real numbers is denoted $\R_+$. The space of real square-integrable functions over $\R_+$ is denoted $L^2(\R_+)$ and is equipped with the usual scalar product:
\begin{equation}\label{eq:braket}
\braket{f,g}=\int_{\R_+}{f(x)g(x)dx},
\end{equation}
for $f,g\in L^2(\R_+)$.
This space is isomorphic to the space of square-summable real sequences indexed by $\N$, $l^2(\N)$, by considering the expansion in a countable basis. Such a basis is given, e.g., by the Laguerre functions~\cite{szego1959orthogonal},
modified here by a $(-1)^k$ prefactor to correspond to Fock state Wigner functions:
\begin{equation}\label{eq:Lagf}
    \mathcal L_k(x):=(-1)^kL_k(x)e^{-\frac x2},
\end{equation}
for all $k\in\mathbb N$ and all $x\in\mathbb R_+$, where $L_k(x)=\sum_{l=0}^k\frac{(-1)^l}{l!}\binom klx^l$ is the $k^{\text{th}}$ Laguerre polynomial. 
These functions form an orthonormal basis: for all $p,q\in\mathbb N$, $\braket{\mathcal L_p,\mathcal L_q}=\delta_{pq}$, where $\delta_{pq}$ is the Kronecker symbol.

The space $L^2(\R_+)$ is also isomorphic to its dual space ${L^2}'(\R_+)$ via
the 
Radon--Nikodym theorem \cite{Nikodym1930}:
elements of ${L^2}'(\R_+)$ can be identified by the Lebesgue measure on $\R_+$ times the corresponding function in $L^2(\R_+)$.

We write $\mathcal S(\R_+)$ the space of Schwartz functions over $\R_+$, i.e., the space of $C^{\infty}$ functions that go to $0$ at infinity faster than any inverse polynomial, as do their derivatives. $\mathcal S'(\R_+)$ is its dual space, the space of tempered distributions over $\R_+$. 
Note that $\mathcal S(\R_+)$ is dense in ${L^2}(\R_+)$.
We denote the space of rapidly decreasing real sequences by $\mathcal S(\mathbb N)$ (sequences that go to $0$ at infinity faster than any inverse polynomial), together with its dual space of slowly increasing real sequences $\mathcal S'(\mathbb N)$ (sequences that are upper bounded by a polynomial). The spaces $\mathcal S(\R_+)$ and $\mathcal S(\mathbb N)$ are isomorphic: any Schwartz function over $\R_+$ can be expanded uniquely in the basis of Laguerre functions with a rapidly decreasing sequence of coefficients. Similarly, the spaces $\mathcal S'(\R_+)$ and $\mathcal S'(\mathbb N)$ are also isomorphic: any tempered distribution over $\R_+$ can be written uniquely as a formal series of Laguerre functions with a slowly increasing sequence of coefficients~\cite{guillemot1971developpements}.
We extend the definition of the duality $\braket{\dummy,\dummy}$ in~\refeq{braket} to these spaces.

In order to denote non-negative elements of these spaces, we will use the notations $L^2_+(\R_+)$, ${L_+^2}'(\R_+)$, $\mathcal S_+(\R_+)$ and $\mathcal S_+'(\R_+)$. A distribution $\mu$ in ${L_+^2}'(\R_+)$ (resp.\ in $\mathcal S_+'(\R_+)$) satisfies: $\forall f\in L^2_+(\R_+)$ (resp.\ $\forall f\in \mathcal S_+(\R_+)$), $\braket{\mu,f}\ge0$.

For all $m\in\mathbb N$, we define the following space of truncated series of Laguerre functions over $\R_+$:
\begin{equation}
     \mathcal R_m(\R_+):=\text{span}_{\R}\{\mathcal L_k\}_{0\le k\le m},
\end{equation}
which is equal to the set of real polynomials over $\R_+$ of degree less or equal to $m$ multiplied by the function $x\mapsto e^{-\frac x2}$. We denote by $\mathcal R_{m,+}(\R_+)$ its subset of non-negative elements.

For all $\bm s\in\R^{\N}$, we define the associated formal series of Laguerre functions:
\begin{equation}\label{eq:isomorphism}
    f_{\bm s}:=\sum_{k\ge0}s_k\mathcal L_k,
\end{equation}
with the (formal) relation:
\begin{equation}\label{eq:coefLag}
    s_k=\braket{f_{\bm s},\mathcal L_k},
\end{equation}
for all $k\in\mathbb N$. We refer to $\bm s$ as the sequence of Laguerre moments of $f_{\bm s}$. We extend this definition to finite sequences by completing these sequences with zeros.
For $m\in\mathbb N$, we also define the matrix $A_{\bm s}$ (thus omitting the dependence in $m$) by
\begin{equation}\label{eq:momentmatrix}
    (A_{\bm s})_{0\le i,j\le m} := \begin{cases} 
      \sum\limits_{k=0}^l s_k \binom lkl! &\text{if } i+j=2l, \\
      0 &\text{otherwise.}
   \end{cases}
\end{equation}
$A_{\bm s}$ can be seen as the Laguerre moment matrix of the measure $f_{\bm s}$.
In what follows, we use standard techniques relating to the Stieltjes moment problem~\cite{reed1975ii}, which seeks conditions for a real sequence $\bm\nu=(\nu_k)_{k\in\N}\in\R^{\N}$ to be the sequence of moments $\int_{\R_+}x^kd\nu(x)$ of a non-negative distribution $\nu$ over $\R_+$. We adapt these techniques to the basis of Laguerre functions, rather than the canonical basis. In particular, we make use of the following result:

\begin{theorem}\label{th:RHLaguerre}
Let $\bm\mu=(\mu_k)_{k\in\N}\in\R^{\N}$. The sequence $\bm\mu$ is the sequence of Laguerre moments $\int_{\R_+}\mathcal L_k(x)d\mu(x)$ of a non-negative distribution $\mu$ supported on $\R_+$ if and only if
\begin{equation}
    \forall m\in\N,\forall g\in\mathcal R_{m,+}(\R_+),\;\braket{f_{\bm\mu},g}\ge0.
\end{equation}
\end{theorem}

\noindent We give a proof in Appendix~\ref{sec:app_RHtheo} for this result, which is based on the classic Riesz--Haviland theorem~\cite{riesz1923probleme,haviland1936momentum}.

\subsection{Linear program}
\label{sec:LP}

In this section, we phrase the computation of the witness threshold value introduced in \refeq{threshold} as an infinite-dimensional linear program, in the case where one entry of the vector $\bm a$ is equal to $1$ and all the other entries are $0$, the generalisation being straightforward by linearity.

Formally, we fix hereafter $n\in\mathbb N^*$ and we look for the witnesses threshold value $\omega_n$
\footnote{Here we write generically the threshold value as $\omega_n$ while we use the more precise notation $\omega_n^{L^2}$ (resp.\ $\omega_n^{\mathcal S}$) to refer to its computation in the space of square-integrable functions (resp.\ Schwartz functions).}
defined as
\begin{equation}\label{eq:formal_problem}
  \omega_n:=\sup_{\substack{\rho\in\mathcal D(\mathcal H)\\W_\rho\ge0}}\braket{n|\rho|n}.
\end{equation}
This is the maximal values such that for all states $\rho\in\mathcal D(\mathcal H)$:
\begin{equation}
  \braket{n|\rho|n}>\omega_n\quad\Rightarrow\quad\exists\alpha\in\mathbb C,\;W_\rho(\alpha)<0.
\end{equation}
Let $\mathcal C(\mathcal H)$ be the set of states that are invariant under phase-space rotations:
\begin{equation}
    \begin{aligned}
    \mathcal C(\mathcal H)\!:=\!\{\sigma\!&\in\!\mathcal D(\mathcal H) \text{ such that:} \\
    & \forall\varphi\in[0,2\pi], \eu^{\im\varphi \hat{n}}\sigma\eu^{-\im\varphi \hat{n}}\!=\!\sigma\},
    \end{aligned}
\end{equation}
where $\hat n=\hat a^\dag\hat a$ is the number operator.
The witnesses corresponding to the fidelity with a single Fock state feature a rotational symmetry in phase space, which we exploit in the following lemma. 

\begin{lemma}\label{lemma:random_deph}
The threshold value in~\refeq{formal_problem} can be expressed as
\begin{equation}
  \omega_n=\sup_{\substack{\sigma\in\mathcal C(\mathcal H)\\W_\sigma\ge0}}\braket{n|\sigma|n}.
\end{equation}
\end{lemma}

\begin{proof}
Let $\rho\in\mathcal D(\mathcal H)$. We start by applying a random dephasing to the state $\rho$:
\begin{equation}
    \sigma = \int_0^{2\pi}\frac{\dd\varphi}{2\pi}\eu^{\im\varphi \hat{n}}\rho\eu^{-\im\varphi \hat{n}}\in \mathcal C(\mathcal H).
\end{equation}
The random dephasing does not change the fidelity with any Fock state because of the rotational symmetry in phase space of the latter, i.e.,
\begin{equation}
    \forall n \in \N, \, \braket{n|\sigma|n} = \braket{n|\rho|n}.
\end{equation} 
Moreover, it can only decrease the maximum negativity of the Wigner function: for all $\alpha\in\mathbb C$,
\begin{equation}\label{eq:inequality_dephasing}
    \begin{aligned}
        W_{\sigma}(\alpha)
        &=\int_0^{2\pi} \frac{\dd\varphi}{2\pi}W_{e^{\im \varphi \hat{n}} \rho  e^{- \im \varphi \hat{n}}}(\alpha) \\
        &=\int_0^{2\pi} \frac{\dd\varphi}{2\pi}W_{\rho}(\alpha \eu^{\im\varphi}) \\
        &\ge \min_{\varphi\in[0,2\pi]}W_{\rho}(\alpha \eu^{\im\varphi}) \\
        &\ge \min_{\beta\in\mathbb C}W_{\rho}(\beta), 
    \end{aligned} 
\end{equation}
and taking the minimum over all $\alpha\in\mathbb C$ then gives
\begin{equation}
    \min_{\alpha\in\mathbb C}W_{\sigma}(\alpha)\ge\min_{\beta\in\mathbb C}W_{\rho}(\beta).
\end{equation}
In particular, applying a random dephasing to a Wigner positive state yields a Wigner positive mixtures of Fock states, which is invariant under phase-space rotations. Hence, we can restrict without loss of generality to states that are invariant under phase-space rotations when looking for the maximum fidelity of Wigner positive states with a given Fock state $\ket n$.
\end{proof}

\noindent Lemma~\ref{lemma:random_deph} ensures that the supremum in \refeq{formal_problem} can be computed over states that have a rotational symmetry in phase space. Such states $\sigma$ can be expanded diagonally in the Fock basis:
\begin{equation}
    \sigma = \sum_{k=0}^{\infty} F_k \ket{k}\!\bra{k},
\end{equation}
with $\sum_kF_k=1$ and $0 \leq F_k \leq 1$ for all $k\in\N$.
By linearity of the Wigner function: 
\begin{equation}
    \forall \alpha \in \C,\; W_{\sigma}(\alpha) = \sum_k F_k W_k(\alpha),
\end{equation}
where $W_k$ is the Wigner function of the $k^{th}$ Fock state~\cite{kenfack2004negativity}: 
\begin{equation}
    \forall \alpha \in \C, \, W_k(\alpha) = \frac{2}{\pi} \mathcal L_k(4|\alpha|^2),
\end{equation}
with $\mathcal L_k$ the $k^{\text{th}}$ Laguerre function, defined in~\refeq{Lagf}. As noted before, Fock states are invariant under phase-space rotations: their Wigner function only depends on the amplitude of the phase-space point considered. We fix $x= 4 \vert \alpha \vert^2 \in \R^+$ hereafter.

As we consider the optimisation over $L^2(\R_+)$ functions, we will denote the corresponding threshold value by $\omega_n^{L^2}$. With Lemma \ref{lemma:random_deph}, the computation of $\omega_n^{L^2}$ can thus be expressed as the following infinite-dimensional linear program:
\begin{fleqn}
\begin{equation*}
    \label{prog:LP}
        \tag*{(LP$_n^{L^2}$)}
        \begin{aligned}
            & \quad \text{Find }  (F_k)_{k \in \N} \in \ell^2(\N) \\
            & \quad \text{maximising } F_n \\
            & \quad \text{subject to} \\
            &\begin{dcases}
            \sum_k F_k = 1  \\
            \forall k \in \N, \;F_k \geq 0 \phsp{-.8}{.8} \\
            \forall x \in \R_+, \;\sum_k F_k\mathcal L_k(x) \geq 0.
            \end{dcases}
        \end{aligned}
\end{equation*}
\end{fleqn}
The first constraint ensures unit trace of the corresponding state $\sigma$, the second one ensures that its fidelity with each Fock state is non-negative, and the last one ensures that its Wigner function $W_{\sigma}$ is non-negative. Note that $\omega_n>0$ for all $n\in\N^*$, by considering a mixture of $\ket0$ and $\ket n$ with the vacuum component close enough to $1$. Note that, for consistency of notations, we use $L$ instead of $\ell$ in the label of the program.

We refer the interested reader to Appendix \ref{sec:app_std_form}
where this program is expressed in the canonical form of infinite-dimensional linear programs \cite{barvinok_02} and its dual is derived. This dual linear program reads:
\begin{fleqn}
\begin{equation*}
    \label{prog:DLP}
    \tag*{(D-LP$_n^{L^2}$)}
        \begin{aligned}
            & \quad \text{Find } y \in \R \text{ and } \mu \in {L^2}'(\R_+)\\
            & \quad \text{minimising } y\\
            & \quad \text{subject to}  \\
            & \begin{dcases}
            \forall k \neq n \in \N,\; y \geq \int_{\R_+}{\mathcal L_k}{d\mu}  \\
            y \geq 1 + \int_{\R_+}{\mathcal L_n}{d\mu} \phsp{-.8}{.8}\\
            \forall f \in L^2_+(\R_+), \; \langle \mu,f \rangle \geq  0.
            \end{dcases}
        \end{aligned}
\end{equation*}
\end{fleqn}
We also consider a restriction of \refprog{LP} by optimising over Schwartz functions rather than square-integrable functions as $\mathcal S(\R_+) \subset L^2(\R_+)$. The primal program becomes: 
\begin{fleqn}
\begin{equation*}
    \label{prog:LPS}
        \tag*{(LP$_n^{\mathcal S}$)}
        \begin{aligned}
            & \quad \text{Find } (F_k)_{k \in \N} \in \mathcal S(\N) \\
            & \quad \text{maximising } F_n \\
            & \quad \text{subject to} \\
            &\begin{dcases}
            \sum_k F_k = 1  \\
            \forall k \in \N, \;F_k \geq 0 \phsp{-.8}{.8} \\
            \forall x \in \R_+, \;\sum_k F_k\mathcal L_k(x) \geq 0,
            \end{dcases}
        \end{aligned}
\end{equation*}
\end{fleqn}
We denote its value by $\omega_n^\mathcal S$.
Because it is a restriction of \refprog{LP}, we have that $\omega_n^{L^2} \geq \omega_n^\mathcal S$.
Its dual linear program can be expressed as:
\begin{fleqn}
\begin{equation*}
    \label{prog:DLPS}
    \tag*{(D-LP$_n^{\mathcal S}$)}
        \begin{aligned}
            & \quad \text{Find } y \in \R \text{ and } \mu \in\mathcal S'(\R_+) \\
            & \quad \text{minimising } y \\
            & \quad \text{subject to}  \\
            & \begin{dcases}
            \forall k \neq n \in \N,\; y \geq \int_{\R_+}{\mathcal L_k}{d\mu}  \\
            y \geq 1 + \int_{\R_+}{\mathcal L_n}{d\mu} \phsp{-.8}{.8}\\
            \forall f \in \mathcal S_+(\R_+), \; \langle \mu,f \rangle \geq  0.
            \end{dcases}
        \end{aligned}
\end{equation*}
\end{fleqn}

We will show later that strong duality holds for the two pairs of programs.
However weak duality of linear programming already ensures that the optimal value $\omega_n^{L^2}$ of \refprog{LP} (resp.\ $\omega_n^{\mathcal S}$ of \refprog{LPS}) is upper bounded by the optimal value of \refprog{DLP} (resp.\ \refprog{DLPS}). Hence, a possible way of solving the optimisation \refprog{LP} is to exhibit a feasible solution for \refprog{LP} and a feasible solution for \refprog{DLP} that have the same value. 

For $n=1$, choosing $(F_k)_{k\in\mathbb N}=(\frac12,\frac12,0,0,\dots)$ gives a feasible solution for~(LP$_1^{L^2}$) 
(resp.\ (LP$_1^{\mathcal S}$))
with the value $\frac12$, while choosing $(y,\mu)=(\frac12,\frac12\delta(x))$, where $\delta$ is the Dirac delta function\footnote{Technically $\delta\notin{L^2}'(\R_+)$, but the result holds by considering a sequence of functions converging to a Dirac delta.} 
over $\R_+$, 
gives a feasible solution for~(D-LP$_1^{L^2}$) 
(resp.\ (D-LP$_1^{\mathcal S}$))
with the value $\frac12$. 
This shows that $\omega_1^{L^S} = \omega_1^{\mathcal S} =\frac12$.

Similarly, for $n=2$, choosing $(F_k)_{k\in\mathbb N}=(\frac12,0,\frac12,0,0,\dots)$ gives a feasible solution for~(LP$_2^{L^2}$) 
(resp. LP$_2^{\mathcal S}$)
with the value $\frac12$, while choosing $(y,\mu)=(\frac12,\frac e2\delta(x-2))$ gives a feasible solution for~(D-LP$_2$)
(resp. (D-LP$_2^{\mathcal S}$))
, up to a conjecture\footnote{We checked numerically the corresponding constraints $|L_k(2)|\le1$ for $k$ up to $10^3$ and, considering asymptotic behaviors, we conjecture that these hold for all $k\ge0$.}, with the value $\frac12$. This shows that $\omega_2^{L^2} = \omega_2^{S} = \frac12$.

While this approach is sensible for small values of $n$, finding optimal analytical solutions for higher values of $n$ seems highly nontrivial. Moreover, the infinite number of variables prevents us from performing the optimisation \refprog{LP} numerically. A natural workaround is to find finite-dimensional relaxations or restrictions of the original problem---thus providing upper and lower bounds for the optimal value $\omega_n^{L^2}$, respectively. This is the approach we follow in the next section.

\subsection{Hierarchies of semidefinite programs}
\label{sec:SDP}

Semidefinite programming is a convex optimisation technique in the cone of positive semidefinite matrices. It comes with a duality theory: if a primal semidefinite program is a maximisation problem, then one may deduce a dual minimisation problem which is again a semidefinite program. Like linear programs, these programs satisfy a weak duality condition: the optimal value of the primal problem is upper bounded by the optimal value of the dual program. The difference between the optimal values is called the duality gap. When there is no duality gap, we say that there is strong duality between the programs.

\subsubsection{Preliminaries}

In this section, we introduce preliminary technical lemmas and we refer the reader to Appendix~\ref{sec:app_prooflemmas} for their proofs.

We recall the following standard result, which comes from the fact that any univariate polynomial non-negative over $\R$ can be written as a sum of squares:

\begin{lemma}[\cite{hilbert1888darstellung}]\label{lem:pospolyR}
Let $p\in\mathbb N$ and let $P$ be a univariate polynomial of degree $2p$. Let $X=(1,x,\dots,x^p)$ be the vector of monomials. Then, $P$ is non-negative over $\R$ if and only if there exists a real $(p+1)\times(p+1)$ positive semidefinite matrix $Q$ such that for all $x\in\mathbb R$,
\begin{equation}
    P(x)=X^TQX.
\end{equation}
\end{lemma}

\noindent From this lemma we deduce the following characterisation of non-negative polynomials over $\R_+$:

\begin{lemma}\label{lem:pospolyR+}
Non-negative polynomials on $\R_+$ can be written as sums of polynomials of the form $\sum_{l=0}^px^l\sum_{i+j=2l}y_iy_j$, where $p\in\mathbb N$ and $y_i\in\mathbb R$, for all $0\le i\le p$.
\end{lemma}

\noindent Note that the characterisation in Lemma~\ref{lem:pospolyR+} differs from that of Stieltjes~\cite{reed1975ii}, which expresses non-negative polynomials over $\R_+$ as $x\mapsto A_1(x)+xA_2(x)$, where $A_1$ and $A_2$ are sums of squares. 
This slightly slows down the numerical resolution, which does not matter given the size of the programs considered. At the same time, this allows us to obtain more compact expressions for the semidefinite programs. 

We use the characterisation of Lemma~\ref{lem:pospolyR+} to obtain the following result: for $\bm s\in\R^{\N}$, the fact that the series $f_{\bm s}$ defined in~\refeq{isomorphism} has non-negative scalar product with non-negative truncated Laguerre series up to degree $m$ can be expressed as a positive semidefinite constraint involving the matrix $A_{\bm s}$ defined in~\refeq{momentmatrix}. Formally:

\begin{lemma}\label{lem:momentmatrix}
Let $m\ge n$ and let $\bm s\in\R^{\N}$. The following propositions are equivalent:
\begin{enumerate}[label=(\roman*)]
\item $\forall g\in\mathcal R_{m,+}(\R_+),\;\braket{f_{\bm s},g}\ge0$,
\item $A_{\bm s}\succeq0$.
\end{enumerate}
\end{lemma}

\noindent Using these results, we derive hierarchies of semidefinite relaxations and restrictions for the infinite-dimensional linear program \refprog{LP} in the following sections.

\subsubsection{Semidefinite relaxations}
\label{sec:SDPupper}

One way to obtain a relaxation of \refprog{LP} is to relax the constraint:
\begin{equation}\label{eq:conditionpos}
    \forall x\in\mathbb R_+,\;f_{\bm F}(x)=\sum_kF_k\mathcal L_k(x)\ge0.    
\end{equation}
Instead, one may impose the weaker constraint:
\begin{equation}\label{eq:conditionposscal}
    \forall g\in\mathcal R_{m,+}(\R_+),\;\braket{f_{\bm F},g}\ge0,
\end{equation}
for some fixed $m\ge n$. By Lemma~\ref{lem:momentmatrix}, this constraint may in turn be expressed as a positive semidefinite constraint on the $(m+1)\times(m+1)$ matrix $A_{\bm F}$ defined in~\refeq{momentmatrix}. 

Each choice of $m$ thus leads to a different semidefinite program, whose optimal value gets closer to $\omega_n$ as $m$ increases (since the constraint~\eqref{eq:conditionposscal} gets stronger when $m$ increases). Moreover, when replacing the constraint~\eqref{eq:conditionpos} by the constraint~\eqref{eq:conditionposscal} in \refprog{LP}, for $l>m$ the variables $F_l$  do not appear in the matrix $A_{\bm F}$, and are constrained only by $\sum_kF_k=1$ and $F_l\ge0$. Since $m\ge n$ and the optimal value is $F_n$, we may thus set $F_l=0$ for $l>m$ without loss of generality.
This gives a hierarchy of finite-dimensional semidefinite relaxations for \refprog{LP}, and the semidefinite relaxation of order $m$ is given by:
\begin{fleqn}
\begin{equation*}
   \label{prog:upperSDPn}
    \tag*{$(\text{SDP}^{m,\geq}_n)$}
        \begin{aligned}
            & \quad \text{Find } A\in\text{Sym}_{m+1} \text{ and } \bm F\in\R^{m+1} \\
            & \quad \text{maximising } F_n \\
            & \quad \text{subject to}\\
            & \begin{dcases}
                        \textstyle \sum_{k=0}^{m} F_k = 1  \\
                        \textstyle \forall k\le m, \; F_k \geq 0 \\
                        \textstyle \forall l\le m,\forall i+j=2l,\; A_{ij}=\sum\limits_{k=0}^l F_k \binom lkl!\hspace{-3cm}\\
                        \textstyle \forall l\in\llbracket1,m\rrbracket,\forall i+j=2l-1,\;A_{ij}=0\hspace{-3cm}\\
                        \textstyle A\succeq 0.
              \end{dcases}
        \end{aligned}
\end{equation*}
\end{fleqn}
Let us denote its optimal value by $\omega^{m,\geq}_n$. The sequence $\{\omega^{m,\geq}_n\}_{m\ge n}$ is a decreasing sequence and for all $m\ge n$, we have $\omega_n^{L^2}\le\omega^{m,\geq}_n$.

For each $m\ge n$, the program \refprog{upperSDPn} has a dual semidefinite program which is given by (see Appendix~\ref{sec:app_dualSDP} for a detailed derivation):
\begin{fleqn}
\begin{equation*}
    \label{prog:upperDSDPn}
    \tag*{$(\text{D-SDP}^{m,\geq}_n)$}
        \begin{aligned}
            & \quad \text{Find } Q \in \text{Sym}_{m+1}, \bm{\mu} \in \R^{m+1} \text{ and } y \in \R \hspace{-3cm}\\
            & \quad \text{minimising } y \\
            & \quad \text{subject to} \\
            & \begin{dcases} 
            \textstyle y\ge1+\mu_n  \\
            \textstyle \forall k\le m, \; y\ge \mu_k \\
            \textstyle \forall l\le m, \sum\limits_{i+j=2l} Q_{ij} = \textstyle \sum\limits_{k=l}^{m} \frac{(-1)^{k+l}}{l!}\binom kl \mu_k\hspace{-2cm} \\
            \textstyle Q \succeq 0.
            \end{dcases}
        \end{aligned}
\end{equation*}
\end{fleqn}
%
We show in Theorem~\ref{th:sdupper} that strong duality holds between the primal and the dual versions of this semidefinite program. In particular, numerical computations with either of these programs will yield the same optimal value.

\subsubsection{Semidefinite restrictions}
\label{sec:SDPlower}

A trivial way to obtain a restriction of \refprog{LP} (or equivalently of \refprog{LPS}) is to impose $F_l=0$ for $l>m$, for some $m\ge n$. What is less trivial is that this yields a finite-dimensional semidefinite program.
Indeed, the constraint~\eqref{eq:conditionpos} becomes
\begin{equation}
    \forall x\in\mathbb R_+,\;\sum_{k=0}^mF_k\mathcal L_k(x)\ge0,
\end{equation}
or equivalently:
\begin{equation}
    \forall x\in\mathbb R,\;\sum_{k=0}^m(-1)^kF_kL_k(x^2)\ge0,
\end{equation}
where we used~\refeq{Lagf}. By Lemma~\ref{lem:pospolyR}, writing $X=(1,x,\dots,x^m)$, this is equivalent to the existence of a positive semidefinite matrix $Q=(Q_{ij})_{0\le i,j\le m}$ such that for all $x\in\mathbb R$,
\begin{equation}\label{eq:FtoQ}
    \begin{aligned}
        \sum_{k=0}^m(-1)^kF_kL_k(x^2)&=X^TQX\\
        &=\sum_{l=0}^mx^l\sum_{i+j=l}Q_{ij}.
    \end{aligned}
\end{equation}
This is in turn equivalent to the linear constraints:
\begin{equation}
    \begin{aligned}
        &\forall l \in \llbracket 1,m \rrbracket, \;\sum\limits_{i+j=2l-1}Q_{ij} = 0,\\
        &\text{and}\\
        &\forall l\le m,\sum\limits_{i+j=2l}Q_{ij} = \frac{(-1)^l}{l!}\sum\limits_{k=l}^{m} (-1)^k \binom kl F_k,\hspace{-1cm}
    \end{aligned}
\end{equation}
by identifying the coefficients in front of each monomial in~\refeq{FtoQ}.

Hence, the restriction of \refprog{LP} obtained by imposing $F_l=0$ for $l>m$, for a fixed $m\ge n$, is a semidefinite program given by:
\begin{fleqn}
\begin{equation*}
    \label{prog:lowerSDPn}
    \tag*{$(\text{SDP}^{m,\leq}_n)$}
        \begin{aligned}
            & \quad \text{Find } Q\in\text{Sym}_{m+1} \text{ and } \bm{F}\in\R^{m+1}\\
            & \quad \text{maximising } F_n \\
            & \quad \text{subject to} \\
            &\begin{dcases}
                \textstyle \sum_{k=0}^{m} F_k = 1  \\
                \textstyle \forall k\le m, \; F_k \geq 0 \\                     \textstyle \forall l \in \llbracket 1,m \rrbracket, \;\sum\limits_{i+j=2l-1}Q_{ij} = 0\hspace{-2cm}\\
                \textstyle \forall l\le m,\sum\limits_{i+j=2l}Q_{ij} = \sum\limits_{k=l}^{m} \frac{(-1)^{k+l}}{l!}  \binom kl F_k \hspace{-2cm}\\
                \textstyle Q \succeq 0.
             \end{dcases}
        \end{aligned}
\end{equation*}
\end{fleqn}
Let us denote its optimal value by $\omega^{m,\leq}_n$.
Each choice of $m$ leads to a different semidefinite restriction of \refprog{LP} and \refprog{LPS}, whose optimal value gets closer to $\omega_n^\mathcal S$ as $m$ increases (since the constraint $F_l=0$ for $l>m$ gets weaker when $m$ increases). The sequence $\{\omega^{m,\leq}_n\}_{m\ge n}$ is thus an increasing sequence and for all $m\ge n$, we have $\omega^{m,\leq}_n\le\omega_n^\mathcal S\le\omega_n^{L^2}$.

For each $m\ge n$, the program \refprog{lowerSDPn} has a dual semidefinite program which is given by (see Appendix~\ref{sec:app_dualSDP} for a detailed derivation):
\begin{fleqn}
\begin{equation*}
   \label{prog:lowerDSDPn}
      \tag*{$(\text{D-SDP}^{m,\leq}_n)$}
      \begin{aligned}
            & \quad \text{Find } A\in\text{Sym}_{m+1}, \bm{\mu} \in \R^{m+1} \text{ and } y \in \R \hspace{-2cm}\\
            & \quad \text{minimising } y\\
            & \quad \text{subject to}\\
            &\begin{dcases}
                \textstyle y \geq 1 + \mu_n \\
                \textstyle \forall k \le m,\; y \geq \mu_k \\
                \textstyle \forall l\le m,\forall i+j=2l,\;A_{ij}=\sum\limits_{k=0}^l \mu_k \binom lkl!\hspace{-2cm}\\
                \textstyle A \succeq 0.
            \end{dcases}
        \end{aligned}
\end{equation*}
\end{fleqn}
We show in Theorem~\ref{th:sdlower} that strong duality holds between the primal and the dual versions of this semidefinite program.

\subsubsection{Strong duality of semidefinite programs}
\label{sec:SDPsd}

We fix $m\ge n$ and we show that strong duality holds both for the semidefinite restriction \refprog{lowerSDPn} and the semidefinite relaxation \refprog{upperSDPn}.

We first consider the semidefinite restrictions:

\begin{theorem}\label{th:sdlower}
Strong duality holds between the programs \refprog{lowerSDPn} and \refprog{lowerDSDPn}.
\end{theorem}

\begin{proof}
We make use of Slater condition for (finite-dimensional) semidefinite programs: strict feasibility of \refprog{lowerSDPn} implies strong duality between \refprog{lowerSDPn} and \refprog{lowerDSDPn}. 

In order to obtain a strictly feasible solution, we define $Q := \frac1{2^{m+1}-1}\text{Diag}_{k=0,\dots,m}(\frac1{k!}) \in \text{Sym}_{m+1}$ and $\bm F=(F_0,\dots,F_m)\in\R^{m+1}$, where for all $k \in \llbracket 0,m \rrbracket$, $F_k := \frac1{2^{m+1}-1} \binom{m+1}{k+1}$. 
Then $Q \succ 0$ and $F_k>0$ for all $k \in \llbracket 0,m \rrbracket$. Moreover, we have
\begin{equation}
    \begin{aligned}
        \sum_{k=0}^m F_k &= \frac1{2^{m+1}-1}\sum_{k=0}^m \binom{m+1}{k+1} \\
        &= \frac1{2^{m+1}-1}\left(\sum_{k=0}^{m+1} \binom{m+1}k-1\right)\\
        &= 1.
    \end{aligned}
\end{equation}
We also have $\sum_{i+j=2l-1}Q_{ij} = 0$ for all $l\in\llbracket1,m\rrbracket$, since $Q$ is diagonal. Furthermore, for all $l\le m$,
\begin{equation}
    \sum_{i+j=2l}Q_{ij}=Q_{ll}=\frac1{2^{m+1}-1}\frac1{l!},
\end{equation}
and
\begin{fleqn}
\begin{equation}
    \begin{aligned}
        &\frac{(-1)^l}{l!}\sum\limits_{k=l}^{m} (-1)^k  \binom kl F_k\\
        &=\frac1{2^{m+1}-1}\frac1{l!}\sum_{k=l}^m(-1)^{k-l}\binom km\binom{m+1}{k+1}\\
        &=\frac1{2^{m+1}-1}\frac1{l!}\binom ml\sum_{q=0}^{m-l}(-1)^q\frac{m+1}{q+l+1}\binom{m-l}q\hspace{-1cm}\\
        &=\frac1{2^{m+1}-1}\frac1{l!},
    \end{aligned}
\end{equation}
\end{fleqn}
where we used~\cite[(1.41)]{gould1972combinatorial} in the last line.
Therefore, $(Q,\bm F)$ is a strictly feasible solution of \refprog{lowerSDPn}, which implies strong duality.
\end{proof}

\noindent As a consequence of the proof of Theorem~\ref{th:sdlower}, we also obtain strong duality for the semidefinite relaxations: 

\begin{theorem}\label{th:sdupper}
Strong duality holds between the programs \refprog{upperSDPn} and \refprog{upperDSDPn}.
\end{theorem}

\begin{proof}
The program \refprog{upperSDPn} is a relaxation of \refprog{LP} and \refprog{lowerSDPn} is a restriction of \refprog{LP}, so \refprog{upperSDPn} is a relaxation of \refprog{lowerSDPn}. Hence, the strictly feasible solution of \refprog{lowerSDPn} derived in the proof of Theorem~\ref{th:sdlower} yields a strictly feasible solution $(A,\bm F)$ for \refprog{upperSDPn}: we set
$\bm F=(F_0,\dots,F_m)\in\R^{m+1}$, where for all $k \in \llbracket 0,m \rrbracket$, $F_k := \frac1{2^{m+1}-1} \binom{m+1}{k+1}$ and $A=A_{\bm F}\in\text{Sym}_{m+1}$, where $A_{\bm F}$ is defined in~\refeq{momentmatrix}. 

With Slater condition, this shows again that strong duality holds between the programs \refprog{upperSDPn} and \refprog{upperDSDPn}.
\end{proof}

\subsection{Convergence of the hierarchies of semidefinite programs}
\label{sec:CVproof}

From the previous sections, for $m\ge n$ the optimal values $\omega^{m,\geq}_n$ and $\omega^{m,\leq}_n$ of \refprog{upperSDPn} and \refprog{lowerSDPn} form decreasing and increasing sequences, respectively, which satisfy
\begin{equation}\label{eq:sandwich}
    0\le\omega^{m,\leq}_n\le\omega_n^{\mathcal S}\le\omega_n^{L^2}\le\omega^{m,\geq}_n\le1.
\end{equation}
Recall that $\omega_n^{L^2}$ is the optimal value of \refprog{LP} while $\omega_n^\mathcal S$ is the optimal value of \refprog{LPS}. These sequences thus both converge, and the remaining question is whether $(\omega^{m,\geq}_n)_m$ converges to $\omega_n^{L^2}$ and $(\omega^{m,\leq}_n)_m$ converges to $\omega_n^{\mathcal S}$. 
In this section, we show that this is indeed the case.

\subsubsection{Convergence of the sequence of upper bounds}

\begin{theorem}\label{th:upperCV}
The decreasing sequence of optimal values of \refprog{upperSDPn} converges to the optimal value of \refprog{LP}:
\begin{equation}
    \lim_{m\rightarrow+\infty}\omega^{m,\ge}_n=\omega_n^{L^2}.
\end{equation}
\end{theorem}

\noindent In order to prove this theorem, we extract a limit from a sequence of optimal solutions of \refprog{upperSDPn}, for $m\ge n$, and we show using Theorem~\ref{th:RHLaguerre} that it provides a feasible solution of \refprog{LP}.

\begin{proof}
For all $m\ge n$, the feasible set of \refprog{upperSDPn} is non-empty (consider, e.g., $\bm F=(1,0,0,\dots,0)\in\R^{m+1}$). Moreover, due to the constraints $\sum_{k=0}^mF_k=1$ and $F_k\ge0$ for all $k\le m$, the feasible set of \refprog{upperSDPn} is compact. Hence, the program \refprog{upperSDPn} has feasible optimal solutions, for all $m\ge n$, by diagonal extraction.

The matrix $A$ in \refprog{upperSDPn} is entirely fixed by the choice of $\bm F$. Let $(\bm F^m)_{m\ge n}$ be a sequence of optimal solutions of \refprog{upperSDPn}, for $m\ge n$. 
For each $m\ge n$, we have by optimality that $F_n^m=\omega_n^{m,\ge}$, and the sequence $(F^m_n)_{m\ge n}$ converges. We complete each tuple $\bm F^m=(F_0^m,F_1^m,\dots,F_m^m)\in\R^{m+1}$ with zeros to obtain a sequence in $\R^{\N}$, which we still denote $\bm F^m=(F_0^m,F_1^m,\dots,F_m^m,0,0,\dots)\in\R^{\N}$.

Performing a diagonal extraction $\phi$ on the sequence of optimal solutions $(\bm F^m)_{m\ge n}$, we obtain a sequence of sequences $(\bm F^{\phi(m)})_{m\ge n}$ such that each sequence $(F^{\phi(m)}_k)_{m\ge n}$ converges when $m\rightarrow+\infty$, for all $k\in\N$. Let $F_k$ denote its limit, for each $k\in\N$. We write $\bm F=(F_k)_{k\in\N}\in\R^{\N}$ the sequence of limits. 

For all $m\ge n$, $F_k^{\phi(m)}\ge0$ for all $k\in\mathbb N$ and $\sum_kF_k^{\phi(m)}=1$, so taking $m\rightarrow+\infty$ we obtain $F_k\ge0$ for all $k\in\mathbb N$, and $\sum_kF_k\le1$. Moreover,
\begin{equation}\label{eq:Fnlim}
    F_n=\lim_{m\rightarrow+\infty}F_n^m=\lim_{m\rightarrow+\infty}\omega^{m,\ge}_n.
\end{equation}
For all $m\ge n$, we have $\omega^{m,\ge}_n\ge\omega_n^{L^2}$, so $F_n\ge\omega_n^{L^2}>0$. In particular, $\sum_kF_k>0$, so without loss of generality we may assume that $\sum_kF_k=1$ (otherwise we can always replace $F_k$ by $\frac{F_k}{\sum_lF_l}$).

Let $f_{\bm F}=\sum_kF_k\mathcal L_k\in L^2(\R_+)$. By construction we have:
\begin{equation}
    \forall m\ge n,\forall g\in\mathcal R_{m,+}(\R_+),\;\braket{f_{\bm F},g}\ge0.
\end{equation}
Hence, by Theorem~\ref{th:RHLaguerre}, $\bm F$ is the sequence of Laguerre moments of a non-negative distribution over $\R_+$ (the Lebesgue measure times the function $f_{\bm F}$). In particular,
\begin{equation}
    \forall x \in \R_+, \;f_{\bm F}(x)=\sum_k F_k\mathcal L_k(x) \geq 0.    
\end{equation}
With the constraints $F_k\ge0$ for all $k\in\mathbb N$, and $\sum_kF_k\le1$, this implies that $\bm F$ is a feasible solution of \refprog{LP}, and in particular $F_n\le\omega_n^{L^2}$, since \refprog{LP} is a maximisation problem. Since we already had $F_n\ge\omega_n^{L^2}$ we obtain with~\refeq{Fnlim}:
\begin{equation}
    \lim_{m\rightarrow+\infty}\omega^{m,\ge}_n=\omega_n^{L^2},
\end{equation}
which concludes the proof.
\end{proof}
\noindent This immediately implies strong duality for programs \refprog{LP} and \refprog{DLP} because \begin{enumerate}[label=(\roman*)] 
\item we have weak duality between those programs so that the optimal value $\omega'^{L^2}_n$ of \refprog{DLP} upper bounds the optimal value of \refprog{LP} $\omega^{L^2}_n$ \ie $\omega^{L^2}_n \le \omega'^{L^2}_n$; 
\item we have strong duality between \refprog{upperSDPn} and \refprog{upperDSDPn} by Theorem~\ref{th:sdupper};
\item \refprog{upperSDPn} is a relaxation of \refprog{LP} so that $\forall m,\;\omega^{m,\ge}_n \ge\omega_n^{L^2}$;
\item \refprog{upperDSDPn} is a restriction of \refprog{DLP} so that $\forall m,\omega^{m,\ge}_n \ge\omega'^{L^2}_n\ge\omega^{L^2}_n$;
\item we showed that the optimal value of the hierarchy \refprog{upperSDPn} converges to $\omega^{L^2}_n$ \ie $\lim_{m\rightarrow+\infty}\omega^{m,\ge}_n=\omega_n^{L^2}$.
\end{enumerate}

\subsubsection{Convergence of the sequence of lower bounds}
 
\begin{theorem}\label{th:lowerCV}
The increasing sequence of optimal values of \refprog{lowerSDPn} converges to the optimal value of \refprog{LPS}:
\begin{equation}
    \lim_{m\rightarrow+\infty}\omega^{m,\le}_n=\omega_n^\mathcal S.
\end{equation}
\end{theorem}
 
\noindent The proof is similar to that of Theorem~\ref{th:upperCV} using the dual programs: we attempt to construct a feasible optimal solution of \refprog{DLP} by extracting a limit from a sequence of optimal solutions of \refprog{lowerDSDPn}, for $m\ge n$ and then conclude using the strong duality between \refprog{lowerSDPn} and \refprog{lowerDSDPn}, which we proved in Theorem~\ref{th:sdlower}. However, it turns out that \refprog{DLP} may not have feasible optimal solutions in ${L^2}'(\R_+)$ (as anticipated with the analytical optimal solutions for $n=1$ and $n=2$ from section~\ref{sec:LP}). 

To deal with this issue, we have extended the formulation of \refprog{DLP} to a larger space where it has feasible optimal solutions, namely the space of tempered distributions $\mathcal S'(\R_+)$ (see \refprog{DLPS}).

Note that the semidefinite restrictions \refprog{lowerSDPn} of \refprog{LP} are also restrictions of \refprog{LPS} for all $m\ge n$, while \refprog{LPS} is itself a restriction of \refprog{LP}, and \refprog{DLPS} is a relaxation of \refprog{DLP}. We denote by ${\omega_n^{\mathcal S'}}$ the optimal value of \refprog{DLPS}. Recall that the optimal value of \refprog{LPS} is denoted $\omega_n^{\mathcal S}$ and that of \refprog{LP} and \refprog{DLP} are denoted $\omega_n^{L^2}$ (by strong duality). By weak duality of linear programming we thus have
\begin{equation}\label{eq:chainopti}
    \omega_n^{m,\le}\le\omega_n^{\mathcal S}\le{\omega_n^{\mathcal S'}}\le\omega_n^{L^2},
\end{equation}
for all $m\ge n$ (see also figure~\ref{fig:structure}).

Before proving Theorem~\ref{th:lowerCV}, we introduce two intermediate technical results. The first one is a reformulation of \refprog{lowerDSDPn} over Schwartz space:

\begin{lemma}\label{lem:newlowerDSDPn}
For all $m\ge n$, the program \refprog{lowerDSDPn} is equivalent to the following program:
\begin{fleqn}
\begin{equation*}
    \label{prog:newlowerDSDPn}
    \tag*{(D-SDP$_n^{m,\le}$)}
        \begin{aligned}
            & \quad \text{\upshape Find } y \in \R \; \text{\upshape and } \bm\mu \in\mathcal S'(\N) \\
            & \quad \text{\upshape minimising } y \\
            & \quad \text{\upshape subject to}  \\
            & \begin{dcases}
            \forall k \neq n \in \N,\; y \ge\mu_k  \\
            y \geq 1 + \mu_n \\
            \forall g \in \mathcal R_{m,+}(\R_+), \; \langle f_{\bm\mu},g \rangle \geq  0,
            \end{dcases}
        \end{aligned}
\end{equation*}
\end{fleqn}
where $f_{\bm\mu}=\sum_k\mu_k\mathcal L_k$.
\end{lemma}

\noindent This reformulation uses Stieltjes characterisation of non-negative polynomials over $\R_+$~\cite{reed1975ii} and is detailed in Appendix~\ref{sec:app_newlowerDSDPn}.

The second result provides a nontrivial analytical solution to the primal program \refprog{lowerSDPn}:
let us define $\bm F^n=(F_k^n)_{k\in\N}\in\R^{\N}$ by \\
$\bullet$ if $n$ is even:
\begin{equation}\label{eq:Fkforneven}
        F_k^n:=\begin{cases}\frac1{2^n}\binom k{\frac k2}\binom{n-k}{\frac{n-k}2}&\text{when }k\le n, k\text{ even},\\0&\text{otherwise},\end{cases}
\end{equation}
$\bullet$ if $n$ is odd:
\begin{equation}\label{eq:Fkfornodd}
        F_k^n:=\begin{cases} \frac1{2^n}\frac{\binom n{\floor{\frac n2}}\binom{\floor{\frac n2}}{\floor{\frac k2}}^2 }{\binom nk},&\text{when }k\le n,\\0&\text{otherwise}. \end{cases}
\end{equation}
In both cases,
\begin{equation}\label{eq:boundFnn}
    \begin{aligned}
        F_n^n&=\frac1{2^n}\binom n{\floor{\frac n2}}\\
        &\ge\frac1{n+1}
    \end{aligned}
\end{equation}
where we used $\binom nj\le\binom n{\floor{\frac n2}}$ for all $j\in\llbracket0,n\rrbracket$, summed over $j$ from 0 to $n$. 

\begin{lemma}\label{lem:feasible}
For all $m\ge n$, $\bm F^n$ is a feasible solution of \refprog{lowerSDPn}. Moreover, it is optimal when $m=n$.
\end{lemma}

\noindent The proof of feasibility consists in checking that the constraints of \refprog{lowerSDPn} are satisfied by $\bm F^n$. To do so, we make use of Zeilberger's algorithm~\cite{zeilberger1991method}, a powerful algorithm for proving binomial identities. The proof of optimality for $m=n$ is obtained by deriving an analytical feasible solution of (D-SDP$_n^{n,\le}$) with the same optimal value. Note that the optimality of this solution does not play a role in the proof of convergence. We refer to Appendix~\ref{sec:app_feasible} for a detailed proof. 

As a consequence, we obtain the following analytical lower bound for the optimal value of \refprog{LP}:
\begin{equation}
    \omega_n\ge \omega_n^{n,\le}=\frac1{2^n}\binom n{\floor{\frac n2}}\underset{n\rightarrow+\infty}\sim\sqrt{\frac2{\pi n}},
\end{equation}
which is superseded by numerical bounds when $n\ge3$ (see Table~\ref{tab:Fockbounds}).

We now combine Lemma~\ref{lem:newlowerDSDPn} and Lemma~\ref{lem:feasible} to prove Theorem~\ref{th:lowerCV}.

\begin{proof}
The feasible set of \refprog{newlowerDSDPn} is non-empty, by considering the null sequence, which achieves value $1$. Without loss of generality, we add the constraint $y\le1$ in \refprog{DLPS} and \refprog{newlowerDSDPn}.

Let $m\ge n$ and let $(y,\bm\mu)\in\R\times S'(\N)$ be a feasible solution of \refprog{newlowerDSDPn} expressed in the form given by Lemma~\ref{lem:newlowerDSDPn}. 
The constraint $\braket{f_{\bm\mu},x\mapsto e^{-\frac x2}}\ge0$ implies $\mu_0\ge0$ and thus $y\ge0$. Without loss of generality, we may set $\mu_k=0$ for $k>m$, since these coefficients are only constrained by $\mu_k\le y\le1$. We also have $\mu_k\le1$ for all $k\in\N$.

By Lemma~\ref{lem:feasible}, $\bm F^l$ is a feasible solution of (SDP$_l^{m,\le}$) for all $l\le m$, so in particular $f_{\bm F^l}=\sum_{k=0}^lF_k^l\mathcal L_k\in\mathcal R_{m,+}(\R_+)$. Hence, $\bm\mu$ satisfies the constraint $\braket{f_{\bm\mu},f_{\bm F^l}}\ge0$, which gives
\begin{equation}
    \sum_{k=0}^l\mu_kF_k^l\ge0,
\end{equation}
for all $l\le m$.
Thus we have, for all $l\in\llbracket1,m\rrbracket$,
\begin{equation}\label{eq:mucompact}
    \begin{aligned}
        \mu_l&\ge-\frac1{F_l^l}\sum_{k=0}^{l-1}\mu_kF_k^l\\
        &\ge-\frac1{F_l^l}\sum_{k=0}^{l-1}F_k^l\\
        &=1-\frac1{F_l^l}\\
        &\ge-l,
    \end{aligned}
\end{equation}
where we used $F_l^l>0$ in the first line, $\mu_k\le1$ and $F_k^l\ge0$ in the second line, $\sum_{k=0}^lF_k^l=1$ in the third line and~\refeq{boundFnn} in the last line. With $\mu_k\le1$, we obtain $|\mu_k|\le k$ for $k\in\N^*$, and thus $|\mu_k|\le k+1$\footnote{Critically, this bound does not depend on the level $m$ of the hierarchy. } for all $k\in\mathbb N$. Hence, the feasible set of \refprog{newlowerDSDPn} is compact and the program \refprog{newlowerDSDPn} has feasible optimal solutions for all $m\ge n$, by diagonal extraction.

Let $(y^m,\bm\mu^m)_{m\ge n}$ be a sequence of optimal solutions of \refprog{newlowerDSDPn}, for $m\ge n$. By Theorem~\ref{th:sdlower}, we have strong duality between the programs \refprog{lowerSDPn} and \refprog{newlowerDSDPn}, so the optimal value of \refprog{newlowerDSDPn} is given by $\omega_n^{m,\le}$, for all $m\ge n$. By optimality $y^m=\omega_n^{m,\le}$, for all $m\ge n$, and the sequence $(y^m)_{m\ge n}$ converges. 

Performing a diagonal extraction $\phi$ on the sequence $(\bm\mu^m)_{m\ge n}$, we obtain a sequence of sequences $(\bm\mu^{\phi(m)})_{m\ge n}$ such that each sequence $(\mu^{\phi(m)}_k)_{m\ge n}$ converges when $m\rightarrow+\infty$, for all $k\in\N$. Let $\mu_k$ denote its limit, for each $k\in\N$. We write $\bm \mu=(\mu_k)_{k\in\N}\in\R^{\N}$ the sequence of limits. We also write 
\begin{equation}\label{eq:ylim}
    y:=\lim_{m\rightarrow+\infty}y^m=\lim_{m\rightarrow+\infty}\omega_n^{m,\le}.
\end{equation}
For all $m\ge n$, we have $\omega_n^{\phi(m),\le}\ge\mu_k^{\phi(m)}$ for all $k\in\mathbb N$ and $\omega_n^{\phi(m),\le}\ge1+\mu_n^{\phi(m)}$, so taking $m\rightarrow+\infty$ we obtain $y\ge\mu_k$ for all $k\in\mathbb N$ and $y\ge1+\mu_n$. By~\refeq{chainopti}, we have $\omega^{m,\le}_n\le\omega_n^{\mathcal S}$ for all $m\ge n$, so $y\le\omega_n^{\mathcal S}$.

Moreover, $|\mu_k^{\phi(m)}|\le k+1$ for all $k\in\mathbb N$, so taking $m\rightarrow+\infty$ we obtain $|\mu_k|\le k+1$ for all $k\in\mathbb N$, which implies that $\bm\mu\in\mathcal S'(\N)$~\cite{guillemot1971developpements}. Let $f_{\bm\mu}=\sum_k\mu_k\mathcal L_k\in \mathcal S'(\R_+)$. We have 
\begin{equation}
    \mu_k=\braket{f_{\bm\mu},\mathcal L_k}.
\end{equation}
By construction we also have:
\begin{equation}
    \forall m\ge n,\forall g\in\mathcal R_{m,+}(\R_+),\;\braket{f_{\bm \mu},g}\ge0.
\end{equation}
By Theorem~\ref{th:RHLaguerre}, this implies that the distribution $\mu:=f_{\bm\mu}(x)\in\mathcal S'(\R_+)$ is non-negative, i.e.,
\begin{equation}
    \forall f \in \mathcal S_+(\R_+), \;\braket{f_{\bm \mu},f}\ge0.    
\end{equation}
With the constraints $y\ge\mu_k$ for all $k\in\mathbb N$ and $y\ge1+\mu_n$, we have that $(y,\mu)$ is a feasible solution of \refprog{DLPS}, and in particular $y\ge{\omega_n'^{\mathcal S}}$, since \refprog{DLPS} is a minimisation problem. Since $y\le\omega_n^{\mathcal S}$ we obtain with~\refeq{chainopti} and~\refeq{ylim}:
\begin{equation}
    \lim_{m\rightarrow+\infty}\omega^{m,\le}_n=\omega_n^{\mathcal S}={\omega_n'^{\mathcal S}}.
\end{equation}
\end{proof}
 
\noindent As a direct corollary of the proofs of Theorem~\ref{th:upperCV} and Theorem~\ref{th:lowerCV} (in the same spirit than the remark after the proof of Theorem~\ref{th:upperCV}), we obtain the following strong duality result:
 
\begin{theorem}\label{th:sdLP}
Strong duality holds between the programs \refprog{LPS} and \refprog{DLPS} and between the programs \refprog{LP} and \refprog{DLP}.
\end{theorem}
 
\noindent For completeness, we give a different and more direct proof of the strong duality between \refprog{LP} and \refprog{DLP} in Appendix~\ref{sec:app_sdLP}.

We have shown the convergence of the semidefinite hierarchies of upper and lower bounds: $(\omega^{m,\ge}_n)_{m\ge n}$ towards the optimal value of \refprog{LP} and
$(\omega^{m,\le}_n)_{m\ge n}$ towards the optimal value of \refprog{LPS}.
By linearity, these results generalise straightforwardly to the case of witnesses corresponding to linear combinations of fidelities with displaced Fock states:
\begin{align}
    & \lim_{m\rightarrow+\infty}\!\omega^{m,\le}_{\bm a}=\omega_{\bm a}^\mathcal S, \\
    & \lim_{m\rightarrow+\infty}\!\omega^{m,\ge}_{\bm a}=\omega_{\bm a}^{L^2},
\end{align}
for all $n\in\N^*$ and all $\bm a\in[0,1]^n$.

We obtained two hierarchies providing numerical lower bounds and upper bounds on the threshold value. A natural question that arises is the following: is there a gap between the optimal values of \refprog{LP} and \refprog{LPS}? We leave this as an open question.





\section{Witnessing multimode Wigner negativity}
\label{sec:multi}

In this section we discuss the generalisation of our Wigner negativity witnesses to the more challenging multimode setting. Hereafter, $M$ denotes the number of modes.

\subsection{Multimode Wigner negativity witnesses}

Using multi-index notations (see Appendix~\ref{sec:app_multidef}), the single-mode Wigner negativity witnesses defined in Eq.~\eqref{eq:witnessOmega} are naturally generalised to
\begin{equation}\label{eq:witnessOmegamulti}
    \hat\Omega_{\bm a,\bm\alpha}:=\smashoperator{\sum_{\bm{1}\le\bm k\le\bm n}}
      a_{\bm k}\hat D(\bm\alpha)\ket{\bm k}\!\bra{\bm k}\hat D^\dag(\bm\alpha),
\end{equation}
for $\bm n=(n_1,\dots,n_M)\in\mathbb N^M\setminus\{\bm0\}$, $\bm a=(a_{\bm k})_{\bm{1}\le\bm k\le\bm n}\in[0,1]^{n_1\cdots n_M}$, 
with $\max_{\bm k}a_{\bm k}=1$, and $\bm\alpha\in\mathbb C^M$. Similar to the single-mode case, these POVM elements are weighted sums of multimode displaced Fock states projectors, and their expectation value for a quantum state $\bm\rho\in\mathcal D(\mathcal H^{\otimes M})$ is given by
\begin{equation}
    \Tr(\hat\Omega_{\bm a,\bm\alpha}\bm\rho)
    =\smashoperator{\sum_{\bm{1}\le\bm k\le\bm n}}
       a_{\bm k}F\left(\hat D^\dag(\bm\alpha)\bm\rho\hat D(\bm\alpha),\ket{\bm k}\right)
\end{equation}
where $F$ is the fidelity. Unlike in the single-mode case however, estimating this quantity with homodyne or heterodyne detection by direct fidelity estimation is no longer efficient when the number of modes becomes large. Instead, one may use robust lower bounds on the multimode fidelity from~\cite{chabaud2020efficient} which can be obtained efficiently with homodyne or heterodyne detection. A lower bound on the estimated experimental multimode fidelity will allow to detect Wigner negativity if it is larger than an upper bound on the threshold value associated to a given witness.

These lower bounds are obtained as follows: given a target multimode Fock state $\ket{\bm n}=\ket{n_1}\otimes\cdots\otimes\ket{n_M}$ and multiple copies of an $M$-mode experimental state $\bm \rho$, measure all single-mode subsystems of $\bm\rho$ and perform fidelity estimation with each corresponding single-mode target Fock state. That is, the samples obtained from the detection of the $i^{th}$ mode of $\bm\rho$ are used for single-mode fidelity estimation with the Fock state $\ket{n_i}$. Let $F_1,\dots,F_M$ be the single-mode fidelity estimates obtained and let $\tilde F(\bm\rho,\ket{\bm n}):=1-\sum_{i=1}^M(1-F_i)$. Then~\cite{chabaud2020efficient},
\begin{equation}\label{eq:robustfide}
    \hspace{-0.15cm}1\!-\!M(1\!-\!F(\bm\rho,\ket{\bm n}))\le\tilde F(\bm\rho,\ket{\bm n})\le F(\bm\rho,\ket{\bm n}).
\end{equation}
In particular, $\tilde F$ provides a good estimate of the multimode fidelity $F$ whenever $F$ is not too small. 

The same procedure is followed in the case of target displaced Fock states, with classical translations of the samples in order to account for the displacement parameters.

To each witness $\hat\Omega_{\bm a,\bm\alpha}$ is associated its threshold value:
\begin{equation}\label{eq:threshold_multi}
    \omega_{\bm a}\footnote{Again we write the threshold value as $\omega_{\bm a}$ generically and we will use the more precise notation $\omega_{\bm a}^{L^2}$ (resp. $\omega_{\bm a}^{\mathcal S}$) to refer to the optimisation over square-integrable functions (resp. Schwartz functions) on $\R^M$.}
    :=\sup_{\substack{\bm\rho\in \mathcal D(\mathcal H^{\otimes M})\\W_{\bm\rho} \geq 0}}\Tr\!\left(\hat\Omega_{\bm a,\bm\alpha}\,\bm\rho\right).
\end{equation}
With Eq.~\eqref{eq:robustfide}, if the value of the bound $\tilde F$ obtained experimentally is greater than $\omega_{\bm a}$, then the state $\bm\rho$ has a negative Wigner function. 

With the same arguments as in the single-mode case, the multimode Wigner negativity witnesses in Eq.~\eqref{eq:witnessOmegamulti} form a complete family and retain the interpretation from Lemma~\ref{lem:operational}: the violation of the threshold value provides a lower bound on the distance to the set of multimode states with non-negative Wigner function. However, the limited robustness of the bound $\tilde F$ may affect the performance of the witnesses in practical scenarios, in particular for witnesses that are sums of different projectors. Still, we show in section~\ref{sec:example_multi} the applicability of the method with a genuinely multimode example.

We first generalise the single-mode semidefinite programming approach for approximating the threshold values to the multimode case.

\subsection{Approximating the multimode threshold values}

By linearity, we restrict our analysis to the case of Wigner negativity witnesses that are projectors onto a single multimode Fock state $\ket{\bm n}$, for $\bm n\in\mathbb N^M\setminus\{\bm0\}$. We thus consider the computation of
\begin{equation}
    \omega_{\bm n}=\sup_{\substack{\bm\rho\in\mathcal D(\mathcal H^{\otimes M})\\W_{\bm\rho}\ge0}}\braket{\bm n|\bm\rho|\bm n}.
\end{equation}
A similar reasoning to the single-mode case shows that the computation of the corresponding threshold value in Eq.~\eqref{eq:threshold_multi} may be rephrased as the following infinite-dimensional linear program:
\begin{fleqn}
\begin{equation*}
    \label{prog:LP_multi}
        \tag*{(LP$_{\bm n}^{L^2}$)}
        \begin{aligned}
            & \quad \text{Find }  (F_{\bm k})_{\bm k \in \N^M} \in \ell^2 \\
            & \quad \text{maximising } F_{\bm n} \\
            & \quad \text{subject to} \\
            &\begin{dcases}
             \sum_{\bm k} F_{\bm k} =  1  \\
            \forall \bm k \in \N^M, \;F_{\bm k} \geq  0  \\
            \forall \bm x \in \R_+^M, \; \sum_{\bm k} F_{\bm k}\mathcal L_{\bm k}(\bm x) \geq  0,
            \end{dcases}
        \end{aligned}
\end{equation*}
\end{fleqn}
where the optimisation is over square-summable real sequences indexed by elements of $\mathbb N^M$. Its dual linear program reads
\begin{fleqn}
\begin{equation*}
    \label{prog:DLP_multi}
    \tag*{(D-LP$_{\bm n}^{L^2}$)}
        \begin{aligned}
            & \quad \text{Find } y \in \R \text{ and } \mu \in {L^2}'(\R_+^M)\\
            & \quad \text{minimising } y\\
            & \quad \text{subject to}  \\
            & \begin{dcases}
            \forall \bm k \neq \bm n \in \N^M,\;  y \geq \int_{\R_+^M}{\mathcal L_{\bm k}}{d\mu}  \\
            y \geq  1 + \int_{\R_+^M}{\mathcal L_{\bm n}}{d\mu} \\
            \forall f \in L^2_+(\R_+^M), \; \langle \mu,f \rangle \geq  0.
            \end{dcases}
        \end{aligned}
\end{equation*}
\end{fleqn}
We denote their optimal value $\omega_{\bm n}^{L^2}$---as in the single-mode case, we have strong duality between these programs.
We also introduce the programs over Schwartz functions:
\begin{fleqn}
\begin{equation*}
    \label{prog:LP_multiS}
        \tag*{(LP$_{\bm n}^{\mathcal S}$)}
        \begin{aligned}
            & \quad \text{Find }  (F_{\bm k})_{\bm k \in \N^M} \in \mathcal S(\N^M) \\
            & \quad \text{maximising } F_{\bm n} \\
            & \quad \text{subject to} \\
            &\begin{dcases}
             \sum_{\bm k} F_{\bm k} =  1  \\
            \forall \bm k \in \N^M, \;F_{\bm k} \geq  0  \\
            \forall \bm x \in \R_+^M, \; \sum_{\bm k} F_{\bm k}\mathcal L_{\bm k}(\bm x) \geq  0,
            \end{dcases}
        \end{aligned}
\end{equation*}
\end{fleqn}
and
\begin{fleqn}
\begin{equation*}
    \label{prog:DLP_multiS}
    \tag*{(D-LP$_{\bm n}^{\mathcal S}$)}
        \begin{aligned}
            & \quad \text{Find } y \in \R \text{ and } \mu \in {\mathcal S}'(\R_+^M)\\
            & \quad \text{minimising } y\\
            & \quad \text{subject to}  \\
            & \begin{dcases}
            \forall \bm k \neq \bm n \in \N^M,\;  y \geq \int_{\R_+^M}{\mathcal L_{\bm k}}{d\mu}  \\
            y \geq  1 + \int_{\R_+^M}{\mathcal L_{\bm n}}{d\mu} \\
            \forall f \in L^2_+(\R_+^M), \; \langle \mu,f \rangle \geq  0.
            \end{dcases}
        \end{aligned}
\end{equation*}
\end{fleqn}
We denote their optimal value $\omega_{\bm n}^{S}$.

In the single-mode case, we obtained hierarchies of SDP relaxations and restrictions for~\refprog{LP} by replacing constraints involving non-negative functions by constraints involving non-negative polynomials $P$ of fixed degree. 
We then exploited the existence of a sum-of-squares decomposition for non-negative monovariate polynomials.
In the multimode setting, the polynomials involved are multivariate, so that the set of non-negative polynomials over $\R$ of a given degree may be strictly larger than the set of sum-of-square polynomials~\cite{hilbert1888darstellung}. Instead, we replace directly constraints involving non-negative functions over $\R_+$ by constraints involving non-negative polynomials $P$ of fixed degree such that $\bm x\mapsto P(\bm x^2)$\footnote{We write $\bm x^2$ in short for $\bm x^{2 \bf 1} = (x_1^2,\dots,x_M^2)$.} has a sum-of-squares decomposition, implying that the multimode semidefinite relaxations and restrictions are possibly looser than their single-mode counterparts. 

Moreover, the dimension of the semidefinite programs increases exponentially with the level of the hierarchy $m$, as the number of $M$-variate monomials of degree less or equal to $m$ is given by $\binom{M+m}m$. This implies that the semidefinite programs remain tractable only for a constant number of levels.

In spite of these observations, and following similar steps to the single-mode case (see Appendix~\ref{sec:app_multiSDP}), the SDP relaxations providing upper bounds for the threshold value are given by
\begin{fleqn}
\begin{equation*}
   \label{prog:upperSDPn_multi}
    \tag*{$(\text{SDP}^{m,\geq}_{\bm n})$}
        \begin{aligned}
            & \quad \text{Find } A=(A_{\bm i\bm j})_{|\bm i|,|\bm j|\le m}\in\text{Sym}_{\binom{M+m}m}\hspace{-3cm}\\
            &\quad\text{and } \bm F=(F_{\bm k})_{|\bm k|\le m}\in\R^{{\binom{M+m}m}}\hspace{-4cm} \\
            & \quad \text{maximising } F_{\bm n} \\
            & \quad \text{subject to}\\
            & \begin{dcases}
                        \textstyle \sum_{|\bm k|\le m} F_{\bm k} =  1  \\
                        \textstyle \forall |\bm k|\le m, \; F_{\bm k} \geq  0 \\
                         \textstyle \forall|\bm l|\le m,\forall\bm i\!+\!\bm j\!=\!2\bm l,\; A_{\bm i\bm j}=\sum\limits_{\bm k\le\bm l}F_{\bm k}\binom{\bm l}{\bm k}\bm l!\hspace{-3cm}\\
                        \textstyle \forall|\bm r|\le2m,\bm r\!\neq\!2\bm l,\forall|\bm l|\le m,\forall\bm i\!+\!\bm j\!=\!\bm r,\;A_{\bm i\bm j}=0\hspace{-3cm}\\
                        \textstyle A\succeq 0,
              \end{dcases}
        \end{aligned}
\end{equation*}
\end{fleqn}
for all $m\ge|\bm n|$.
Similarly, the semidefinite restrictions providing lower bounds for the threshold value are given by
\begin{fleqn}
\begin{equation*}
   \label{prog:lowerSDPn_multi}
    \tag*{$(\text{SDP}^{m,\leq}_{\bm n})$}
        \begin{aligned}
            & \quad \text{Find } Q=(Q_{\bm i\bm j})_{|\bm i|,|\bm j|\le m}\in\text{Sym}_{\binom{M+m}m}\hspace{-4cm}\\
            &\quad\text{and } \bm F=(F_{\bm k})_{|\bm k|\le m}\in\R^{{\binom{M+m}m}}\hspace{-4cm} \\
            & \quad \text{maximising } F_{\bm n} \\
            & \quad \text{subject to}\\
            & \begin{dcases}
                        \textstyle \sum_{|\bm k|\le m} F_{\bm k} =  1  \\
                        \textstyle \forall |\bm k|\le m, \; F_{\bm k} \geq  0 \\
                        \textstyle \forall |\bm l|\le m,\sum_{\bm i+\bm j=2\bm l}Q_{\bm i\bm j}=\sum_{\bm k\ge\bm l} \frac{(-1)^{|\bm k|+|\bm l|}}{\bm l!}\binom{\bm k}{\bm l}F_{\bm k}\hspace{-4cm}\\
                        \textstyle \forall |\bm r|\le 2m,\bm r\neq2\bm l,\forall |\bm l|\le m,\sum_{\bm i+\bm j=\bm r}Q_{\bm i\bm j}=0\hspace{-4cm}\\
                        \textstyle Q\succeq 0,
              \end{dcases}
        \end{aligned}
\end{equation*}
\end{fleqn}
for all $m\ge|\bm n|$. In these semidefinite programs, the optimisations are over matrices and vectors indexed by elements of $\mathbb N^m$ with sum of coefficients lower that $m$.

While our proof of convergence of the single-mode hierarchy of upper bounds towards $\omega_{\bm n}^{L^2}$ transfers easily to the multimode setting, the proof of convergence of the hierarchy of lower bounds towards $\omega_{\bm n}^\mathcal S$ requires the analytical expression of feasible solutions for each level of the hierarchy. We show how to construct such solutions in the multimode case using products of single-mode feasible solutions---this requires introducing an equivalent hierarchy of restrictions, where constraints are expressed on polynomials of $M$ variables with the degree in each individual variable being less or equal to $m$, rather than on polynomials of degree $m$ (that is, constraints of the form $\bm k \leq \bm m$ where $\bm m=(m,\dots,m)\in\N^M$ rather than $\vert \bm k \vert \leq m$). Along the way, we also prove strong duality of the programs involved. We refer to Appendix~\ref{sec:app_multiCV} for the proofs.

Summarising our results, we find that the multimode semidefinite programs~\refprog{upperSDPn_multi} and~\refprog{lowerSDPn_multi} 
provide converging sequences of upper and lower bounds to the threshold values $\omega_{\bm n}^{L^2}$ and $\omega_{\bm n}^{\mathcal S}$, respectively.
We study a concrete application in the next section.

\subsection{Multimode example}
\label{sec:example_multi}

To illustrate the usefulness of our Wigner negativity witnesses in the multimode setting, we consider a lossy Fock state over two modes:
\begin{equation}
    \begin{aligned}
        \bm\rho_{1,1,\eta}&:=(1-\eta)^2\ket1\!\bra1\otimes\ket1\!\bra1\\
        &\;\;+\eta(1-\eta)\ket1\!\bra1\otimes\ket0\!\bra0\\
        &\;\;+\eta(1-\eta)\ket0\!\bra0\otimes\ket1\!\bra1\\
        &\;\;+\eta^2\ket0\!\bra0\otimes\ket0\!\bra0,
    \end{aligned}
\end{equation}
with loss parameter $0\le\eta\le1$. Setting $\eta=0$ gives $\bm\rho_{1,1,\eta}=\ket1\!\bra1\otimes\ket1\!\bra1$ while setting $\eta=1$ gives $\bm\rho_{1,1,\eta}=\ket0\!\bra0\otimes\ket0\!\bra0$. This state has a non-negative Wigner function for $\eta\geq\frac12$.

We also consider the multimode Wigner negativity witness $\ket1\!\bra1\otimes\ket1\!\bra1$, which is a projector onto the Fock state $\ket1\otimes\ket1$. Solving numerically the corresponding hierarchy~\refprog{lowerSDPn_multi} up to $m=3$, we obtain the lower bound $0.266$ and solving the hierarchy~\refprog{upperSDPn_multi} up to $m=10$, we obtain the upper bound $0.320$. 

A direct consequence of the numerical lower bound is that tensor product states are not the closest among Wigner positive states to tensor product states with a negative Wigner function. Indeed, the maximum achievable fidelity with the state $\ket1\otimes\ket1$ using Wigner positive tensor product states is equal to the square of the maximum achievable fidelity with the state $\ket1$ using single-mode Wigner positive states, that is $0.5^2=0.25<0.266$.

\begin{figure}[t]
	\begin{center}
		\includegraphics[width=\columnwidth]{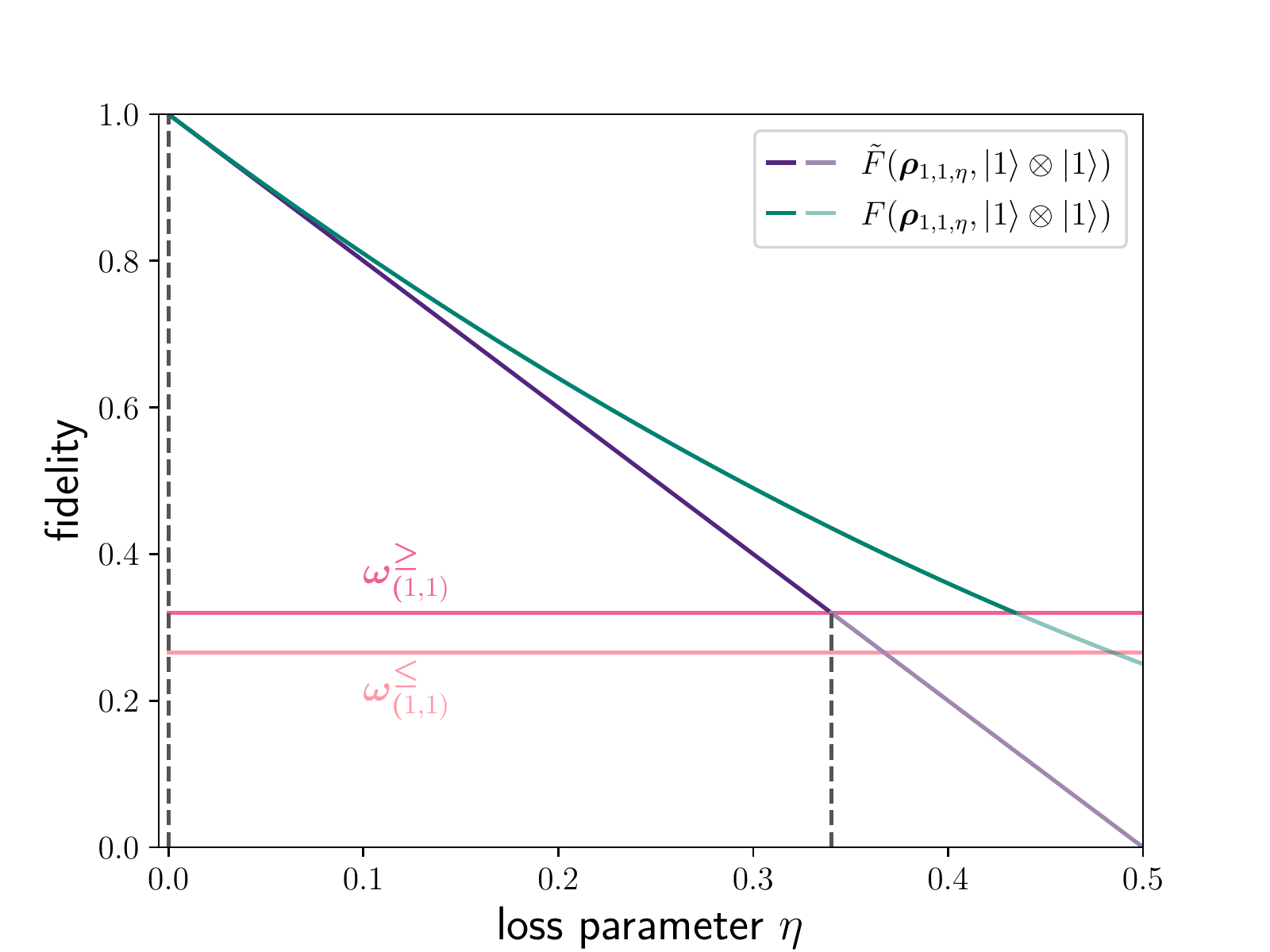}
		\caption{Witnessing Wigner negativity of the lossy Fock state $\bm\rho_{1,1,\eta}$ over two modes using the witness $\ket1\!\bra1\otimes\ket1\!\bra1$. The threshold value for that witness is upper bounded by $0.320$ and lower bounded by $0.266$.  The dashed red line delimits the interval of loss parameter values where the witness can be used to detect Wigner negativity of $\bm\rho_{1,1,\eta}$ efficiently, i.e., when the robust bound $\tilde F(\bm\rho_{1,1,\eta},\ket1\otimes\ket1)$ (blue curve) on the fidelity from Eq.~\eqref{eq:tildeF11} is above the witness upper bound (red line). When it is below the witness lower bound (black line), we are guaranteed that the witness cannot be used to detect Wigner negativity of the state. The fidelity $F(\bm\rho_{1,1,\eta},\ket1\otimes\ket1)$ is also depicted above (yellow curve). Note that $\bm\rho_{1,1,\eta}$ has a non-negative Wigner function for $\eta\ge0.5$.}
		\label{fig:multi}
	\end{center}
\end{figure}

We now use the upper bound to witness the Wigner negativity of the state $\bm\rho_{1,1\eta}$ (see Fig.~\ref{fig:multi}).
The fidelity between $\bm\rho_{1,1,\eta}$ and $\ket1\otimes\ket1$ is given by $F(\bm\rho_{1,1,\eta},\ket1\otimes\ket1)=(1-\eta)^2$, for all $0\le\eta\le1$. This fidelity is above the upper bound $0.320$ on the threshold value of the witness $\ket1\!\bra1\otimes\ket1\!\bra1$ when $\eta\le0.434$. 

However, in practice one would not obtain a precise estimate of the fidelity efficiently, but rather a robust lower bound on the fidelity computed from single-mode fidelities, which satisfies Eq.~\eqref{eq:robustfide}. In the worst case, the estimate obtained is closer to $1-2(1- F(\bm\rho_{1,1,\eta},\ket1\otimes\ket1))$ than to $F(\bm\rho_{1,1,\eta},\ket1\otimes\ket1)$. When the value of this robust lower bound is greater than the threshold value of the witness, this implies that the state has a negative Wigner function. 

In the present case, the two single-mode reduced states of $\bm\rho_{1,1,\eta}$ are the same, given by
\begin{equation}
    \Tr_2(\bm\rho_{1,1,\eta})=(1-\eta)\ket1\!\bra1+\eta\ket0\!\bra0,
\end{equation}
so the single-mode fidelities with $\ket1$ are equal for each mode and given by $1-\eta$. Hence, the robust lower bound on $F(\bm\rho_{1,1,\eta},\ket1\otimes\ket1)$ is given by
\begin{equation}\label{eq:tildeF11}
    \tilde F(\bm\rho_{1,1,\eta},\ket1\otimes\ket1)=1-2\eta.
\end{equation}
It is above the upper bound $0.320$ on the threshold value of the witness $\ket1\!\bra1\otimes\ket1\!\bra1$ when $\eta\le0.340$.

This example highlights the use of efficient and robust lower bounds on multimode fidelities rather than fidelity estimates~\cite{chabaud2020efficient}, in conjunction with our family of multimode witnesses to detect Wigner negativity of realistic experimental states.

\section{Conclusion and open problems}
\label{sec:conclusion}

Characterising quantum properties of physical systems is an important step in the development of quantum technologies, and negativity of the Wigner function, a necessary resource for any quantum computational speedup, is no exception.
In this work, we have derived a complete family of Wigner negativity witnesses which provide an operational quantification of Wigner negativity, both in the single-mode and multimode settings. 
In the context of quantum optical information processing, the main application of our method is in experimental scenarios, where it leads to robust and efficient certification of negativity of the Wigner function. Witnesses of Wigner negativity also provide witnesses of non-classicality although they are no longer complete in this case.

What is more, our witnesses also delineate the set of quantum states with positive Wigner function, and it would be interesting to understand whether additional insights on this set can be obtained using these witnesses.

The Wigner function has been extended to the discrete-variable setting~\cite{leonhardt1995quantum,gross2006hudson}, where it has been been linked to contextuality~\cite{spekkens2008negativity,delfosse2017equivalence,raussendorf2017contextuality}, a necessary resource for discrete-variable quantum computing~\cite{howard2014contextuality,bermejo2017contextuality}. A framework for treating contextuality and computing the amount of contextuality in continuous-variable settings has recently been developed~\cite{barbosa2019continuous}. As we obtained reliable Wigner negativity witnesses, it would be interesting to investigate the link between Wigner negativity and continuous-variable contextuality. 

Hierarchies of semidefinite programs (in particular with non-commutative variables~\cite{navascues2008convergent}) have found many recent applications in quantum information theory.
From an infinite-dimensional linear program, we were able to use numerically both a hierarchy of upper bounds and a hierarchy of lower bounds---thus obtaining a certificate for the optimality of these bounds by looking at their difference---whereas this only works in specific cases for the Lasserre hierarchy of upper bounds~\cite{lasserre2011new}. Can we find other interesting cases where we can exploit both hierarchies? Moreover, we obtained an analytical sequence of lower bounds for the threshold value of the program \refprog{LP}. Can we also get an analytical sequence of upper bounds? In particular, we anticipate that Fock states $\ket n$ get further away from the set of states having a positive Wigner function as $n$ increases and that $\omega_n = \mathcal{O}(\frac 1{\sqrt{n}})$ as $n\rightarrow+\infty$.
The question of the gap between the optimisation over square-integrable functions and Schwartz functions is also left open.

Finally, using our multimode Wigner negativity witnesses for studying the interplay between Wigner negativity and entanglement~\cite{PhysRevLett.119.183601} is a very interesting prospect which we leave for future work.

\subsection*{Acknowledgments}

U.\ Chabaud acknowledges stimulating discussions with S.\ Gribling, T.\ Freiman and T.\ Vidick. P.-E.\ Emeriau acknowledges interesting discussions with A.\ Oustry, E.\ Galv\~ao and R.\ Soares Barbosa. We thank J.\ Eisert for his valuable comments on a previous version of this work and P.\ Paule for providing access to the Mathematica package for implementing Zeilberger's algorithm. U.\ Chabaud acknowledges funding provided by the Institute for Quantum Information and Matter, an NSF Physics Frontiers Center (NSF Grant PHY-1733907). F.\ Grosshans   acknowledges  funding  from  the  ANR through the ANR-17-CE24-0035 VanQuTe project.


\typeout{}
\hypersetup{breaklinks=true}
\bibliographystyle{linksen}
\bibliography{bibliography}

\providecommand{\href}[2]{#2}\begingroup\raggedright\begin{thebibliography}{10}

\bibitem{lloyd1999quantum}
S.~Lloyd and S.~L. Braunstein, ``Quantum computation over continuous
  variables,'' \href{http://dx.doi.org/10.1007/978-94-015-1258-9_2}{in {\em
  Quantum Information with Continuous Variables}}, pp.~9--17.
\newblock Springer, 1999.

\bibitem{yokoyama2013ultra}
S.~Yokoyama, R.~Ukai, S.~C. Armstrong, C.~Sornphiphatphong, T.~Kaji, S.~Suzuki,
  J.-i. Yoshikawa, H.~Yonezawa, N.~C. Menicucci, and A.~Furusawa,
  ``Ultra-large-scale continuous-variable cluster states multiplexed in the
  time domain,'' \href{http://dx.doi.org/10.1038/nphoton.2013.287}{{\em Nature
  Photonics} {\bfseries 7}, 982 (2013)}.

\bibitem{Leonhardt-essential}
U.~Leonhardt, ``Essential Quantum Optics,''.
\newblock \href{http://dx.doi.org/10.1017/CBO9780511806117}{Cambridge
  University Press}, Cambridge, UK, 1st~ed., 2010.

\bibitem{moyal1949quantum}
J.~E. Moyal, ``Quantum mechanics as a statistical theory,''
  \href{http://dx.doi.org/10.1017/S0305004100000487}{in {\em Mathematical
  Proceedings of the Cambridge Philosophical Society}}, vol.~45, pp.~99--124,
  Cambridge University Press.
\newblock 1949.

\bibitem{wigner1997quantum}
E.~P. Wigner, ``On the quantum correction for thermodynamic equilibrium,''
  \href{http://dx.doi.org/10.1007/978-3-642-59033-7_8}{in {\em Part I: Physical
  Chemistry. Part II: Solid State Physics}}, pp.~110--120.
\newblock Springer, 1997.

\bibitem{lee1991measure}
C.~T. Lee, ``Measure of the nonclassicality of nonclassical states,''
  \href{http://dx.doi.org/10.1103/PhysRevA.44.R2775}{{\em Physical Review A}
  {\bfseries 44}, R2775 (1991)}.

\bibitem{giedke2002characterization}
G.~Giedke and J.~I. Cirac, ``Characterization of {G}aussian operations and
  distillation of {G}aussian states,''
  \href{http://dx.doi.org/10.1103/PhysRevA.66.032316}{{\em Physical Review A}
  {\bfseries 66}, 032316 (2002)}.

\bibitem{eisert2002distilling}
J.~Eisert, S.~Scheel, and M.~B. Plenio, ``Distilling {G}aussian states with
  {G}aussian operations is impossible,''
  \href{http://dx.doi.org/10.1103/PhysRevLett.89.137903}{{\em Physical Review
  Letters} {\bfseries 89}, 137903 (2002)}.

\bibitem{fiuravsek2002gaussian}
J.~Fiur{\'a}{\v{s}}ek, ``Gaussian transformations and distillation of entangled
  {G}aussian states,''
  \href{http://dx.doi.org/10.1103/PhysRevLett.89.137904}{{\em Physical Review
  Letters} {\bfseries 89}, 137904 (2002)}.

\bibitem{niset2009no}
J.~Niset, J.~Fiur{\'a}{\v{s}}ek, and N.~J. Cerf, ``No-go theorem for {G}aussian
  quantum error correction,''
  \href{http://dx.doi.org/10.1103/PhysRevLett.102.120501}{{\em Physical Review
  Letters} {\bfseries 102}, 120501 (2009)}.

\bibitem{ghose2007non}
S.~Ghose and B.~C. Sanders, ``Non-Gaussian ancilla states for continuous
  variable quantum computation via Gaussian maps,''
  \href{http://dx.doi.org/10.1080/09500340601101575}{{\em Journal of Modern
  Optics} {\bfseries 54}, 855--869 (2007)}.

\bibitem{Bartlett2002}
S.~D. Bartlett, B.~C. Sanders, S.~L. Braunstein, and K.~Nemoto, ``Efficient
  Classical Simulation of Continuous Variable Quantum Information Processes,''
  \href{http://dx.doi.org/10.1103/PhysRevLett.88.097904}{{\em Physical Review
  Letters} {\bfseries 88}, 097904 (2002)}.

\bibitem{chabaud2020classical}
U.~Chabaud, G.~Ferrini, F.~Grosshans, and D.~Markham, ``Classical simulation of
  Gaussian quantum circuits with non-Gaussian input states,''
  \href{http://arxiv.org/abs/2010.14363}{{\ttfamily arXiv:2010.14363}}.

\bibitem{hudson1974wigner}
R.~L. Hudson, ``When is the Wigner quasi-probability density non-negative?,''
  \href{http://dx.doi.org/10.1016/0034-4877(74)90007-X}{{\em Reports on
  Mathematical Physics} {\bfseries 6}, 249--252 (1974)}.

\bibitem{soto1983wigner}
F.~Soto and P.~Claverie, ``When is the Wigner function of multidimensional
  systems nonnegative?,'' \href{http://dx.doi.org/10.1063/1.525607}{{\em
  Journal of Mathematical Physics} {\bfseries 24}, 97--100 (1983)}.

\bibitem{mandilara2009extending}
A.~Mandilara, E.~Karpov, and N.~Cerf, ``Extending Hudson’s theorem to mixed
  quantum states,'' \href{http://dx.doi.org/10.1103/PhysRevA.79.062302}{{\em
  Physical Review A} {\bfseries 79}, 062302 (2009)}.

\bibitem{filip2011detecting}
R.~Filip and L.~Mi{\v{s}}ta~Jr, ``Detecting quantum states with a positive
  Wigner function beyond mixtures of Gaussian states,''
  \href{http://dx.doi.org/10.1103/PhysRevLett.106.200401}{{\em Physical Review
  Letters} {\bfseries 106}, 200401 (2011)}.

\bibitem{tan2020negativity}
K.~C. Tan, S.~Choi, and H.~Jeong, ``Negativity of quasiprobability
  distributions as a measure of nonclassicality,''
  \href{http://dx.doi.org/10.1103/PhysRevLett.124.110404}{{\em Physical review
  letters} {\bfseries 124}, 110404 (2020)}.

\bibitem{titulaer1965correlation}
U.~Titulaer and R.~Glauber, ``Correlation functions for coherent fields,''
  \href{http://dx.doi.org/10.1103/PhysRev.140.B676}{{\em Physical Review}
  {\bfseries 140}, B676 (1965)}.

\bibitem{kenfack2004negativity}
A.~Kenfack and K.~{\.Z}yczkowski, ``Negativity of the Wigner function as an
  indicator of non-classicality,''
  \href{http://dx.doi.org/10.1088/1464-4266/6/10/003}{{\em Journal of Optics B:
  Quantum and Semiclassical Optics} {\bfseries 6}, 396 (2004)}.

\bibitem{Mari2012}
A.~Mari and J.~Eisert, ``Positive Wigner Functions Render Classical Simulation
  of Quantum Computation Efficient,''
  \href{http://dx.doi.org/10.1103/PhysRevLett.109.230503}{{\em Physical Review
  Lett.} {\bfseries 109}, 230503 (2012)}.

\bibitem{garcia2020efficient}
L.~Garc{\'\i}a-{\'A}lvarez, C.~Calcluth, A.~Ferraro, and G.~Ferrini,
  ``Efficient simulatability of continuous-variable circuits with large Wigner
  negativity,'' \href{http://arxiv.org/abs/2005.12026}{{\ttfamily
  arXiv:2005.12026}}.

\bibitem{preskill2018quantum}
J.~Preskill, ``Quantum Computing in the {NISQ} era and beyond,''
  \href{http://dx.doi.org/10.22331/q-2018-08-06-79}{{\em Quantum} {\bfseries
  2}, 79 (2018)}.

\bibitem{eisert2020quantum}
J.~Eisert, D.~Hangleiter, N.~Walk, I.~Roth, D.~Markham, R.~Parekh, U.~Chabaud,
  and E.~Kashefi, ``Quantum certification and benchmarking,''
  \href{http://dx.doi.org/10.1038/s42254-020-0186-4}{{\em Nature Reviews
  Physics} {\bfseries 2}, 382--390 (2020)}.

\bibitem{d2003quantum}
G.~M. D'Ariano, M.~G. Paris, and M.~F. Sacchi, ``Quantum tomography,'' {\em
  Advances in Imaging and Electron Physics} {\bfseries 128}, 206--309 (2003),
  \href{http://arxiv.org/abs/quant-ph/0302028}{{\ttfamily
  arXiv:quant-ph/0302028}}.

\bibitem{lvovsky2009continuous}
A.~I. Lvovsky and M.~G. Raymer, ``Continuous-variable optical quantum-state
  tomography,'' \href{http://dx.doi.org/10.1103/RevModPhys.81.299}{{\em Reviews
  of Modern Physics} {\bfseries 81}, 299 (2009)}.

\bibitem{chabaud2020building}
U.~Chabaud, T.~Douce, F.~Grosshans, E.~Kashefi, and D.~Markham, ``Building
  Trust for Continuous Variable Quantum States,''
  \href{http://dx.doi.org/10.4230/LIPIcs.TQC.2020.3}{in {\em 15th Conference on
  the Theory of Quantum Computation, Communication and Cryptography}}.
\newblock 2020.

\bibitem{terhal2001family}
B.~M. Terhal, ``A family of indecomposable positive linear maps based on
  entangled quantum states,''
  \href{http://dx.doi.org/10.1016/S0024-3795(00)00251-2}{{\em Linear Algebra
  and its Applications} {\bfseries 323}, 61--73 (2001)}.

\bibitem{lewenstein2000optimization}
M.~Lewenstein, B.~Kraus, J.~I. Cirac, and P.~Horodecki, ``Optimization of
  entanglement witnesses,''
  \href{http://dx.doi.org/10.1103/PhysRevA.62.052310}{{\em Physical Review A}
  {\bfseries 62}, 052310 (2000)}.

\bibitem{mari2011directly}
A.~Mari, K.~Kieling, B.~M. Nielsen, E.~Polzik, and J.~Eisert, ``Directly
  estimating nonclassicality,''
  \href{http://dx.doi.org/10.1103/physrevlett.106.010403}{{\em Physical Review
  Letters} {\bfseries 106}, 010403 (2011)}.

\bibitem{kiesel2012universal}
T.~Kiesel and W.~Vogel, ``Universal nonclassicality witnesses for harmonic
  oscillators,'' \href{http://dx.doi.org/10.1103/PhysRevA.85.062106}{{\em
  Physical Review A} {\bfseries 85}, 062106 (2012)}.

\bibitem{chabaud2020certification}
U.~Chabaud, G.~Roeland, M.~Walschaers, F.~Grosshans, V.~Parigi, D.~Markham, and
  N.~Treps, ``Certification of non-Gaussian states with operational
  measurements,'' \href{http://arxiv.org/abs/2011.04320}{{\ttfamily
  arXiv:2011.04320}}.

\bibitem{lasserre2001global}
J.-B. Lasserre, ``Global optimization with polynomials and the problem of
  moments,'' \href{http://dx.doi.org/10.1137/S1052623400366802}{{\em SIAM
  Journal on optimization} {\bfseries 11}, 796--817 (2001)}.

\bibitem{parrilo2000structured}
P.~A. Parrilo, {\em Structured semidefinite programs and semialgebraic geometry
  methods in robustness and optimization}.
\newblock PhD thesis, \href{http://dx.doi.org/10.7907/2K6Y-CH43}{California
  Institute of Technology}, 2000.

\bibitem{lasserre2011new}
J.~B. Lasserre, ``A new look at nonnegativity on closed sets and polynomial
  optimization,'' \href{http://dx.doi.org/10.1137/100806990}{{\em SIAM Journal
  on Optimization} {\bfseries 21}, 864--885 (2011)}.

\bibitem{nielsen2002quantum}
M.~A. Nielsen and I.~L. Chuang, ``Quantum Computation and Quantum Information:
  10th Anniversary Edition,''.
\newblock \href{http://dx.doi.org/10.1017/CBO9780511976667}{Cambridge
  University Press}, New York, NY, USA, 10th~ed., 2011.

\bibitem{weedbrook2012gaussian}
C.~Weedbrook, S.~Pirandola, R.~Garc{\'\i}a-Patr{\'o}n, N.~J. Cerf, T.~C. Ralph,
  J.~H. Shapiro, and S.~Lloyd, ``Gaussian quantum information,''
  \href{http://dx.doi.org/10.1103/RevModPhys.84.621}{{\em Reviews of Modern
  Physics} {\bfseries 84}, 621 (2012)}.

\bibitem{wunsche1998laguerre}
A.~W{\"u}nsche, ``Laguerre 2D-functions and their application in quantum
  optics,'' \href{http://dx.doi.org/10.1088/0305-4470/31/40/017}{{\em Journal
  of Physics A: Mathematical and General} {\bfseries 31}, 8267 (1998)}.

\bibitem{royer1977wigner}
A.~Royer, ``Wigner function as the expectation value of a parity operator,''
  \href{http://dx.doi.org/10.1103/PhysRevA.15.449}{{\em Physical Review A}
  {\bfseries 15}, 449 (1977)}.

\bibitem{banaszek1999direct}
K.~Banaszek, C.~Radzewicz, K.~W{\'o}dkiewicz, and J.~Krasi{\'n}ski, ``Direct
  measurement of the Wigner function by photon counting,''
  \href{http://dx.doi.org/10.1103/PhysRevA.60.674}{{\em Physical Review A}
  {\bfseries 60}, 674 (1999)}.

\bibitem{cahill1969density}
K.~E. Cahill and R.~J. Glauber, ``Density operators and quasiprobability
  distributions,'' \href{http://dx.doi.org/10.1103/PhysRev.177.1882}{{\em
  Physical Review} {\bfseries 177}, 1882 (1969)}.

\bibitem{husimi1940some}
K.~Husimi, ``Some formal properties of the density matrix,''
  \href{http://dx.doi.org/10.11429/ppmsj1919.22.4_264}{{\em Proceedings of the
  Physico-Mathematical Society of Japan. 3rd Series} {\bfseries 22}, 264--314
  (1940)}.

\bibitem{richter1998determination}
T.~Richter, ``Determination of photon statistics and density matrix from double
  homodyne detection measurements,''
  \href{http://dx.doi.org/10.1080/09500349808230666}{{\em Journal of Modern
  Optics} {\bfseries 45}, 1735--1749 (1998)}.

\bibitem{chabaud2020efficient}
U.~Chabaud, F.~Grosshans, E.~Kashefi, and D.~Markham, ``Efficient verification
  of Boson Sampling,'' \href{http://arxiv.org/abs/2006.03520}{{\ttfamily
  arXiv:2006.03520}}.

\bibitem{ferraro2005gaussian}
A.~Ferraro, S.~Olivares, and M.~G. Paris, ``Gaussian states in continuous
  variable quantum information,''
  \href{http://arxiv.org/abs/quant-ph/0503237}{{\ttfamily
  arXiv:quant-ph/0503237}}.

\bibitem{albarelli2018resource}
F.~Albarelli, M.~G. Genoni, M.~G. Paris, and A.~Ferraro, ``Resource theory of
  quantum non-Gaussianity and Wigner negativity,''
  \href{http://dx.doi.org/10.1103/PhysRevA.98.052350}{{\em Physical Review A}
  {\bfseries 98}, 052350 (2018)}.

\bibitem{takagi2018convex}
R.~Takagi and Q.~Zhuang, ``Convex resource theory of non-Gaussianity,''
  \href{http://dx.doi.org/10.1103/PhysRevA.97.062337}{{\em Physical Review A}
  {\bfseries 97}, 062337 (2018)}.

\bibitem{zhuang2018resource}
Q.~Zhuang, P.~W. Shor, and J.~H. Shapiro, ``Resource theory of non-Gaussian
  operations,'' \href{http://dx.doi.org/10.1103/PhysRevA.97.052317}{{\em
  Physical Review A} {\bfseries 97}, 052317 (2018)}.

\bibitem{chabaud2020stellar}
U.~Chabaud, D.~Markham, and F.~Grosshans, ``Stellar representation of
  non-Gaussian quantum states,''
  \href{http://dx.doi.org/10.1103/PhysRevLett.124.063605}{{\em Physical Review
  Letters} {\bfseries 124}, 063605 (2020)}.

\bibitem{vandenberghe1996semidefinite}
L.~Vandenberghe and S.~Boyd, ``Semidefinite programming,''
  \href{http://dx.doi.org/10.1137/1038003}{{\em SIAM review} {\bfseries 38},
  49--95 (1996)}.

\bibitem{fiuravsek2013witnessing}
J.~Fiur{\'a}{\v{s}}ek and M.~Je{\v{z}}ek, ``Witnessing negativity of Wigner
  function by estimating fidelities of catlike states from homodyne
  measurements,'' \href{http://dx.doi.org/10.1103/PhysRevA.87.062115}{{\em
  Physical Review A} {\bfseries 87}, 062115 (2013)}.

\bibitem{PhysRevLett.119.183601}
M.~Walschaers, C.~Fabre, V.~Parigi, and N.~Treps, ``Entanglement and Wigner
  Function Negativity of Multimode Non-Gaussian States,''
  \href{http://dx.doi.org/10.1103/PhysRevLett.119.183601}{{\em Physical Review
  Letters} {\bfseries 119}, 183601 (2017)}.

\bibitem{codes}
U.~Chabaud and P.-E. Emeriau, ``Zeilberger's algorithm and Hierarchy of
  semidefinite programs.'' \textit{Software Heritage} repository
  \href{https://archive.softwareheritage.org/swh:1:dir:d98f70e386783ef69bf8c2ecafdb7b328b19b7ec/}{\texttt{swh:1:dir:d98f70e386783ef69\\bf8c2ecafdb7b328b19b7ec}}
  containing the numerical tools developed for this article.

\bibitem{ourjoumtsev2006generating}
A.~Ourjoumtsev, R.~Tualle-Brouri, J.~Laurat, and P.~Grangier, ``Generating
  optical Schr{\"o}dinger kittens for quantum information processing,''
  \href{http://dx.doi.org/10.1126/science.1122858}{{\em Science} {\bfseries
  312}, 83--86 (2006)}.

\bibitem{sanders1992entangled}
B.~C. Sanders, ``Entangled coherent states,''
  \href{http://dx.doi.org/10.1103/PhysRevA.45.6811}{{\em Physical Review A}
  {\bfseries 45}, 6811 (1992)}.

\bibitem{zurek2001sub}
W.~H. Zurek, ``Sub-Planck structure in phase space and its relevance for
  quantum decoherence,'' \href{http://dx.doi.org/10.1038/35089017}{{\em Nature}
  {\bfseries 412}, 712--717 (2001)}.

\bibitem{sagnol2012picos}
G.~Sagnol and M.~Stahlberg, ``Picos, a python interface to conic optimization
  solvers,'' in {\em Proceedings of the in 21st International Symposium on
  Mathematical Programming}.
\newblock 2012.

\bibitem{mosek}
M.~ApS, {\em MOSEK Optimizer API for Python 9.2.36}, 2019.
\newblock \url{https://docs.mosek.com/9.2/pythonapi/index.html}.

\bibitem{nakata2010numerical}
M.~Nakata, ``A numerical evaluation of highly accurate multiple-precision
  arithmetic version of semidefinite programming solver: SDPA-GMP,-QD
  and-DD.,'' in {\em 2010 IEEE International Symposium on Computer-Aided
  Control System Design}, pp.~29--34, IEEE.
\newblock 2010.

\bibitem{fujisawa2002sdpa}
K.~Fujisawa, M.~Kojima, K.~Nakata, and M.~Yamashita, {\em SDPA (SemiDefinite
  Programming Algorithm) User’s Manual—Version 6.2. 0}, 2002.

\bibitem{barvinok_02}
A.~Barvinok, ``A course in convexity,'', vol.~54 of {\em Graduate Studies in
  Mathematics}.
\newblock \href{http://dx.doi.org/10.1090/gsm/054}{American Mathematical
  Society}, 2002.

\bibitem{szego1959orthogonal}
G.~Szeg{\"o}, ``Orthogonal Polynomials, revised ed,''
  \href{http://dx.doi.org/10.1090/coll/023}{in {\em American Mathematical
  Society Colloquium Publications}}, vol.~23.
\newblock 1959.

\bibitem{Nikodym1930}
O.~Nikodym, ``Sur une généralisation des intégrales de M. J. Radon,''
  \href{http://dx.doi.org/10.4064/fm-15-1-131-179}{{\em Fundamenta
  Mathematicae} {\bfseries 15}, 131--179 (1930)}.

\bibitem{guillemot1971developpements}
M.~Guillemot-Teissier, ``D{\'e}veloppements des distributions en s{\'e}ries de
  fonctions orthogonales. S{\'e}ries de Legendre et de Laguerre,'' {\em Annali
  della Scuola Normale Superiore di Pisa-Classe di Scienze} {\bfseries 25},
  519--573 (1971).

\bibitem{reed1975ii}
M.~Reed and B.~Simon, ``II: Fourier Analysis, Self-Adjointness,'', vol.~2.
\newblock Elsevier, 1975.

\bibitem{riesz1923probleme}
M.~Riesz, ``Sur le probl\`eme des moments, Troisi\`eme Note,'' {\em Ark. Mat.
  Fys} {\bfseries 16}, 1--52 (1923).

\bibitem{haviland1936momentum}
E.~Haviland, ``On the momentum problem for distribution functions in more than
  one dimension. II,'' \href{http://dx.doi.org/10.2307/2371063}{{\em American
  Journal of Mathematics} {\bfseries 58}, 164--168 (1936)}.

\bibitem{hilbert1888darstellung}
D.~Hilbert, ``{\"U}ber die darstellung definiter formen als summe von
  formenquadraten,'' \href{http://dx.doi.org/10.1007/BF01443605}{{\em
  Mathematische Annalen} {\bfseries 32}, 342--350 (1888)}.

\bibitem{gould1972combinatorial}
H.~W. Gould, ``Combinatorial Identities: A standardized set of tables listing
  500 binomial coefficient summations,''.
\newblock Morgantown, W Va, 1972.

\bibitem{zeilberger1991method}
D.~Zeilberger, ``The method of creative telescoping,''
  \href{http://dx.doi.org/10.1016/S0747-7171(08)80044-2}{{\em Journal of
  Symbolic Computation} {\bfseries 11}, 195--204 (1991)}.

\bibitem{leonhardt1995quantum}
U.~Leonhardt, ``Quantum-state tomography and discrete Wigner function,''
  \href{http://dx.doi.org/10.1103/PhysRevLett.74.4101}{{\em Physical Review
  Letters} {\bfseries 74}, 4101 (1995)}.

\bibitem{gross2006hudson}
D.~Gross, ``Hudson’s theorem for finite-dimensional quantum systems,''
  \href{http://dx.doi.org/10.1063/1.2393152}{{\em Journal of mathematical
  physics} {\bfseries 47}, 122107 (2006)}.

\bibitem{spekkens2008negativity}
R.~W. Spekkens, ``Negativity and contextuality are equivalent notions of
  nonclassicality,''
  \href{http://dx.doi.org/10.1103/PhysRevLett.101.020401}{{\em Physical Review
  Letters} {\bfseries 101}, 020401 (2008)}.

\bibitem{delfosse2017equivalence}
N.~Delfosse, C.~Okay, J.~Bermejo-Vega, D.~E. Browne, and R.~Raussendorf,
  ``Equivalence between contextuality and negativity of the Wigner function for
  qudits,'' \href{http://dx.doi.org/10.1088/1367-2630/aa8fe3}{{\em New Journal
  of Physics} {\bfseries 19}, 123024 (2017)}.

\bibitem{raussendorf2017contextuality}
R.~Raussendorf, D.~E. Browne, N.~Delfosse, C.~Okay, and J.~Bermejo-Vega,
  ``Contextuality and Wigner-function negativity in qubit quantum
  computation,'' \href{http://dx.doi.org/10.1103/PhysRevA.95.052334}{{\em
  Physical Review A} {\bfseries 95}, 052334 (2017)}.

\bibitem{howard2014contextuality}
M.~Howard, J.~Wallman, V.~Veitch, and J.~Emerson, ``Contextuality supplies the
  `magic' for quantum computation,''
  \href{http://dx.doi.org/10.1038/nature13460}{{\em Nature} {\bfseries 510},
  351 (2014)}.

\bibitem{bermejo2017contextuality}
J.~Bermejo-Vega, N.~Delfosse, D.~E. Browne, C.~Okay, and R.~Raussendorf,
  ``Contextuality as a resource for models of quantum computation with
  qubits,'' \href{http://dx.doi.org/10.1103/PhysRevLett.119.120505}{{\em
  Physical Review Letters} {\bfseries 119}, 120505 (2017)}.

\bibitem{barbosa2019continuous}
R.~S. Barbosa, T.~Douce, P.-E. Emeriau, E.~Kashefi, and S.~Mansfield,
  ``Continuous-variable nonlocality and contextuality,''
  \href{http://arxiv.org/abs/1905.08267}{{\ttfamily arXiv:1905.08267}}.

\bibitem{navascues2008convergent}
M.~Navascu{\'e}s, S.~Pironio, and A.~Ac{\'\i}n, ``A convergent hierarchy of
  semidefinite programs characterizing the set of quantum correlations,''
  \href{http://dx.doi.org/10.1088/1367-2630/10/7/073013}{{\em New Journal of
  Physics} {\bfseries 10}, 073013 (2008)}.

\bibitem{curto2008analogue}
R.~E. Curto and L.~A. Fialkow, ``An analogue of the Riesz--Haviland theorem for
  the truncated moment problem,''
  \href{http://dx.doi.org/10.1016/j.jfa.2008.09.003}{{\em Journal of Functional
  Analysis} {\bfseries 255}, 2709--2731 (2008)}.

\bibitem{henrion_14}
D.~Henrion and M.~Korda, ``Convex computation of the region of attraction of
  polynomial control systems,''
  \href{http://dx.doi.org/10.1109/TAC.2013.2283095}{{\em IEEE Transactions on
  Automatic Control} {\bfseries 59}, 297--312 (2014)}.

\bibitem{lasserre_10}
J.-B. Lasserre, ``Moments, positive polynomials and their applications,''
  \href{http://dx.doi.org/10.1142/p665}{in {\em Series on Optimization and its
  Applications}}, vol.~1.
\newblock Imperial College Press, 2009.

\end{thebibliography}\endgroup

\onecolumn\newpage
\appendix

\section{Riesz--Haviland theorem in Laguerre basis}
\label{sec:app_RHtheo}

Let $\bm\nu=(\nu_l)_{l\in\N}\in\R^{\N}$. Let us introduce the Riesz functional
\begin{equation}\label{eq:Rieszfunc}
    \begin{aligned}
        L_{\bm\nu}: \quad\R[x]\quad&\longrightarrow\quad\R\\
        P(x)=\sum_{k=0}^pp_kx^k\quad&\longmapsto\quad\sum_{k=0}^pp_k\nu_k,
    \end{aligned}
\end{equation}
which maps real polynomials to real numbers. Let $K$ be a closed subset of $\R$. We say that $L_{\bm\nu}$ is $K$-non-negative if $L_{\bm\nu}(P)\ge0$ for all $P\in\R[x]$ non-negative on $K$.
We recall the classical Riesz--Haviland theorem~\cite{riesz1923probleme,haviland1936momentum} (see, e.g.,~\cite{curto2008analogue} for a recent formulation):

\begin{theorem}[Riesz--Haviland]\label{th:RH}
The sequence $\bm\nu=(\nu_k)_{k\in\N}\in\R^{\N}$ is the sequence of moments $\int_Kx^kd\nu(x)$ of a non-negative distribution $\nu$ supported on $K$ if and only if $L_{\bm\nu}$ is $K$-non-negative.
\end{theorem}

\noindent We prove a modified version of this result in the basis of Laguerre functions. To that end, we introduce the following change of basis:

\begin{lemma}\label{lemmapp:changeofbasis}
Let $\bm\mu,\bm\nu\in\R^{\N}$. For all $m\in\N$, the following conditions are equivalent:
\begin{enumerate}[label=(\roman*)]
\item $\forall k\in\llbracket0,m\rrbracket,\quad\mu_k=\sum_{l=0}^k\nu_l\frac{(-1)^{k+l}}{l!}\binom kl$,
\item $\forall l\in\llbracket0,m\rrbracket,\quad\nu_l=\sum_{k=0}^l\mu_k\binom lkl!$.
\end{enumerate}
\end{lemma}

\noindent As a direct consequence, we retrieve the formula:
\begin{equation}\label{eq:Lagtox}
    x^l=\sum_{k=0}^l(-1)^k\binom lkl!L_k(x),
\end{equation}
for all $l\in\N$ and all $x\in\R_+$.

\begin{proof}
(i)$\Rightarrow$(ii): suppose that
\begin{equation}\label{eq:mutonuchange}
    \forall k\in\llbracket0,m\rrbracket,\quad\mu_k=\sum_{p=0}^k\nu_p\frac{(-1)^{k+p}}{p!}\binom kp.
\end{equation}
Then, for all $l\in\llbracket0,m\rrbracket$,
\begin{equation}\label{eq:derivationmutonu}
    \begin{aligned}
        \sum_{k=0}^l\mu_k\binom lkl!&=\sum_{k=0}^l\sum_{p=0}^k\nu_p\frac{(-1)^{k+p}}{p!}\binom kp\binom lkl!\\
        &=\sum_{p=0}^l\nu_p\frac{l!}{p!}\binom lp\sum_{k=p}^l(-1)^{k-p}\binom{l-p}{k-p}\\
        &=\sum_{p=0}^l\nu_p\frac{l!}{p!}\binom lp\sum_{q=0}^{l-p}(-1)^q\binom{l-p}q\\
        &=\nu_l,
    \end{aligned}
\end{equation}
where we used~\refeq{mutonuchange} in the first line and the binomial theorem in the last line which imposes $l=p$.

(ii)$\Rightarrow$(i): suppose that
\begin{equation}\label{eq:nutomuchange}
    \forall l\in\llbracket0,m\rrbracket,\quad\nu_l=\sum_{p=0}^l\mu_p\binom lpl!.
\end{equation}
Then, for all $k\in\llbracket0,m\rrbracket$,
\begin{equation}
    \begin{aligned}
        \sum_{l=0}^k\nu_l\frac{(-1)^{k+l}}{l!}\binom kl&=\sum_{l=0}^k\sum_{p=0}^l\mu_p(-1)^{k+l}\binom lp\binom kl\\
        &=\sum_{p=0}^k\mu_p(-1)^{k+p}\binom kp\sum_{l=p}^k(-1)^{l-p}\binom{k-p}{l-p}\\
        &=\sum_{p=0}^k\mu_p(-1)^{k+p}\binom kp\sum_{q=0}^{k-p}(-1)^q\binom{k-p}q\\
        &=\mu_k,
    \end{aligned}
\end{equation}
where we used~\refeq{nutomuchange} in the first line and the binomial theorem in the last line which imposes $k=p$.

\end{proof}

\noindent We may now prove the Riesz--Haviland theorem in Laguerre basis (Theorem~\ref{th:RHLaguerre} from the main text):

\setcounter{theorem}{0}

\begin{theorem}
Let $\bm\mu=(\mu_k)_{k\in\N}\in\R^{\N}$. The sequence $\bm\mu$ is the sequence of Laguerre moments $\int_{\R_+}\mathcal L_k(x)d\mu(x)$ of a non-negative distribution $\mu$ supported on $\R_+$ if and only if
\begin{equation}
    \forall m\in\N,\forall g\in\mathcal R_{m,+}(\R_+),\;\braket{f_{\bm\mu},g}\ge0.
\end{equation}
\end{theorem}

\begin{proof}
Let $\bm\mu=(\mu_k)_{k\in\N}\in\R^{\N}$, and suppose that the sequence $\bm\mu$ is the sequence of Laguerre moments $\int_{\R_+}\mathcal L_k(x)d\mu(x)$ of a non-negative distribution $\mu$ supported on $\R_+$.

Let $m\ge0$ and let $g=\sum_{k=0}^mg_k\mathcal L_k\in\mathcal R_{m,+}(\R_+)$. The distribution $\mu$ is non-negative, so $\braket{\mu,g}\ge0$. Moreover,
\begin{equation}
    \begin{aligned}
        \braket{f_{\bm\mu},g}&=\sum_{k=0}^m{\mu_kg_k}\\
        &=\int_{\R_+}\sum_{k=0}^m{g_k\mathcal L_kd\mu}\\
        &=\braket{\mu,g}.
    \end{aligned}
\end{equation}
Hence, for all $m\in\N$ and all $g\in\mathcal R_{m,+}(\R_+)$, $\braket{f_{\bm\mu},g}\ge0$. 

Conversely, let $\bm\mu=(\mu_k)_{k\in\N}\in\R^{\N}$, and suppose that for all $m\in\N$ and all $g\in\mathcal R_{m,+}(\R_+)$, $\braket{f_{\bm\mu},g}\ge0$.
We define the sequence $\bm\nu=(\nu_l)_{l\in\N}\in\R^{\N}$ by
\begin{equation}\label{eq:nu_ldef}
    \nu_l:=\sum_{k=0}^l\mu_k\binom lkl!,
\end{equation}
for all $l\in\N$. 

Let $m\in\N$ and let $P(x)=\sum_{l=0}^mp_lx^l$ be a non-negative polynomial over $\R_+$. By~\refeq{Lagtox}, for all $x\in\R_+$,
\begin{equation}\label{eq:PinLag}
    \begin{aligned}
        P(x)&=\sum_{l=0}^mp_l\sum_{k=0}^l(-1)^k\binom lkl!L_k(x)\\
        &=\sum_{k=0}^m(-1)^kL_k(x)\left(\sum_{l=k}^mp_l\binom lkl!\right).
    \end{aligned}
\end{equation}
Let $g_P(x):=P(x)e^{-\frac x2}$, for $x\in\R_+$. We have $g_P\in\mathcal R_{m,+}(\R_+)$, so $\braket{f_{\bm\mu},g_P}\ge0$. Moreover, with~\refeq{PinLag}
\begin{equation}
    \begin{aligned}
        \braket{f_{\bm\mu},g_P}&=\sum_{k=0}^m\mu_k\left(\sum_{l=k}^mp_l\binom lkl!\right)\\
        &=\sum_{l=0}^m\left(\sum_{k=0}^l\mu_k\binom lkl!\right)p_l\\
        &=L_{\bm\nu}(P)
    \end{aligned}
\end{equation}
where we used~\refeq{nu_ldef} and the definition of the Riesz functional from~\refeq{Rieszfunc} in the last line. 
In particular, $L_{\bm\nu}(P)\ge0$, and this holds for all non-negative polynomials $P$ over $\R_+$. By the Riesz--Haviland theorem (Theorem~\ref{th:RH}), this implies that $\bm\nu$ is the sequence of moments of a non-negative distribution $\nu$ supported on $\R_+$.

Furthermore, we have that for all $k\in\N$:
\begin{equation}
    \begin{aligned}
        \mu_k&=\sum_{l=0}^k\nu_l\frac{(-1)^{k+l}}{l!}\binom kl\\
        &=\sum_{l=0}^k\frac{(-1)^{k+l}}{l!}\binom kl\int_{\R_+}x^ld\nu(x)\\
        &=\int_{\R_+}(-1)^k\sum_{l=0}^k\frac{(-1)^l}{l!}\binom klx^ld\nu(x)\\
        &=\int_{\R_+}(-1)^kL_k(x)d\nu(x)\\
        &=\int_{\R_+}\mathcal L_k(x)e^{\frac x2}d\nu(x)\\
    \end{aligned}
\end{equation}
where we used Lemma~\ref{lemmapp:changeofbasis} in the first line. Hence, $\bm\mu$ is the sequence of Laguerre moments of the distribution $\mu(x):=e^{\frac x2}\nu(x)$ supported on $\R_+$, which is non-negative since $\nu$ is non-negative.

\end{proof}

\newpage
\section{Theory for infinite-dimensional linear programs}
\label{sec:app_std_form}

This appendix is dedicated to expressing formally our linear program as presented in \cite[IV--(6.1)]{barvinok_02} so that readers unfamiliar with global optimisation may better understand why \refprog{LP} is indeed an infinite-dimensional linear program and how to derive its dual program. 
We will derive it for the problem over square-integrable functions. The exact same treatment can be applied for the case of optimising over Schwartz functions which will involve $\mathcal S(\R_ +)$ and its dual space $\mathcal S'(\R_+)$, the space of tempered distributions.
We recall our initial program \refprog{LP}:
\leqnomode
\begin{flalign*}
    \label{prog:LP_recall}
        \tag*{(LP$_n^{L^2}$)}
        \hspace{4cm}
        \left\{
        \begin{aligned}
            & & & \Sup{(F_k)_{k \in \N} \in \R^{\N}} F_n  \\
            & \text{subject to} & & \sum_k F_k = 1 \\
            & \text{and} & & \forall k \in \N, \quad F_k \geq 0 \\
            & \text{and} & & \forall x \in \R_+, \quad \sum_k F_k \mathcal L_k(x) \geq 0.
        \end{aligned}
    \right. &&
\end{flalign*}
\reqnomode

Let us introduce the spaces:
\begin{itemize}
    \item $E_1 = \ell^2 \times L^2(\R_ +)$\footnote{Recall that via expansion on a basis of $L^2(\R_ +)$, $L^2(\R_ +)$ and $\ell^2$ are isomomorphic and that the family of Laguerre functions forms a basis of $L^2(\R_ +)$.}.
    \item $F_1 = \ell^2 \times L^2(\R_ +)$\footnote{The spaces $L^2(\R_ +)$ and ${L^2}'(\R_ +)$ are isomomorphic by the Radon--Nikodym theorem. We associate a measure in the dual of ${L^2}(\R_+)$ with the Lebesgue measure on $\R_+$ times the corresponding function in $L^2(\R_+)$.} the dual space of $E_1$.
    \item $E_2 = \R \times L^2(\R_ +)$.
    \item $F_2 = \R \times L^2(\R_ +)$ the dual space of $E_2$.
\end{itemize}
We also define the dualities $\langle \dummy , \dummy \rangle_1 : E_1 \times F_1 \longrightarrow \R$ and $\langle \dummy , \dummy \rangle_2 : E_2 \times F_2 \longrightarrow \R$ as follows:
\begin{equation}
    \begin{aligned}
    & \forall e_1 = ((u_k),f) \in E_1, \forall f_1 = ((v_k),\mu) \in F_1, & & \langle e_1,f_1 \rangle_1 \defeq \sum_k u_k v_k + \int_{\R_ +}{f}{d\mu},  \\ 
    & \forall e_2 = (x,f) \in E_2, \forall f_2 = (y,\mu) \in F_2, & & \langle e_2,f_2 \rangle_2 \defeq  xy + \int_{\R_ +}{f}{d\mu}.
    \end{aligned}
\end{equation}
Let $\fdec{A}{E_1}{E_2}$ be the following linear transformation:
\begin{equation}
    \forall e_1=((u_k),f) \in E_1, \quad A(e_1) \defeq \left(\sum_k u_k,x \in \R_ + \mapsto f(x) - \sum_k u_k \mathcal L_k(x)\right),
\end{equation}
and $\fdec{A^*}{F_2}{F_1}$ be defined as:
\begin{equation}
    \forall f_2=(y,\mu) \in F_2, \quad A^*(f_2) \defeq \left( (y-\int_{\R_ +}{\mathcal L_k}{d\mu})_{k \in \N},\mu \right).
\end{equation}
We can easily verify that $A^*$ is the dual transformation of $A$, i.e., $\forall e_1 \in E_1, \forall f_2 \in F_2$ we have $\langle A(e_1),f_2 \rangle_2 = \langle e_1, A^*(f_2) \rangle_1 $.

Recall that $L^2_+(\R_+)$ is the cone of non-negative functions in $L^2(\R_+)$ and $\ell^2_+$ the cone of sequences in $\ell^2$ with non-negative coefficients.
We will optimise in the convex cones $K_1 = \ell^2_+ \times L^2_+(\R_+) \subset E_1$ and $K_2 = \{ 0 \}$. The dual cones are then respectively: $K_1^* = \{f_1 \in F_1: \forall e_1 \in K_1, \; \langle e_1,f_1 \rangle \geq 0 \}$ and $K_2^* = F_2$. 

We can now rewrite the problem \refprog{LP} as a standard linear program in convex cones. We choose the vector function in the objective to be $c_n = ( (\delta_{kn})_k, \mathbf{0}) \in F_1$ and we also set $b = (1,\mathbf{0}) \in E_2$ for the constraints. The standard form of~\refprog{LP_recall} in the sense of \cite{barvinok_02} can be written as follows:
\leqnomode
\begin{flalign*}
    \label{prog:LPstdform}
    \tag*{(LP$_n^{L^2}$)}
    \hspace{2.8cm}\left\{
    \begin{aligned}
        &  & & \sup_{e_1 \in E_1} \langle e_1,c_n \rangle_1\\
        & \text{subject to} & & A(e_1) = b \\
        & \text{and} & & e_1 \geq_{K_1} 0. \\
    \end{aligned}
    \right. &&
\end{flalign*}
The standard form of the dual \refprog{DLP} of problem \refprog{LPstdform} can be expressed as follows:
\begin{flalign*}
    \label{prog:DLPstdform}
    \tag*{(D-LP$_n^{L^2}$)}
    \hspace{2.8cm}\left\{
    \begin{aligned}
        & & & \inf_{f_2 \in F_2} \langle b,f_2 \rangle_2 \\
        & \text{subject to} & & A^*(f_2) \geq_{K_1^*} c_n, \\
    \end{aligned}
    \right. &&
\end{flalign*}
which can be expanded as:
\begin{flalign*}
    \label{prog:DLP_recall}
    \tag*{(D-LP$_n^{L^2}$)}
    \hspace{2.8cm}\left\{
        \begin{aligned}
            & & & \Inf{\substack{y \in \R \\ \mu \in {L^2}'(\R_+)}} y  \\
            & \text{subject to} & & \forall k\in \N, \quad y \geq \int_{\R_+}{\mathcal L_k}{d\mu} \\
            & \text{and} & & y \geq 1 + \int_{\R_+}{\mathcal L_n}{d\mu} \\
            & \text{and} & & \forall f \in L^2_+(\R_+),\quad \langle \mu,f \rangle=\int_{\R_+}fd\mu \geq  0. \\
        \end{aligned}
    \right. &&
\end{flalign*}
\reqnomode

\noindent Note that a similar derivation holds for the more general form where one uses a linear combination of fidelities with Fock states. The displaced Fock states version can be obtained by classical post-processing as detailed in section~\ref{sec:Wigner}. For $n \in \N^*$ and some vector $\bm a = (a_1,a_2, \dots, a_n)\in[0,1]^n$, the computation of:
\begin{equation}
    \omega_{\bm a}:=\sup_{\substack{\rho\in\mathcal D(\mathcal H)\\W_\rho \geq 0}}\Tr\left(\hat\Omega_{\bm a,0}\rho\right),
\end{equation}
can be expressed as:
\leqnomode
\begin{flalign*}
    \label{prog:LP_a}
        \tag*{($\text{LP}_{\bm a}^{L^2}$)}
        \hspace{2.8cm}
        \left\{
        \begin{aligned}
            & & & \Sup{(F_k)_{k \in \N} \in \R^{\N}} \sum_{k=1}^n a_k F_k  \\
            & \text{subject to} & & \sum_k F_k = 1 \\
            & \text{and} & & \forall k \in \N, \quad F_k \geq 0 \\
            & \text{and} & & \forall x \in \R_+, \quad \sum_k F_k \mathcal L(x) \geq 0.
        \end{aligned}
    \right. &&
\end{flalign*}
Its dual reads:
\begin{flalign*}
    \label{prog:DLP_a}
    \tag*{($\text{D-LP}_{\bm a}^{L^2}$)}
    \hspace{2.8cm}\left\{
        \begin{aligned}
            & & & \Inf{\substack{y \in \R \\ \mu \in {L^2}'(\R_+)}} y  \\
            & \text{subject to} & & \forall k \le n \in \N, \quad y \geq a_k + \int_{\R_+}{\mathcal L_k}{d\mu} \\
            & \text{and} & & \forall k > n \in \N, \quad y \geq \int_{\R_+}{\mathcal L_k}{d\mu} \\
            & \text{and} & & \forall f \in L^2_+(\R_+), \quad\langle \mu,f \rangle=\int_{\R_+}fd\mu \geq  0. \\
        \end{aligned}
    \right. &&
\end{flalign*}
\reqnomode

\newpage
\section{Proof of technical lemmas}
\label{sec:app_prooflemmas}

\setcounter{lemmapp}{2}

In this section we prove the technical lemmas from section~\ref{sec:SDP}. For completeness, we include the proof of Lemma~\ref{lem:pospolyR} below:

\begin{lemmapp}[\cite{hilbert1888darstellung}]
Let $p\in\mathbb N$ and let $P$ be a univariate polynomial of degree $2p$. Let $X=(1,x,\dots,x^p)$ be the vector of monomials. Then, $P$ is non-negative over $\R$ if and only if there exists a sum of squares decomposition for $P$, i.e., a real $(p+1)\times(p+1)$ positive semidefinite matrix $Q$ such that for all $x\in\mathbb R$,
\begin{equation}
    P(x)=X^TQX.
\end{equation}
\end{lemmapp}

\begin{proof} If for all $x\in\mathbb R$, $P(x)=X^TQX$ with $X=(1,x,\dots,x^p)$ for $Q$ positive semidefinite, then $P$ is clearly non-negative over $\R$.

Conversely, suppose the univariate polynomial $P$ of degree $2p$ is non-negative over $\R$. It can thus be written as a sum of squares of polynomials of degree at most $p$ (e.g., by considering the factorisation of $P$ and writing each term in the product as a sum of squares, given that its zeros on the real line have even multiplicity and that each complex zero is associated to a conjugate zero with same multiplicity): for all $x\in\mathbb R$,
\begin{equation}
P(x)=\sum_i S_i^2(x),
\end{equation}
for some real polynomials $S_i$ of degree at most $p$. Then, for some vectors of coefficients $\bm s_i\in\mathbb R^{p+1}$ we have
\begin{equation}
    \begin{aligned}
P(x)&=\sum_i(\bm s_i^TX)^2\\
    &=\sum_i(X^T\bm s_i)(\bm s_i^TX)\\
    &=X^T\left(\sum_i\bm s_i\bm s_i^T\right)X.
    \end{aligned}
\end{equation}
Setting $Q:=\sum_i\bm s_i\bm s_i^T \succeq 0$ completes the proof.
\end{proof} 

\noindent We now turn to the proof of Lemma~\ref{lem:pospolyR+}:

\begin{lemmapp}\label{lemmapp:pospolyR+}
Non-negative polynomials on $\R_+$ can be written as sums of polynomials of the form $\sum_{l=0}^px^l\sum_{i+j=2l}y_iy_j$, where $p\in\mathbb N$ and $y_i\in\mathbb R$, for all $0\le i\le p$.
\end{lemmapp}

\begin{proof}
Let $P$ be a univariate polynomial of degree $p$ which is non-negative on $\R_+$.
Writing $X=(1,x,\dots,x^p)$, the polynomial $x\mapsto P(x^2)$ of degree $2p$ is non-negative on $\R$, so by Lemma~\ref{lem:pospolyR} there exists a real positive semidefinite matrix $Q=(Q_{ij})_{0\le i,j\le p}$ such that for all $x\in\mathbb R$:
\begin{equation}
    \begin{aligned}
        P(x^2)&=X^TQX\\
              &=\sum_{k=0}^{2p}x^k\sum_{i+j=k}Q_{ij}\\
              &=\sum_{l=0}^px^{2l}\sum_{i+j=2l}Q_{ij},
    \end{aligned}
\end{equation}
where the last line comes from the fact that $x\mapsto P(x^2)$ has no monomial of odd degree. Hence, for all $x\in\mathbb R_+$,
\begin{equation}
    P(x)=\sum_{l=0}^px^l\sum_{i+j=2l}Q_{ij}.
\end{equation}
$Q$ is a real $(p+1)\times(p+1)$ positive semidefinite matrix, so via Cholesky decomposition
\begin{equation}
    Q=\sum_{k=0}^p\bm y^{(k)}\bm y^{(k)T},
\end{equation}
where $\bm y^{(k)}\in\R^{p+1}$ for all $k\in\llbracket 0,p \rrbracket$. We finally obtain, for all $x\in\mathbb R_+$,
\begin{equation}
    \begin{aligned}
        P(x)&=\sum_{l=0}^px^l\sum_{i+j=2l}\sum_{k=0}^p\left(\bm y^{(k)}\bm y^{(k)T}\right)_{ij}\\
            &=\sum_{k=0}^p\left(\sum_{l=0}^px^l\sum_{i+j=2l}\bm y_i^{(k)}\bm y_j^{(k)}\right).
    \end{aligned}
\end{equation}
\end{proof}

\noindent We recall a few definitions from the main text.
For $\bm s\in\R^{\N}$, we define the associated formal series of Laguerre functions:
\begin{equation}
    f_{\bm s}:=\sum_{k\ge0}s_k\mathcal L_k,
\end{equation}
where for all $x\in\R_+$, $\mathcal L_k(x)=(-1)^kL_k(x)e^{-\frac x2}$, with $L_k(x)=\sum_{l=0}^k\frac{(-1)^l}{l!}\binom klx^l$ the $k^{th}$ Laguerre polynomial. For $m\in\mathbb N$, we also define the associated matrix $A_{\bm s}$ by
\begin{equation}
    (A_{\bm s})_{0\le i,j\le m} = \begin{cases} 
      \sum_{k=0}^l s_k \binom lkl! &\text{when } i+j=2l, \\
      0 &\text{otherwise.}
   \end{cases}
\end{equation}
For all $m\in\mathbb N$, the set of series of Laguerre functions over $\R_+$ truncated at $m$ is denoted $\mathcal R_m(\R_+)$, and $\mathcal R_{m,+}(\R_+)$ denotes its subset of non-negative elements.

\begin{lemmapp}\label{lemmapp:momentmatrix}
Let $m\in\mathbb N$ and let $\bm s\in\R^{\N}$. The following propositions are equivalent:
\begin{enumerate}[label=(\roman*)]
\item $\forall g\in\mathcal R_{m,+}(\R_+),\;\braket{f_{\bm s},g}\ge0$,
\item $A_{\bm s}\succeq0$.
\end{enumerate}
\end{lemmapp}

\begin{proof}
By Lemma~\ref{lemmapp:pospolyR+}, any non-negative polynomial over $\R_+$ of degree less or equal to $m$ can be expressed as a sum of polynomials of the form $\sum_{l=0}^mx^l\sum_{i+j=2l}y_iy_j$, where $Y=(y_0,\dots,y_m)\in\mathbb R^{m+1}$. Hence, any non-negative truncated Laguerre series (the elements of $\mathcal R_{m,+}(\R_+)$) can be expressed as as sum of terms of the form $e^{-\frac x2}\sum_{l=0}^mx^l\sum_{i+j=2l}y_iy_j$. By linearity, it is sufficient to check that the scalar products with these expressions are non-negative.

For all $k\in\mathbb N$ we have
\begin{equation}\label{eq:sk}
    s_k=\int_{\R_+}\mathcal L_k(x)f_{\bm s}(x)dx.
\end{equation}
Thus,
\begin{equation}
    \begin{aligned}
          A_{\bm s}\succeq0&\Leftrightarrow\forall Y\in\R^{m+1},\;Y^TA_{\bm s}Y\ge0\\
                           &\Leftrightarrow\forall Y\in\R^{m+1},\;\sum_{i,j=0}^my_iy_j(A_{\bm s})_{ij}\ge0\\
                           &\Leftrightarrow\forall Y\in\R^{m+1},\;\sum_{l=0}^m\sum_{i+j=2l}^my_iy_j\sum_{k=0}^ls_k\binom lkl!\ge0\\
                           &\Leftrightarrow\forall Y\in\R^{m+1},\;\int_{\R_+}\sum_{l=0}^m\sum_{i+j=2l}^my_iy_j\sum_{k=0}^l\binom lkl!\mathcal L_k(x)f_{\bm s}(x)dx\ge0\\
                           &\Leftrightarrow\forall Y\in\R^{m+1},\;\int_{\R_+}\sum_{l=0}^m\sum_{i+j=2l}^my_iy_j\sum_{k=0}^l(-1)^k\binom lkl!L_k(x)e^{-\frac x2}f_{\bm s}(x)dx\ge0\\
                           &\Leftrightarrow\forall Y\in\R^{m+1},\;\int_{\R_+}\left(e^{-\frac x2}\sum_{l=0}^mx^l\sum_{i+j=2l}^my_iy_j\right)f_{\bm s}(x)dx\ge0\\
                           &\Leftrightarrow\forall Y\in\R^{m+1},\;\left\langle f_{\bm s},x\mapsto e^{-\frac x2}\sum_{l=0}^mx^l\sum_{i+j=2l}^my_iy_j\right\rangle\ge0\\
                           &\Leftrightarrow\forall g\in\mathcal R_{m,+}(\R_+),\;\braket{f_{\bm s},g}\ge0,
    \end{aligned}
\end{equation}
where we used~\refeq{sk} in the fourth line and~\refeq{Lagtox} in the sixth line.
\end{proof}

\newpage
\section{Dual semidefinite programs}
\label{sec:app_dualSDP}

In this section, we detail the derivation of the dual semidefinite programs \refprog{upperDSDPn} and \refprog{lowerDSDPn}. The generalisations for \refprog{upperSDP} and \refprog{lowerSDP} are straightforward.

A standard form for a semidefinite program is given by~\cite{vandenberghe1996semidefinite}:
\leqnomode
\begin{flalign*}
   \label{prog:SDP_std}
    \tag*{(SDP)}
    \hspace{2.8cm}\left\{
        \begin{aligned}
            & & & \Sup{X\in\text{Sym}_N} \Tr(C^TX)\\
            & \text{subject to} & & \forall i \in \llbracket 1,M \rrbracket,\quad\Tr(B^{(i)}X)=b_i \\
            & \text{and } & & X \succeq 0,
        \end{aligned}
    \right. &&
\end{flalign*}
\reqnomode
where $M,N\in\mathbb N$, $\bm b=(b_1,\dots,b_M)\in\R^M$, $C\in\text{Sym}_N$, and $B^{(i)}\in\text{Sym}_N$ for all $i \in \llbracket 1,M \rrbracket$. Its dual semidefinite program reads:
\leqnomode
\begin{flalign*}
   \label{prog:DSDP_std}
    \tag*{(D-SDP)}
    \hspace{2.8cm}\left\{
        \begin{aligned}
            & & & \Inf{\bm y\in\R^M}\bm b^T\bm y\\
            & \text{subject to} & & \sum_{i=1}^My_iB^{(i)} \succeq C.
        \end{aligned}
    \right. &&
\end{flalign*}
\reqnomode
%

\subsection{Dual program for the semidefinite relaxations}

We fix $m\ge n$ and we recall below the expression of \refprog{upperSDPn}:
\leqnomode
\begin{flalign*}
   \label{prog:upperSDPn_recall}
    \tag*{$(\text{SDP}^{m,\geq}_n)$}
    \hspace{2.8cm}\left\{
        \begin{aligned}
            & & & \Sup{\substack{A\in\text{Sym}_{m+1}\\ \bm F\in\R^{m+1}}} F_n  \\
            & \text{subject to} & & \sum_{k=0}^{m} F_k = 1  \\
            & \text{and} & & \forall k\le m, \; F_k \geq 0 \\
            & \text{and} & & \forall l\le m,\forall i+j=2l,\quad A_{ij}=\sum_{k=0}^l F_k \binom lkl!\\
            & \text{and} & & \forall l\in\llbracket1,m\rrbracket,\forall i+j=2l-1,\quad A_{ij}=0\\
            & \text{and} & & A \succeq 0.
        \end{aligned}
    \right. &&
\end{flalign*}
\reqnomode
To put \refprog{upperSDPn_recall} in the standard form \refprog{SDP_std} we set $N=2\times(m+1)$ and $M=1+(m+1)^2$. For all $r\in\N^*$ and all $i,j\in\llbracket 1,r \rrbracket$, let $E^{(i,j)}_r$ be the $r\times r$ matrix whose $(i,j)$ entry is $1$ and all other entries are $0$. We set
\begin{equation}
    \begin{aligned}
    &X=\text{Diag}_{k=0,\dots,m}(F_k)\oplus A\in\text{Sym}_N,\\
    &C=E^{(n,n)}_N=E^{(n,n)}_{m+1}\oplus\mymathbb 0_{m+1}\in\text{Sym}_N,\\
    &\bm b=(1,0,0,\dots,0)\in\R^M,\\
    &B^{(0)}=\mymathbb 1_{m+1}\oplus\mymathbb 0_{m+1}\in\text{Sym}_N,\\
    &\forall i,j\in\llbracket0,m\rrbracket,\quad B^{(i,j)}=\begin{cases}\text{Diag}_{k=0,\dots,m}\left(-\binom lkl!\right)\oplus \left(\frac12E^{(i,j)}_{m+1}+\frac12E^{(j,i)}_{m+1}\right)&\text{when } i+j=2l,\\\mymathbb 0_{m+1}\oplus \left(\frac12E^{(i,j)}_{m+1}+\frac12E^{(j,i)}_{m+1}\right) &\text{otherwise,}\end{cases}
    \end{aligned}
\end{equation}
with the convention $\binom lk=0$ when $k>l$. The matrix $B^{(0)}$ corresponds to the constraint $\sum_{k=0}^mF_k=1$, and we denote the corresponding dual variable $y\in\R$. Similarly, the matrices $B^{(i,j)}$ correspond to the $(m+1)^2$ constraints defining the symmetric matrix $A$, and we denote the corresponding dual variables $Q_{ij}\in\R$, with $Q_{ij}=Q_{ji}$ for all $i,j\in\llbracket 0,m \rrbracket$. We write $Q=(Q_{ij})_{0\le i,j\le m}$.
The standard form \refprog{DSDP_std} of the dual program \refprog{upperDSDPn} thus reads:
\leqnomode
\begin{flalign*}
   \label{prog:upperDSDPn_std}
    \tag*{(D-SDP$_n^{m,\ge}$)}\hspace{2.8cm}\left\{
        \begin{aligned}
            & & & \Inf{\substack{Q\in\text{Sym}_{m+1}\\y\in\R}} y\\
            & \text{subject to} & & \text{Diag}_{k=0,\dots,m}\left[y-\sum_{l=0}^m\sum_{i+j=2l}Q_{ij}\binom lkl!\right]\oplus\frac12Q \succeq E^{(n,n)}_{m+1}\oplus\mymathbb 0_{m+1}.\hspace{-4cm}
        \end{aligned}
    \right. &&
\end{flalign*}
\reqnomode 
Due to the block-diagonal structure of the matrices involved, the positive semidefinite constraint above is equivalent to the following constraints:
\begin{equation}
   \left\{ \begin{aligned}
    &y\ge1+\sum_{l=0}^m\sum_{i+j=2l}Q_{ij}\binom lnl!,\\
    &\forall k\in\llbracket0,m\rrbracket\setminus\{n\},\quad y\ge\sum_{l=0}^m\sum_{i+j=2l}Q_{ij}\binom lkl!,\\
    &Q\succeq0.
    \end{aligned}\right.
\end{equation}
For $k\in\llbracket0,m\rrbracket$, we define
\begin{equation}
    \mu_k:=\sum_{l=0}^m\sum_{i+j=2l}Q_{ij}\binom lkl!\in\R.
\end{equation}
We obtain the program:
\leqnomode
\begin{flalign*}
   \label{prog:upperDSDPn_std2}
    \tag*{(D-SDP$_n^{m,\ge}$)}\hspace{2.8cm}\left\{
        \begin{aligned}
            & & & \Inf{\substack{Q\in\text{Sym}_{m+1}\\y,\bm\mu\in\R\times\R^{m+1}}} y\\
            & \text{subject to} & & y\ge1+\mu_n\\
            &\text{and} & & \forall k\in\llbracket0,m\rrbracket\setminus\{n\},\quad y\ge\mu_k\\
            &\text{and} & & \forall k\in\llbracket0,m\rrbracket,\quad \mu_k=\sum_{l=0}^m\sum_{i+j=2l}Q_{ij}\binom lkl!\hspace{-2cm}\\
            &\text{and} & & Q\succeq0.     
        \end{aligned}
    \right. &&
\end{flalign*}
\reqnomode
Finally, in order to obtain the form of \refprog{upperDSDPn} from the main text we prove the following result:

\begin{lemma}\label{lem:revert} Let $\bm u,\bm v\in\R^{m+1}$. The following propositions are equivalent:
\begin{enumerate}[label=(\roman*)]
\item $\forall k\in\llbracket0,m\rrbracket,\quad u_k=\sum_{l=0}^mv_l\binom lkl!$,
\item $\forall l\in\llbracket0,m\rrbracket,\quad v_l=\sum_{k=l}^m \frac{(-1)^{l+k}}{l!} \binom kl u_k$.
\end{enumerate}
\end{lemma}

\begin{proof} The proof is similar to that of Lemma~\ref{lemmapp:changeofbasis}. Note that we could start the first sum at $l=k$ since $\binom lk = 0 $ for $l < k$ but for convenience we start it at $l=0$.

(i)$\Rightarrow$(ii): suppose that
\begin{equation}\label{eq:mutoQ}
    \forall k\in\llbracket0,m\rrbracket,\quad u_k=\sum_{p=0}^mv_l\binom pkp!.
\end{equation}
Then, for all $l\in\llbracket0,m\rrbracket$,
\begin{equation}
    \begin{aligned}
        \frac{(-1)^l}{l!}\sum_{k=l}^m(-1)^k\binom kl u_k&= \frac{(-1)^l}{l!}\sum_{k=l}^m(-1)^k\binom kl\sum_{p=0}^mv_p\binom pkp!\\
        &=\sum_{p=0}^mv_p\frac{p!}{l!}\sum_{k=l}^m(-1)^{k+l}\binom pk\binom kl\\
        &=\sum_{p=l}^mv_p\frac{p!}{l!}\sum_{k=l}^p(-1)^{k+l}\binom pk\binom kl\\
        &=\sum_{p=l}^mv_p\frac{p!}{l!}\sum_{k=l}^p(-1)^{k+l}\frac{p!k!}{k!(p-k)!l!(k-l)!}\\
        &=\sum_{p=l}^mv_p\frac{p!}{l!}\binom pl\sum_{q=0}^{p-l}(-1)^q\binom{p-l}q\\
        &=v_l,
    \end{aligned}
\end{equation}
where we used~\refeq{mutoQ} in the first line, the fact that $\binom pk=0$ if $k>p$ in the third line, $q:=k-l$ in the fifth line, and the binomial theorem in the last line which imposes $p=l$.

(ii)$\Rightarrow$(i): suppose that
\begin{equation}\label{eq:Qtomu}
    \forall l\in\llbracket0,m\rrbracket,\quad v_l=\frac{(-1)^l}{l!}\sum_{p=l}^m(-1)^p\binom pl u_p.
\end{equation}
Then, for all $k\in\llbracket0,m\rrbracket$,
\begin{equation}
    \begin{aligned}
        \sum_{l=0}^mv_l\binom lkl!&=\sum_{l=0}^m\frac{(-1)^l}{l!}\sum_{p=l}^m(-1)^p\binom pl u_p\binom lkl!\\
        &=\sum_{p=0}^mu_p(-1)^p\sum_{l=0}^p(-1)^l\binom pl\binom lk\\
        &=\sum_{p=k}^mu_p(-1)^p\sum_{l=k}^p(-1)^l\binom pl\binom lk\\
        &=\sum_{p=k}^mu_p(-1)^p\sum_{l=k}^p(-1)^l\frac{p!l!}{l!(p-l)!k!(l-k)!}\\
        &=\sum_{p=k}^mu_p(-1)^{p-k}\binom pk\sum_{q=0}^{p-k}(-1)^q\binom{p-k}q\\
        &=u_k,
    \end{aligned}
\end{equation}
where we used~\refeq{Qtomu} in the first line, the fact that $\binom lk=0$ if $k>l$ in the third line, $q:=l-k$ in the fifth line, and the binomial theorem in the last line which imposes $p=k$.

\end{proof}

\noindent Combining Lemma~\ref{lem:revert} for $u_k=\mu_k$ and $v_l=\sum_{i+j=2l}Q_{ij}$ for all $k,l\in\llbracket0,m\rrbracket$ with the previous expression of \refprog{upperDSDPn_std2} we finally obtain:
\leqnomode
\begin{flalign*}
    \tag*{(D-SDP$_n^{m,\ge}$)}\hspace{2.8cm}\left\{
        \begin{aligned}
            & & & \Inf{\substack{Q\in\text{Sym}_{m+1}\\y,\bm\mu\in\R\times\R^{m+1}}} y\\
            & \text{subject to} & & y\ge1+\mu_n\\
            &\text{and} & & \forall k\in\llbracket0,m\rrbracket\setminus\{n\},\quad y\ge\mu_k\\
            &\text{and} & & \forall l\in\llbracket0,m\rrbracket,\quad\sum_{i+j=2l}Q_{ij}=\frac{(-1)^l}{l!}\sum_{k=l}^m(-1)^k\binom kl \mu_k\hspace{-2cm}\\
            &\text{and} & & Q\succeq0.     
        \end{aligned}
    \right. &&
\end{flalign*}
\reqnomode
Note that the constraint $y\ge1+\mu_n$ implies the constraint $y\ge\mu_n$.

\subsection{Dual program for the semidefinite restrictions}

The derivation is analogous to that of the previous section.
We fix $m\ge n$ and we recall below the expression of \refprog{lowerSDPn}:
\leqnomode
\begin{flalign*}
   \label{prog:lowerSDPn_recall}
    \tag*{$(\text{SDP}^{m,\leq}_n)$}
    \hspace{2.8cm}\left\{
        \begin{aligned}
            & & & \Sup{\substack{Q\in\text{Sym}_{m+1}\\\bm{F}\in\R^{m+1}}} F_n  \\
            & \text{subject to} & & \sum_{k=0}^{m} F_k = 1  \\
            & \text{and} & & \forall k\le m, \; F_k \geq 0 \\
            & \text{and} & & \forall l \in \llbracket 1,m \rrbracket, \quad\sum_{i+j=2l-1}Q_{ij} = 0\\
            & \text{and} & & \forall l\le m,\sum_{i+j=2l}Q_{ij} = \frac{(-1)^l}{l!}\sum_{k=l}^{m} (-1)^k  \binom kl F_k\\
            & \text{and} & & Q \succeq 0.
        \end{aligned}
    \right. &&
\end{flalign*}
\reqnomode
To put \refprog{lowerSDPn_recall} in the standard form \refprog{SDP_std} we set $N=2\times(m+1)$ and $M=1+m+(m+1)$. For all $r\in\N^*$ and all $i,j\in\llbracket 1,r \rrbracket$, recall that $E^{(i,j)}_r$ denotes the $r\times r$ matrix whose $(i,j)$ entry is $1$ and all other entries are $0$. We set
\begin{equation}
    \begin{aligned}
    &X=\text{Diag}_{k=0,\dots,m}(F_k)\oplus Q\in\text{Sym}_N,\\
    &C=E^{(n,n)}_N=E^{(n,n)}_{m+1}\oplus\mymathbb 0_{m+1}\in\text{Sym}_N,\\
    &\bm b=(1,0,0,\dots,0)\in\R^M,\\
    &{B'}^{(0)}=\mymathbb 1_{m+1}\oplus\mymathbb 0_{m+1}\in\text{Sym}_N,\\
    &\forall l\in\llbracket 1,m \rrbracket,\quad {B'}^{(l)}=\mymathbb 0_{m+1}\oplus\left(\sum_{i+j=2l-1}E^{(i,j)}_{m+1}\right),\\
    &\forall l\in\llbracket 0,m \rrbracket,\quad B^{(l)}=\text{Diag}_{k=0,\dots,m}\left(-\frac{(-1)^{k+l}}{l!}\binom kl\right)\oplus\left(\sum_{i+j=2l}E^{(i,j)}_{m+1}\right),
    \end{aligned}
\end{equation}
with the convention $\binom kl=0$ when $l>k$. The matrix ${B'}^{(0)}$ corresponds to the constraint $\sum_{k=0}^mF_k=1$ and we denote the corresponding dual variable $y\in\R$. Similarly,  the matrices ${B'}^{(l)}$ correspond to the $m$ constraints $\sum_{i+j=2l-1}Q_{ij} = 0$, and we denote the corresponding dual variables $\nu_l'\in\R$. Finally, the matrices $B^{(l)}$ correspond to the $m+1$ constraints $\sum_{i+j=2l}Q_{ij} = \frac{(-1)^l}{l!}\sum_{k=l}^{m} (-1)^k  \binom kl F_k$, and we denote the corresponding dual variables $\nu_l\in\R$.

The standard form \refprog{DSDP_std} of the dual program \refprog{lowerDSDPn} thus reads:
\leqnomode
\begin{flalign*}
   \label{prog:lowerDSDPn_std}
    \tag*{(D-SDP$_n^{m,\le}$)}\hspace{3.2cm}\left\{
        \begin{aligned}
            & & & \Inf{\substack{\bm\nu,\bm\nu'\in\R^{m+1}\times\R^m\\y\in\R}} y\\
            & \text{subject to} & & \text{Diag}_{k=0,\dots,m}\left[y-\sum_{l=0}^k\nu_l\frac{(-1)^{k+l}}{l!}\binom kl\right]\\
            & & &\quad\oplus\left(\sum_{l=0}^m\sum_{i+j=2l}\nu_lE^{(i,j)}_{m+1}+\sum_{l=1}^m\sum_{i+j=2l-1}\nu_l'E^{(i,j)}_{m+1}\right) \succeq E^{(n,n)}_{m+1}\oplus\mymathbb 0_{m+1}.\hspace{-4cm}
        \end{aligned}
    \right. &&
\end{flalign*}
\reqnomode
Due to the block-diagonal structure of the matrices involved, the positive semidefinite constraint above is equivalent to the following constraints:
\begin{equation}
   \left\{ \begin{aligned}
    &y\ge1+\sum_{l=0}^n\nu_l\frac{(-1)^{n+l}}{l!}\binom nl,\\
    &\forall k\in\llbracket0,m\rrbracket\setminus\{n\},\quad y\ge\sum_{l=0}^k\nu_l\frac{(-1)^{k+l}}{l!}\binom kl,\\
    &\left(\sum_{l=0}^m\sum_{i+j=2l}\nu_lE^{(i,j)}_{m+1}+\sum_{l=1}^m\sum_{i+j=2l-1}\nu_l'E^{(i,j)}_{m+1}\right)\succeq0.
    \end{aligned}\right.
\end{equation}
Let us define $A=(A_{ij})_{0\le i,j\le m}$ by
\begin{equation}
    A:=\sum_{l=0}^m\sum_{i+j=2l}\nu_lE^{(i,j)}_{m+1}+\sum_{l=1}^m\sum_{i+j=2l-1}\nu_l'E^{(i,j)}_{m+1},
\end{equation}
or equivalently
\begin{equation}\label{eq:Atonu}
    A_{ij}=\begin{cases}\nu_l&\text{when }i+j=2l,\\\nu_l'&\text{when }i+j=2l-1,\end{cases}
\end{equation}
for all $i,j\in\llbracket0,m\rrbracket$.
For $k\in\llbracket0,m\rrbracket$, we also define
\begin{equation}
    \mu_k:=\sum_{l=0}^k\nu_l\frac{(-1)^{k+l}}{l!}\binom kl\in\R.
\end{equation}
By Lemma~\ref{lemmapp:changeofbasis}, the following conditions are equivalent:
\begin{enumerate}[label=(\roman*)]
\item $\forall k\in\llbracket0,m\rrbracket,\quad\mu_k=\sum_{l=0}^k\nu_l\frac{(-1)^{k+l}}{l!}\binom kl$,
\item $\forall l\in\llbracket0,m\rrbracket,\quad\nu_l=\sum_{k=0}^l\mu_k\binom lkl!$.
\end{enumerate}
With~\refeq{Atonu} we thus have
\begin{equation}
    A_{ij}=\sum_{k=0}^l\mu_k\binom lkl!\quad\text{when }i+j=2l,
\end{equation}
and we obtain the following expression for \refprog{lowerDSDPn}:
\leqnomode
\begin{flalign*}
    \tag*{(D-SDP$_n^{m,\le}$)}\hspace{2.8cm}\left\{
        \begin{aligned}
            & & & \Inf{\substack{A\in\text{Sym}_{m+1}\\y,\bm\mu\in\R\times\R^{m+1}}} y\\
            & \text{subject to} & & y \geq 1 + \mu_n\\
            & \text{and} & &\forall k\in\llbracket0,m\rrbracket\setminus\{n\},\quad y \geq \mu_k\\
            & \text{and} & & \forall l \le m,\forall i+j=2l,\quad A_{ij}=\sum_{k=0}^l\mu_k\binom lkl!\\
            & \text{and} & & A \succeq 0.
        \end{aligned}
    \right. &&
\end{flalign*}
\reqnomode
Note that the constraint $y\ge1+\mu_n$ implies the constraint $y\ge\mu_n$.


\newpage
\section{Proof of Lemma~\ref{lem:newlowerDSDPn}}
\label{sec:app_newlowerDSDPn}

In this section, we prove the following result:

\begin{lemmapp}
For all $m\ge n$, the program \refprog{lowerDSDPn} is equivalent to the following program:
\leqnomode
\begin{flalign*}
    \tag*{(D-SDP$_n^{m,\le}$)}\hspace{2.8cm}\left\{
        \begin{aligned}
            & & & 
            \Inf{\substack{y \in \R \\ \bm\mu \in\mathcal S'(\N)}} y  \\
            & \text{\upshape subject to} & & y \geq 1 + \mu_n\\
            & \text{\upshape and} & & \forall k \neq n \in \N,\; y \ge\mu_k\\
            & \text{\upshape and} & & \forall g \in \mathcal R_{m,+}(\R_+), \; \langle f_{\bm\mu},g \rangle \geq  0,
        \end{aligned}
    \right. &&
\end{flalign*}
\reqnomode
where $f_{\bm\mu}=\sum_k\mu_k\mathcal L_k$.
\end{lemmapp}

\begin{proof}
We first obtain a reformulation of \refprog{lowerSDPn} and we derive its dual program. This reformulation is obtained using Stieltjes characterisation of non-negative polynomials over $\R_+$ rather than Lemma~\ref{lem:pospolyR+}:

\begin{lemma}[\cite{reed1975ii}]\label{lemmapp:Stieltjes1}
Let $m\in\mathbb N$ and let $P$ be a univariate polynomial of degree $m$. Let $a_1=\left\lfloor\frac m2\right\rfloor$ and $a_2=\left\lfloor\frac{m-1}2\right\rfloor$. For all $q\in\mathbb N$, let $X_q=(1,x,\dots,x^q)$ be the vector of univariate monomials up to degree $q$. Then, $P$ is non-negative over $\R_+$ if and only if there exist sum of squares polynomials $A_1$ and $A_2$ of degree $2a_1$ and $2a_2$, respectively, such that $P(x)=A_1(x)+xA_2(x)$ for all $x\in\mathbb R_+$ , or equivalently, if and only if there exist real positive semidefinite matrices $A_1$ and $A_2$ of size $(a_1+1)\times(a_1+1)$ and $(a_2+1)\times(a_2+1)$, respectively, such that for all $x\in\mathbb R_+$,
\begin{equation}
    P(x)=X_{a_1}^TA_1X_{a_1}+xX^T_{a_2}A_2X_{a_2}.
\end{equation}
\end{lemma}

\noindent By Lemma~\ref{lemmapp:Stieltjes1}, the program \refprog{lowerSDPn}, obtained by imposing $F_k=0$ for $k>m$ in \refprog{LP}, is equivalent to the following program:
\leqnomode
\begin{flalign*}
   \label{prog:lowerSDPn_Stieltjes}
    \tag*{$(\text{SDP}^{m,\leq}_n)$}
    \hspace{2.8cm}\left\{
        \begin{aligned}
            & & & \Sup{\substack{A_1,A_2\in\text{Sym}_{a_1+1}\times\text{Sym}_{a_2+1}\\\bm{F}\in\R^{m+1}}} F_n  \\
            & \text{subject to} & & \sum_{k=0}^{m} F_k = 1  \\
            & \text{and} & & \forall k\le m, \; F_k \geq 0 \\
            & \text{and} & & \forall l\le m,\frac{(-1)^l}{l!}\sum_{k=l}^{m} (-1)^k\binom klF_k=\sum_{\substack{i+j=l\\0\le i,j\le a_1}}(A_1)_{ij}+\sum_{\substack{i+j=l-1\\0\le i,j\le a_2}}(A_2)_{ij}\hspace{-1cm} \\
            & \text{and} & & A_1 \succeq 0\\
            & \text{and} & & A_2 \succeq 0.
        \end{aligned}
    \right. &&
\end{flalign*}
\reqnomode
To put \refprog{lowerSDPn_Stieltjes} in the standard form \refprog{SDP_std} we set $N=(m+1)+(a_1+1)+(a_2+1)$ and $M=1+(m+1)$. For all $r\in\N^*$ and all $i,j\in\llbracket 1,r \rrbracket$, recall that $E^{(i,j)}_r$ denotes the $r\times r$ matrix whose $(i,j)$ entry is $1$ and all other entries are $0$. We set
\begin{equation}
    \begin{aligned}
    &X=\text{Diag}_{k=0,\dots,m}(F_k)\oplus A_1\oplus A_2\in\text{Sym}_N,\\
    &C=E^{(n,n)}_N=E^{(n,n)}_{m+1}\oplus\mymathbb 0_{a_1+1}\oplus\mymathbb 0_{a_2+1}\in\text{Sym}_N,\\
    &\bm b=(1,0,0,\dots,0)\in\R^M,\\
    &{B}^{(-1)}=\mymathbb 1_{m+1}\oplus\mymathbb 0_{a_1+1}\oplus\mymathbb 0_{a_2+1}\in\text{Sym}_N,\\
    &\forall l\in\llbracket 0,m \rrbracket,\quad B^{(l)}=\text{Diag}_{k=0,\dots,m}\left(-\frac{(-1)^{k+l}}{l!}\binom kl\right)\oplus\left(\sum_{\substack{i+j=l\\0\le i,j\le a_1}}E^{(i,j)}_{a_1+1}\right)\oplus\left(\sum_{\substack{i+j=l-1\\0\le i,j\le a_2}}E^{(i,j)}_{a_2+1}\right),
    \end{aligned}
\end{equation}
with the convention $\binom kl=0$ when $l>k$. The matrix ${B}^{(-1)}$ corresponds to the constraint $\sum_{k=0}^mF_k=1$ and we denote the corresponding dual variable $y\in\R$. Similarly, the matrices $B^{(l)}$ correspond to the $m+1$ other linear constraints, and we denote the corresponding dual variable $\bm\nu\in\R^{m+1}$.
The standard form \refprog{DSDP_std} of the dual program \refprog{lowerDSDPn} thus reads:
\leqnomode
\begin{flalign*}
   \label{prog:lowerDSDPn_Stieltjes_std}
    \tag*{(D-SDP$_n^{m,\le}$)}\hspace{2.8cm}\left\{
        \begin{aligned}
            & & & \Inf{\substack{\bm\nu\in\R^{m+1}\\y\in\R}} y\\
            & \text{subject to} & & \text{Diag}_{k=0,\dots,m}\left[y-\sum_{l=0}^k\nu_l\frac{(-1)^{k+l}}{l!}\binom kl\right]\\
            & & &\quad\oplus\left(\sum_{l=0}^m\sum_{\substack{i+j=l\\0\le i,j\le a_1}}\nu_lE^{(i,j)}_{a_1+1}\!\right)\\
            & & &\quad\oplus\left(\sum_{l=1}^m\sum_{\substack{i+j=l-1\\0\le i,j\le a_2}}\nu_lE^{(i,j)}_{a_2+1}\!\right)\succeq E^{(n,n)}_{m+1}\oplus\mymathbb 0_{a_1}\oplus\mymathbb 0_{a_2}.\hspace{-4cm}
        \end{aligned}
    \right. &&
\end{flalign*}
\reqnomode
Due to the block-diagonal structure of the matrices involved, the positive semidefinite constraint above is equivalent to the following constraints:
\begin{equation}\label{eq:constraintsStieltjes}
   \left\{ \begin{aligned}
    &y\ge1+\sum_{l=0}^n\nu_l\frac{(-1)^{n+l}}{l!}\binom nl,\\
    &\forall k\in\llbracket0,m\rrbracket\setminus\{n\},\quad y\ge\sum_{l=0}^k\nu_l\frac{(-1)^{k+l}}{l!}\binom kl,\\
    &\sum_{l=0}^m\sum_{\substack{i+j=l\\0\le i,j\le a_1}}\nu_lE^{(i,j)}_{a_1+1}\succeq0\\
    &\sum_{l=1}^m\sum_{\substack{i+j=l-1\\0\le i,j\le a_2}}\nu_lE^{(i,j)}_{a_2+1}\succeq0.
    \end{aligned}\right.
\end{equation}
For $k\in\llbracket0,m\rrbracket$, we define
\begin{equation}
    \mu_k:=\sum_{l=0}^k\nu_l\frac{(-1)^{k+l}}{l!}\binom kl\in\R.
\end{equation}
By Lemma~\ref{lemmapp:changeofbasis}, the following conditions are equivalent:
\begin{enumerate}[label=(\roman*)]
\item $\forall k\in\llbracket0,m\rrbracket,\quad\mu_k=\sum_{l=0}^k\nu_l\frac{(-1)^{k+l}}{l!}\binom kl$,
\item $\forall l\in\llbracket0,m\rrbracket,\quad\nu_l=\sum_{k=0}^l\mu_k\binom lkl!$.
\end{enumerate}
We thus have
\begin{equation}\label{eq:mutonu2}
    \forall l\in\llbracket0,m\rrbracket,\quad\nu_l=\sum_{k=0}^l\mu_k\binom lkl!
\end{equation}
Let us introduce the moment matrices $M_\nu\in\text{Sym}_{a_1+1}$ and $M^{(1)}_\nu\in\text{Sym}_{a_2+1}$:
\begin{align}
    \forall i,j\in\llbracket0,a_1\rrbracket,\; &(M_\nu)_{i,j}:=\nu_{i+j}, \label{eq:Mnuij} \\
    \forall i,j\in\llbracket0,a_2\rrbracket,\; &(M_\nu^{(1)})_{ij}:=\nu_{i+j+1}.
\end{align}
The constraints~\eqref{eq:constraintsStieltjes} are equivalent to:
\begin{equation}\label{eq:constraintsStieltjes2}
   \left\{ \begin{aligned}
    &y\ge1+\mu_n,\\
    &\forall k\in\llbracket0,m\rrbracket\setminus\{n\},\quad y\ge\mu_l,\\
    &M_\nu\succeq0\\
    &M_\nu^{(1)}\succeq0.
    \end{aligned}\right.
\end{equation}
We complete the vector $\bm\mu$ with zeros to obtain an element of $\mathcal S'(\N)$. We have $f_{\bm\mu}=\sum_k\mu_k\mathcal L_k\in\mathcal S'(\R_+)$. We prove the following result, analogous to Lemma~\ref{lemmapp:momentmatrix}:

\begin{lemma}\label{lemmapp:momentmatrix2}
The following propositions are equivalent:
\begin{enumerate}[label=(\roman*)]
\item $\forall g\in\mathcal R_{m,+}(\R_+),\;\braket{f_{\bm\mu},g}\ge0$,
\item $M_\nu\succeq0$ and $M_\nu^{(1)}\succeq0$.
\end{enumerate}
\end{lemma}

\begin{proof}
The proof is similar to that of Lemma~\ref{lemmapp:momentmatrix}:
for all $k\in\mathbb N$ we have
\begin{equation}\label{eq:mukf}
    \mu_k=\int_{\R_+}\mathcal L_k(x)f_{\bm\mu}(x)dx.
\end{equation}
Thus,
\begin{equation}\label{eq:Mnupos}
    \begin{aligned}
          M_\nu\succeq0&\Leftrightarrow\forall Y\in\R^{a_1+1},\;Y^TM_\nu Y\ge0\\
                           &\Leftrightarrow\forall Y\in\R^{a_1+1},\;\sum_{i,j=0}^{a_1}y_iy_j(M_\nu)_{ij}\ge0\\
                           &\Leftrightarrow\forall Y\in\R^{a_1+1},\;\sum_{l=0}^{2a_1}\sum_{i+j=l}y_iy_j\nu_l\ge0\\
                           &\Leftrightarrow\forall Y\in\R^{a_1+1},\;\sum_{l=0}^{2a_1}\sum_{i+j=l}y_iy_j\sum_{k=0}^l\mu_k\binom lkl!\ge0\\
                           &\Leftrightarrow\forall Y\in\R^{a_1+1},\;\int_{\R_+}\sum_{l=0}^{2a_1}\sum_{i+j=l}y_iy_j\sum_{k=0}^l\binom lkl!\mathcal L_k(x)f_{\bm\mu}(x)dx\ge0\\
                           &\Leftrightarrow\forall Y\in\R^{a_1+1},\;\int_{\R_+}\sum_{l=0}^{2a_1}\sum_{i+j=l}y_iy_j\sum_{k=0}^l(-1)^k\binom lkl!L_k(x)e^{-\frac x2}f_{\bm\mu}(x)dx\ge0\\
                           &\Leftrightarrow\forall Y\in\R^{a_1+1},\;\int_{\R_+}e^{-\frac x2}\sum_{l=0}^{2a_1}x^l\sum_{i+j=l}y_iy_jf_{\bm\mu}(x)dx\ge0\\
                           &\Leftrightarrow\forall Y\in\R^{a_1+1},\;\int_{\R_+}e^{-\frac x2}\left(\sum_{k=0}^{a_1}y_kx^k\right)^2f_{\bm\mu}(x)dx\ge0\\
                           &\Leftrightarrow\forall Y\in\R^{a_1+1},\;\left\langle f_{\bm\mu},x\mapsto e^{-\frac x2}\left(\sum_{k=0}^{a_1}y_kx^k\right)^2\right\rangle\ge0,
    \end{aligned}
\end{equation}
where we used~\refeq{Mnuij} in the third line,~\refeq{mutonu2} in the fourth line,~\refeq{mukf} in the fifth line and~\refeq{Lagtox} in the seventh line. Similarly, 
\begin{equation}\label{eq:Mnu1pos}
     M_\nu^{(1)}\succeq0\Leftrightarrow\forall Y\in\R^{a_2+1},\;\left\langle f_{\bm\mu},x\mapsto e^{-\frac x2}x\left(\sum_{k=0}^{a_2}y_kx^k\right)^2\right\rangle\ge0.
\end{equation}
Combining~\refeq{Mnupos} and~\refeq{Mnu1pos} with Lemma~\ref{lemmapp:Stieltjes1} we obtain
\begin{equation}
     M_\nu\succeq0\text{ and }M_\nu^{(1)}\succeq0\Leftrightarrow\forall g\in\mathcal R_{m,+}(\R_+),\;\braket{f_{\bm\mu},g}\ge0,
\end{equation}
by linearity.

\end{proof} 

\noindent Combining Lemma~\ref{lemmapp:momentmatrix2} with the constraints~\eqref{eq:constraintsStieltjes2} finally yields:

\leqnomode
\begin{flalign*}
    \tag*{(D-SDP$_n^{m,\le}$)}
    \hspace{2.8cm}\left\{
        \begin{aligned}
            & & & \Inf{\substack{y \in \R \\ \bm\mu \in\mathcal S'(\N)}} y\\
            & \text{subject to} & & \forall k \neq n \in \N,\; y \ge\mu_k \\
            & \text{and} & & y \geq 1 + \mu_n \\
            & \text{and} & & \forall g \in \mathcal R_{m,+}(\R_+), \; \langle f_{\bm\mu},g \rangle \geq  0.
        \end{aligned}
    \right. &&
\end{flalign*}
\reqnomode

\end{proof}

\newpage
\section{Proof of Lemma~\ref{lem:feasible}}
\label{sec:app_feasible}

We recall the definition of $\bm F^n=(F_k^n)_{k\in\N}\in\R^{\N}$:
\begin{itemize}
    \item if $n$ is even:
\begin{align}
        F_k^n:=&\begin{cases}\frac1{2^n}\binom k{\frac k2}\binom{n-k}{\frac{n-k}2}&\text{when }k\le n, k\text{ even},\\0&\text{otherwise},\end{cases} \label{eq:Fkfornevenapp}\\
\intertext{\item if $n$ is odd:}
        F_k^n:=&\begin{cases} \frac1{2^n}\frac{\binom n{\floor{\frac n2}}\binom{\floor{\frac n2}}{\floor{\frac k2}}^2 }{\binom nk},&\text{when }k\le n,\\0&\text{otherwise}.\label{eq:Fkfornoddapp} \end{cases}
\end{align}
\end{itemize}
In both cases,
\begin{equation}
        F_n^n =\frac1{2^n}\binom n{\floor{\frac n2}}.
\end{equation}
We extrapolated these analytical expressions from numerical values. Running \refprog{lowerSDPn} for several values of $n$ and $m$ allowed us to deduce these sequences (we acknowledge here the great help from \href{https://oeis.org}{oeis.org}). 
 
We start by showing two results, corresponding to $n$ even and $n$ odd, respectively, where we make use of Zeilberger's algorithm, a powerful algorithm for proving binomial identities \cite{zeilberger1991method}.  Given a holonomic function, this algorithm outputs a recurrence relation that it satisfies, thus reducing the proof of identity between binomial expressions to the verification that the initialisation is correct. A Mathematica notebook is available for the implementation of Zeilberger's algorithm  \cite{codes}.

\begin{lemma}
\label{lemma:QdecFnkeven}
For $n \in \N$ even:
\begin{equation}
    \sum_{k=0}^{n} (-1)^k F_k^n L_k(x) = \sum_{l=0}^n x^l \sum_{i+j=2n-2l} p_{n-i} p_{n-j},
    \label{eq:polyeqeven}
\end{equation}
where:
\begin{align}\label{eq:coefpnk}
    p_{n-k}:=&\begin{cases} \sqrt{\frac{1}{2^n n!} \binom{n}{\frac{n}{2}} },&\text{when }k=0,\\
    (-1)^{\frac k2} 2^{\frac k2} (\frac k2)!  \binom{\frac{n}{2}}{\frac k2}^2 p_n,&\text{when }0<k\le n,\;k \text{ even },\\
    0,&\text{otherwise}.
\end{cases}
\end{align}
\end{lemma}

\noindent The coefficients $p_{n-k}$ (and $q_{n-k}$ later on) were found by hand when looking for an analytical sum of squares decomposition.

\begin{proof}
To prove the polynomial equality~\eqref{eq:polyeqeven}, we start by equating the coefficients in $x^{l}$ for $l \in \llbracket 0,n \rrbracket$ which gives:
\begin{equation}    \label{eq:claim_SOSeven}
    \begin{aligned}
        \frac{(-1)^l}{l!}\sum_{k=l}^n (-1)^k  \binom{k}{l}F_k^n 
        &= \sum_{i+j=2n-2l} p_{n-i} p_{n-j}\\
        &= \sum_{i=0}^{2n-2l} p_{n-i} p_{n-(2n-2l-i)}\\
        &= \sum_{i=0}^{n-l} p_{n-2i} p_{n-(2n-2l-2i)},
    \end{aligned}
\end{equation}
where we used $p_{n-i}=0$ for $n-i<0$ in the second line. 
These are equalities between holonomic functions of parameters $n$ and $l$ that are trivial for $l>n$.

\begin{itemize}
\item For $l\le n$ even, because $F^n_k=0$ for $k$ odd, \refeq{claim_SOSeven} becomes:
\begin{equation}
    \sum_{k=\frac l2}^{\frac n2} \frac1{l!} \binom{2k}{l} F_{2k}^n = \sum_{i=0}^{n-l} p_{n-2i} p_{n-(2n-2l-2i)},
\end{equation}
that is, taking into account the parity of $l=2s$ and $n=2t$,
\begin{equation}
    \sum_{k=s}^{t} \frac1{(2s)!} \binom{2k}{2s} F_{2k}^{2t} = \sum_{i=0}^{2t-2s} p_{2t-2i} p_{2t-(4t-4s-2i)}.
    \label{eq:Zeil_neven_leven_pre}
\end{equation}
Inserting the expressions from~\refeq{Fkfornevenapp} and~\refeq{coefpnk}, we thus have to check the identity:
\begin{equation}\label{eq:Zeil_neven_leven}
    \sum_{k=s}^t\frac1{2^{2t}(2s)!}\binom{2k}{2s}\binom{2k}{k}\binom{2t-2k}{t-k}= \sum_{i=0}^{2t-2s}\frac{i!(2t-2s-i)!}{2^{2s}(2t)!}\binom{2t}t\binom ti^2\binom t{2t-2s-i}^2,
\end{equation}
for all $t\in\N$ and all $s\le t$ (with the convention $\binom kj=0$ for $j>k$).
We ran Zeilberger's algorithm to show that the right-hand side and the left-hand side of~\refeq{Zeil_neven_leven_pre} both satisfy the following recurrence relation, for all $s,t\in\N$:
\begin{equation}
    2(t+1)^2 S(s,t)+(-2s^2-4t^2+4st+5s-11t-8) S(s,t+1)+(s-t-2) (2s-2t-3) S(s,t+2) =0.
\end{equation}
It remains to check that the initialisation is correct. For all $s\in\N$, this recurrence relation in $t$ is of order $2$. Since the identities in~\refeq{Zeil_neven_leven} are trivial when $l>n$, i.e., $s>t$, we thus only need to check~\refeq{Zeil_neven_leven} for $(s,t)=(0,0)$, $(s,t)=(0,1)$, and $(s,t)=(1,1)$, which is straightforward: we obtain the values $1$, $1$ and $\frac14$ respectively, for both sides of~\refeq{Zeil_neven_leven}.
\item For $l\le n$ odd, \refeq{claim_SOSeven} becomes:
\begin{equation}
    - \sum_{k=\frac{l+1}{2}}^{\frac n2} \frac1{l!} \binom{2k}{l} F_{2k}^n = \sum_{i=0}^{n-l} p_{n-2i} p_{n-(2n-2l-2i)},
\end{equation}
that is, taking into account the parity of $l=2s+1$ and $n=2t$:
\begin{equation}
    - \sum_{k=s+1}^{t} \frac{1}{(2s+1)!} \binom{2k}{2s+1} F_{2k}^{2t} = \sum_{i=0}^{2t-2s-1} p_{2t-2i} p_{2t-(4t-4s-2i-2)}.
    \label{eq:Zeil_neven_lodd_pre}
\end{equation}
Inserting the expressions from~\refeq{Fkfornevenapp} and~\refeq{coefpnk}, we thus have to check the identity:
\begin{equation}\label{eq:Zeil_neven_lodd}
    \begin{aligned}
        \sum_{k=s+1}^{t} \frac1{2^{2t}(2s+1)!}&\binom{2k}{2s+1}\binom{2k}k\binom{2t-2k}{t-k} \\
        &= \sum_{i=0}^{2t-2s-1}\frac{i!(2t-2s-i-1)!}{2^{2s+1}(2t)!}\binom{2t}t\binom ti^2\binom t{2t-2s-i-1}^2,
    \end{aligned}
\end{equation}
for all $t\in\N$ and all $s\le t$ (with the convention $\binom kj=0$ for $j>k$).
Likewise, we ran Zeilberger's algorithm to show that the right-hand side and the left-hand side of~\refeq{Zeil_neven_lodd_pre} both satisfy the following recurrence relation, for all $s,t\in\N$:
\begin{equation}
    2(t+1)^2 S(s,t) + (-2s^2-4t^2+4st+3s-9t-6) S(s,t+1)+(s-t-1)(2s-2t-3) S(s,t+2) =0.
\end{equation}
It remains to check that the initialisation is correct. For all $s\in\N$, this recurrence relation in $t$ is of order $2$. Since the identities in~\refeq{Zeil_neven_lodd} are trivial when $l>n$, i.e., $s\ge t$, we thus only need to check~\refeq{Zeil_neven_lodd} for $(s,t)=(0,1)$, which is straightforward: we obtain the value $1$ for both sides of~\refeq{Zeil_neven_lodd}.
\end{itemize}

\end{proof}

\begin{lemma}
\label{lemma:QdecFnkodd}
For $n \in \N$ odd:
\begin{equation}
    \sum_{k=0}^{n} (-1)^k F_k^n L_k(x) = \sum_{l=0}^n x^l \sum_{i+j=2l} q_i q_j
    \label{eq:polyeqodd}
\end{equation}
\end{lemma}
where:
\begin{align}
    q_{n-k}:=&\begin{cases} \sqrt{\frac{1}{2^n n!} \binom{n}{\floor{\frac{n}{2}}} },&\text{when }k=0,\\
    (-1)^{\frac k2}2^{\frac k2}(\frac k2)! \frac{n+1}{n-k+1}\binom{\floor{\frac{n}{2}}}{\frac k2}^2q_n,&\text{when }0<k< n,\;k \text{ even},\\
    0.&\text{otherwise}.
\label{eq:coefqnk}
\end{cases}
\end{align}

\begin{proof}

Unlike the case where $n$ is even, $\bm F^n$ is non-zero for $k\le n $ odd, and the expression of $F^n_k$ depends on the parity of $k$. Thus we cannot use directly~\refeq{Fkfornoddapp} in Zeilberger's algorithm as we did in the previous lemma, hence the development below in order to obtain expressions that the algorithm can take as inputs. We fix $n$ odd and $l\le n$.

We start by equating coefficents in $x^l$ in~\refeq{polyeqodd}:
\begin{equation}
    \begin{aligned}
        \frac{(-1)^l}{l!}\sum_{k=l}^n (-1)^k  \binom{k}{l}F_k^n &= \sum_{i+j=2n-2l} q_{n-i} q_{n-j} \\
        &= \sum_{i=0}^{n-l} q_{n-2i} q_{n-(2n-2l-2i)}.
    \end{aligned}
\label{eq:claim_SOSodd}
\end{equation}
\begin{itemize}
    \item For $l$ even, writing $l=2s$ and $n=2t+1$, the left-hand side of~\refeq{claim_SOSodd} becomes:
    \begin{equation}
        \begin{aligned}
            \frac{1}{(2s)!}\sum_{k=2s}^{2t+1} & (-1)^k  \binom{k}{2s}F_k^{2t+1} \\ &= \frac{1}{(2s)!}\sum_{k=0}^{2t+1-2s} (-1)^k  \binom{k+2s}{2s}F_{k+2s}^{2t+1}\\
            &= \frac{1}{(2s)!}\sum_{k=0}^{t-s} \left( \binom{2k+2s}{2s}F_{2k+2s}^{2t+1} - \binom{2k+2s+1}{2s}F_{2k+2s+1}^{2t+1} \right)\\
            &= \frac{\binom{2t+1}{t}}{2^{2t+1}(2s)!} \sum_{k=0}^{t-s} \left(  \frac{\binom{2k+2s}{2s} \binom{t}{k+s}^2}{\binom{2t+1}{2k+2s}} - \frac{\binom{2k+2s+1}{2s} \binom{t}{k+s}^2}{\binom{2t+1}{2k+2s+1}} \right)\\
            &= q_{2t+1}^2 \frac{(2t+1)!}{(2s)!} \sum_{k=0}^{t-s}  \frac{\binom{t}{k+s}^2 \binom{2k+2s}{2s}}{\binom{2t+1}{2k+2s}} \left( 1 - \frac{(2k+2s+1)^2}{(2k+1)(2t-2s-2k+1)} \right).
        \end{aligned}
    \end{equation}
    With~\refeq{coefqnk} and~\refeq{claim_SOSodd}, we thus have to check the identity:
    \begin{equation}
        \begin{aligned}
          \frac{(2t+1)!}{(2s)!} &\sum_{k=0}^{t-s}  \frac{\binom{t}{k+s}^2 \binom{2k+2s}{2s}}{\binom{2t+1}{2k+2s}} \left( 1 - \frac{(2k+2s+1)^2}{(2k+1)(2t-2s-2k+1)} \right) \\ 
          &=  \sum_{i=0}^{2t+1-2s} \frac{q_{2t+1-2i} q_{2t+1-(4t+2-4s-2i)}}{q_{2t+1}^2} \\
          &= -\sum_{i=0}^{2t+1-2s}2^{2t-2s+1}\frac{(2t+2)^2 i! (2t+1-2s-i)!}{(2t-2i+2)(4s+2i-2t)}\binom ti^2\binom t{2t-2s-i+1}^2,
        \end{aligned}
        \label{eq:Zeil_nodd_leven}
    \end{equation}
    for all $t\in\N$ and all $s\le t$ (with the convention $\binom kj=0$ for $j>k$).
    Zeilberger's algorithm certifies that the right-hand side and the left-hand side of~\refeq{Zeil_nodd_leven} both satisfy for all $s,t\in\N$:
    \begin{equation}
        \begin{aligned}
          -32 &(t+2)^3 (t+1)^2 (t+3) S(s,t) \\ 
          &+ 4(t+3)(t+2)(2s^2+4t^2-4st-7s+15t+14) S(s,t+1) \\
          & + (-2s+2t+5)(s-t-2)S(s,t+2) =0.
        \end{aligned}
    \end{equation}
    It remains to check that the initialisation is correct. For all $s\in\N$, this recurrence relation in $t$ is of order $2$. Since the identities in~\refeq{Zeil_nodd_leven} are trivial when $l>n$, i.e., $s>t$, we thus only need to check~\refeq{Zeil_nodd_leven} for $(s,t)=(0,0)$, $(s,t)=(0,1)$, and $(s,t)=(1,1)$, which is straightforward: we obtain the values $0$, $0$ and $-8$ respectively, for both sides of~\refeq{Zeil_nodd_leven}.
    \item For $l$ odd, writing $l=2s+1$ and $n=2t+1$, the left-hand side of~\refeq{claim_SOSodd} becomes:
    \begin{equation}
        \begin{aligned}
            \frac{-1}{(2s+1)!}\sum_{k=2s+1}^{2t+1} & (-1)^k  \binom{k}{2s+1}F_k^{2t+1} \\ 
            &=  \frac{-1}{(2s+1)!} \sum_{k=0}^{2t-2s} (-1)^{k+2s+1} \binom{k+2s+1}{2s+1} F_{k+2s+1}^{2t+1} \\
            &= \frac 1{(2s+1)!} \sum_{k=0}^{t-s} \left( - \binom{2k+2s}{2s+1} F_{2k+2s}^{2t+1} + \binom{2k+2s+1}{2s+1} F_{2k+2s+1}^{2t+1} \right) \\
            &= q_{2t+1}^2 \frac{(2t+1)!}{(2s+1)!} \sum_{k=0}^{t-s} \left( - \frac{\binom{2k+2s}{2s+1} \binom{t}{k+s}^2 }{ \binom{2t+1}{2k+2s} } + \frac{\binom{2k+2s+1}{2s+1} \binom{t}{k+s}^2 }{ \binom{2t+1}{2k+2s+1} } \right) \\
            &= q_{2t+1}^2 \frac{(2t+1)!}{(2s+1)!} \sum_{k=0}^{t-s} \frac{\binom{2k+2s}{2s+1} \binom{t}{k+s}^2 }{ \binom{2t+1}{2k+2s} } \left(-1 + \frac{(2k+2s+1)^2}{2k(2t-2k-2s+1)} \right),
        \end{aligned}
    \end{equation}
    where we used that $\binom{2k+2s}{2s+1}=0$ for $k=0$ in the third line. Note that when factorising we introduced an indeterminate form in the last line that Zeilberger's algorithm can resolve. This is necessary since the algorithm cannot deal with differences of binomial terms.  
    With~\refeq{coefqnk} and~\refeq{claim_SOSodd}, we thus have to check the identity:
    \begin{equation}\label{eq:Zeil_nodd_lodd}
        \begin{aligned}
          \frac{(2t+1)!}{(2s+1)!} & \sum_{k=0}^{t-s} \frac{\binom{2k+2s}{2s+1} \binom{t}{k+s}^2 }{ \binom{2t+1}{2k+2s} } \left(-1 + \frac{(2k+2s+1)^2}{2k(2t-2k-2s+1)} \right) \\ 
          &= \sum_{i=0}^{2t-2s} \frac{q_{2t+1-2i} q_{2t+1-(4t-4s-2i)} }{q_{2t+1}^2} \\
          &=\sum_{i=0}^{2t-2s}2^{2t-2s}\frac{(2t+2)^2 i! (2t-2s-i)!}{(2t-2i+2)(4s+2i-2t+2)}\binom ti^2\binom t{2t-2s-i}^2,
        \end{aligned}
    \end{equation}
    for all $t\in\N$ and all $s\le t$ (with the convention $\binom kj=0$ for $j>k$).
    Zeilberger's algorithm certifies that both the right-hand side and the left-hand side of~\refeq{Zeil_nodd_lodd} satisfy for all $s \le t$:
    \begin{equation}
        \begin{aligned}
          -32 &(t+2)^3 (t+1)^2 (t+3) S(s,t) \\ 
          &+ 4(t+3)(t+2)(2s^2+4t^2-4st-5s+13t+11) S(s,t+1) \\
          & + (-2s+2t+3)(s-t-2)S(s,t+2) =0.
        \end{aligned}
    \end{equation}
    It remains to check that the initialisation is correct. For all $s\in\N$, this recurrence relation in $t$ is of order $2$. Since the identities in~\refeq{Zeil_nodd_lodd} are trivial when $l>n$, i.e., $s>t$, we thus only need to check~\refeq{Zeil_nodd_lodd} for $(s,t)=(0,0)$, $(s,t)=(0,1)$, and $(s,t)=(1,1)$, which is straightforward: we obtain the values $1$, $16$ and $1$ respectively, for both sides of~\refeq{Zeil_nodd_lodd}.
\end{itemize}

\end{proof}

\noindent Having derived these identities, we now recall the lemma from the main text we wish to prove:
\begin{lemmapp}
For all $m\ge n$, $\bm F^n$ is a feasible solution of \refprog{lowerSDPn}. Moreover, it is optimal when $m=n$.
\end{lemmapp}
\begin{proof}
The proof has three parts. In the first we focus on $n$ even, and in the second on $n$ odd. 
In the last part, we exhibit a feasible solution of \refprog{lowerDSDPn} for $m=n$ with the same optimal value $F_n^n$. 

\paragraph{$n$ even, $m \ge n$:} we check that $\bm F^n$ defined in~\refeq{Fkfornevenapp} satisfies all the constraints of \refprog{lowerSDPn}.

\begin{itemize}
    \item For all $k \in \N$, $F_k^n \ge 0$.
    \item We have
    \begin{equation}\label{eq:sumFkneven}
        \begin{aligned}
            \sum_{k=0}^{\infty} F_k^n &=  \sum_{\substack{k=0 \\ \text{even}}}^{n} \frac1{2^{n}}\binom k{\frac k2}\binom{n-k}{\frac{n-k}2}\\
            & = \sum_{k=0}^{\frac n2} \frac1{2^{n}}\binom{2k}{k}\binom{n-2k}{\frac n2 -k} \\
            & = 1,
        \end{aligned}
    \end{equation}
    where the last equality follows from \cite[(3.90)]{gould1972combinatorial}.
    \item We have to show that $x\mapsto\sum_{k=0}^{n} (-1)^k F_k^n L_k(x^2)$ is a positive function on $\R$.  
    Due to Lemma \ref{lem:pospolyR}, we aim to find a sum of squares decomposition for this polynomial. Guided by numerical results, we look for a polynomial $P(x) \defeq \sum_{i=0}^{n} p_i x^i$ such that:
    \begin{equation}
        \begin{aligned}
          \sum_{k=0}^{n} (-1)^k F_k^n L_k(x^2) & = P^2(x^2) \\
          & = \left( \sum_{i=0}^n p_i x^{2i} \right)^2 \\
          & = \sum_{l=0}^n \left( \sum_{\substack{i+j=l \\ 0 \leq i,j \leq n}} p_i p_j \right) x^{2l},
        \end{aligned}
    \end{equation}
    and the sought coefficients are given by Lemma~\ref{lemma:QdecFnkeven}, which concludes the first part of the proof.
\end{itemize}

\paragraph{$n$ odd, $m \ge n$:} similarly, we check that $\bm F^n$ defined in~\refeq{Fkfornoddapp} satisfies all the constraints of \refprog{lowerSDPn}.

\begin{itemize}
    \item For all $k \in \N$, $F_k^n \ge 0$.
    \item Writing $n=2t+1$, from~\refeq{Fkfornoddapp} we have
    \begin{equation}
        \begin{split}
            & \forall s\le t, \; F^{2t+1}_{2s} = \frac1{2^{2t+2}}\left(1-\frac{s}{t+1}\right)\binom{2s}s\binom{2t+2-2s}{t+1-s}  \\
            & \forall s\in\llbracket1,t+1\rrbracket, \; F^{2t+1}_{2s-1} =   \frac1{2^{2t+2}}\frac{s}{t+1}\binom{2s}s\binom{2t+2-2s}{t+1-s}.
        \end{split}
    \end{equation}
    In particular, for all $s\le t$ we have
    \begin{equation}
       F^{2t+1}_{2s-1}+F^{2t+1}_{2s}=F^{2t+2}_{2s},
    \end{equation}
    where $F^{2t+2}_{2s}$ is defined in \refeq{Fkfornevenapp}. Since $F^{2t+2}_{2s+1}=0$ for all $s\le t$, and $F_k^n=0$ for all $k>n$, we have:
    \begin{equation}
        \begin{aligned}
            \sum_{k=0}^{\infty} F_k^n&=\sum_{k=0}^{\infty} F_k^{n+1} \\
            & = 1,
        \end{aligned}
    \end{equation}
    where we used~\refeq{sumFkneven}.
    \item We have to show that $x\mapsto\sum_{k=0}^{n} (-1)^k F_k^n L_k(x^2)$ is a positive function on $\R$.  
    Due to Lemma \ref{lem:pospolyR}, we aim to find a sum of squares decomposition for this polynomial. Guided by numerical results, we look for a polynomial $Q(x) \defeq \sum_{i=0}^{n} q_i x^i$ such that:
\begin{equation}
    \begin{aligned}
        \sum_{k=0}^{n} (-1)^k F_k^n L_k(x^2) & = Q^2(x^2) \\
            & = \left( \sum_{i=0}^n q_i x^{2i} \right)^2 \\
            & = \sum_{l=0}^n \left( \sum_{\substack{i+j=l \\ 0 \leq i,j \leq n}} q_i q_j \right) x^{2l},
    \end{aligned}
\end{equation}
and the sought coefficients are given by Lemma~\ref{lemma:QdecFnkodd}, which concludes the second part of the proof.
\end{itemize}

\noindent We thus obtained a feasible solution of \refprog{LP} for all $n\in\N^*$, which is feasible for \refprog{lowerSDPn} for all $m\ge n$. 

\paragraph{Optimality for $m=n$:} we now find an analytical solution of the dual (D-SDP$_n^{n,\le}$) with the same optimal value as the primal (SDP$_n^{n,\le}$), by finding a Cholesky decomposition for the matrix appearing in the dual program \refprog{lowerDSDPn} for $m=n$. The coefficients of the Cholesky decomposition are given by the triangular matrix $L$ with coefficients:
\begin{equation}
    \begin{split}
        \forall j \leq i, \quad & l_{2i,2j} = 2^i i! \binom ij \\ 
        \forall j \leq i, \quad & l_{2i+1,2j+1} = 2^{i+1/2}\frac{(i+1)!}{\sqrt{j+1}} \binom ij \\
        & l_{nn} = 0\\
        & l_{ij}=0\quad\text{otherwise}.
    \end{split}
\end{equation}
Once again, these analytical expressions were extrapolated from numerical values.
Then, $A = L L^T$ is a positive semidefinite matrix given by:
\begin{equation}
    A_{ij} = \sum_{k=0}^{\min(i,j)} l_{ik} l_{jk}.
\end{equation}
Now $l_{ij}$ is non-zero only when $i$ and $j$ have the same parity, so for all $k \in \llbracket 0,\min(i,j) \rrbracket$, $i$ and $j$ must have the same
parity than $k$ for $l_{ik} l_{jk}$ to be non-zero.

\begin{itemize}
    \item Suppose $i = 2i'$, $j = 2j'$, $i' \leq j'$ and $(i',j') \neq (n,n)$. Furthermore let $l=\frac{i+j}2$.
    \begin{equation}
      \begin{aligned}
        A_{ij} &= \sum_{\substack{k=0 \\ k \text{ even}}}^{i} l_{2i', k} l_{2j', k'}\\
        &= \sum_{k=0}^{i'} 2^{i'} i'! \binom{i'}k 2^{j'} j'! \binom{j'}k \\
        &= 2^l i'! (l-i')! \sum_{k=0}^{i'} \binom{i'}k\binom{l-i'}k \\ 
        &= 2^l l!.
        \end{aligned}
    \end{equation}
    \item Suppose $i = 2i'+1$, $j = 2j'+1$, $i' \leq j'$ and $(i',j') \neq (n,n)$. Furthermore let $l=\frac{i+j}2$.
    \begin{equation}
      \begin{aligned}
        A_{ij} &= \sum_{\substack{k=0 \\ k \text{ odd}}}^{i} l_{2i'+1 ,k} l_{2j'+1 ,k}\\
        &= \sum_{k=0}^{i'} 2^{i'+1/2} \frac{(i'+1)!}{\sqrt{k+1}} \binom{i'}k 2^{j'+1/2} \frac{(j'+1)!}{\sqrt{k+1}} \binom{j'}k \\
        &= 2^l i'! (l-i')! \sum_{k=0}^{i'} \binom{i'+1}{k+1}\binom{l-i'-1}k \\
        &= 2^l l!.      
    \end{aligned}
    \end{equation}
    \item Suppose $n = 2t$:
    \begin{equation}
      \begin{aligned}
        A_{nn} &= \sum_{k=0}^{t} l_{2t,2k}^2 \\
        &= 2^{2t} (t!)^2 \sum_{k=0}^{t-1} \binom tk^2\\
        &= 2^n (t!)^2 \left( \binom{2t}t - 1  \right) \\
        &= 2^n n! \left( 1- \binom n{\floor*{\frac n2}}^{-1}  \right).
    \end{aligned}
    \end{equation}
    \item Suppose $n = 2t+1$:
    \begin{equation}
      \begin{aligned}
        A_{nn} &= \sum_{\substack{k=0 \\ k \text{ odd}}}^{t} l_{2t+1,k}^2 \\
        &= 2^{2t+1} (t+1)!^2 \sum_{k=0}^{t-1} \frac{1}{k+1}\binom tk^2\\
        &= 2^n t! (t+1)! \sum_{k=0}^{t-1} \binom{t+1}{k+1}\binom tk \\
        &= 2^n t! (t+1)! \left( \binom{2t+1}t - 1  \right) \\
        &= 2^n n! \left( 1- \binom n{\floor*{\frac n2}}^{-1}  \right).
    \end{aligned}
    \end{equation}
\end{itemize}

\noindent In both cases, $A$ is indeed constructed as:
\begin{equation}
    (A_{\bm{\mu}})_{i,j} = \begin{cases} 
      \sum_{k=0}^l \mu_k \binom lk l! & \text{when } i+j=2l, \\
      0 & \text{otherwise},
\end{cases}
\end{equation}
for $\bm{\mu} = (F_n^n, F_n^n, \dots, F_n^n, 1 - F_n^n)$ with $F_n^n = \frac {1}{2^n} \binom n{\floor*{\frac n2}}$, and this provides a feasible solution of \refprog{lowerDSDPn} for $m=n$, with value $F_n^n$.
This shows the optimality of $\bm F^n$ for (SDP$_n^{n,\le}$) (and the fact that strong duality holds between the programs (SDP$_n^{n,\le}$) and (D-SDP$_n^{n,\le}$), which we already knew from Theorem~\ref{th:sdlower}).

\end{proof}

\noindent For \refprog{lowerDSDPn}, we see numerically that the optimal solution is the same as for (D-SDP$_n^{n,\le}$), for a few values of $m$ greater than $n$. However, this is no longer the case for higher values, for example when $n=3$ and $m=10$.


\newpage
\hypersetup{linkcolor=black}
\section{Strong duality between  \texorpdfstring{\ref{prog:LP_recall}}{(LP)} and  \texorpdfstring{\ref{prog:DLP_recall}}{(D-LP)}}
\label{sec:app_sdLP}
\hypersetup{linkcolor=cyan}

Theorem~\ref{th:sdLP} shows that the optimal values of \refprog{LP_recall} and \refprog{DLP_recall} are equal, i.e., that we have strong duality between those programs both when they are expressed in the search space $L^2(\R_+)$ and $\mathcal S(\R_+)$.

In this section we give another proof of strong duality between these linear programs as usually done in the literature of infinite-dimensional optimisation \cite{henrion_14,lasserre_10}, when they are expressed in the search space $L^2(\R_+)$. 

Note that this result itself is not enough to provide the convergence of our hierarchy of semidefinite programs, hence the need for another proof technique: strong duality directly results from the fact the feasible set of \refprog{LP_recall} is closed when it is expressed over the search space $L^2(\R_+)$.

We use notations from Appendix~\ref{sec:app_dualSDP}. From \cite[IV--(7.2)]{barvinok_02}, there is no duality gap between \refprog{LPstdform} and \refprog{DLPstdform} if there is a primal feasible plan and the cone:
\begin{equation}
    \begin{aligned}
        \mathcal{K} &= \Big\{ \Big( A(e_1), \langle e_1,c \rangle_1 \Big) : e_1 \in K_1   \Big\} \\
        &= \Big\{ \Big( \sum_k u_k, x \in \R_ + \mapsto f(x) - \sum_k u_k \mathcal L_k(x), F_n \Big) : ((u_k),f) \in K_1 \Big\}
    \end{aligned}
\end{equation}
is closed in $E_2 \oplus \R$ (for the weak topology).

\begin{proof}
The null sequence provides a feasible plan for the primal problem.

Next, we consider a sequence $(e_{1j})_j = (((u_k^j)_k)_j\footnote{Because we are dealing with a sequence of sequences, we use the upper index to refer to the embracing sequence.},(f_j)_j) \in K_1^{\N} = \ell^2(\N)\times L^2_+(\R_+)^{\N} $ and we want to show that the accumulation point $(b,g,a) = \lim_{j \rightarrow \infty} (A(e_{1j}), \langle e_{1j},c \rangle_1) $ belongs to $\mathcal{K}$ where $a,b \in \R$ and $g \in L^2(\R_+)$. 

For all $j \in \N$, $(u_k^j)_k \in \ell^2$ and for all $k \in \N$, $u_k^j$ is bounded. 
Thus, for all $k \in \N$, the sequence $(u_k^j)_j$ is bounded and via diagonal extraction there exists $\phi: \N \rightarrow \N$ strictly increasing such that $(u_k^j)_{\phi(j)}$ converges. 
We denote $\tilde{u}_k$ its limit. Since $\ell^2$ is closed, the sequence $(\tilde{u}_k)_k$ belongs to $\ell^2$ and we have $b = \sum_k \tilde{u}_k$ and $a=\tilde{u}_n$. 

Now $f_j - \sum_k u_k^j \mathcal L_k \longrightarrow g$ so that $f_j \longrightarrow g + \sum_k \tilde{u}_k \mathcal L_k \in L^2_+(\R_+)$ since $L^2_+(\R_+)$ is closed.
Thus, for $\tilde{e}_1 = ((\tilde{u}_k)_k, g + \sum_k \tilde{u}_k \mathcal L_k) \in K_1$, $(b,g,a) = (A(\tilde{e}_1), \langle \tilde{e}_1,c \rangle_1)$ and $(b,g,a) \in \mathcal{K}$.
\end{proof}


\newpage
\section{Multimode case}
\label{sec:app_multi}

In this section, we provide the technical background for proving the convergence of the hierarchies of semidefinite programs in the multimode setting. The results obtained are summarised in Fig.~\ref{fig:structuremulti}.

\subsection{Multimode notations and definitions}
\label{sec:app_multidef}

We use bold math for multi-index notations. The main advantage of these notations is that the proofs of most technical results are easily extended to the multimode setting by replacing standard notations by multi-index notations.

Let $M$ denote the number of modes. We consider $M$ copies $\mathcal H^{\otimes M}$ of a separable Hilbert space $\mathcal H$. We denote the corresponding multimode orthonormal Fock basis by $\{\ket{\bm n}\}_{\bm n\in\mathbb N^M}$.
For all $\bm\alpha=(\alpha_1,\dots,\alpha_M)\in\mathbb C^M$, $\bm k=(k_1,\dots,k_M)\in\mathbb N^M$, $\bm n=(n_1,\dots,n_M)\in\mathbb N^M$ and $m\in\mathbb N$, we introduce the notations:
\begin{equation}
    \begin{aligned}
    \bm0&=(0,\dots,0)\in\mathbb N^M\\
    \bm1&=(1,\dots,1)\in\mathbb N^M\\
    m\bm 1&=(m,\dots,m)\\
    m\bm k&=(mk_1,\dots,mk_M)\\
    \pi_{\bm k}&=\prod_{i=1}^M(k_i+1)\\
    \hat D(\bm\alpha)&=\hat D(\alpha_1)\otimes\cdots\otimes\hat D(\alpha_M)\\
    \ket{\bm k}&=\ket{k_1}\otimes\cdots\otimes\ket{k_M}\\
    \bra{\bm k}&=\bra{k_1}\otimes\cdots\otimes\bra{k_M}\\
    \bm k\le\bm n\,&\Leftrightarrow\,k_i\le n_i\quad\forall i=1,\dots,M\\
    L_{\bm k}(\bm\alpha)&=L_{k_1}(\alpha_1)\cdots L_{k_M}(\alpha_M)\\
    \mathcal L_{\bm k}(\bm\alpha)&=\mathcal L_{k_1}(\alpha_1)\cdots\mathcal L_{k_M}(\alpha_M)\\
    |\bm k|&=k_1+\cdots+k_M\\
    \bm\alpha^{\bm k}&=\alpha_1^{k_1}\cdots\alpha_M^{k_M}\\
    \bm k!&=k_1!\cdots k_m!\\
    \binom{\bm n}{\bm k}&=\binom{n_1}{k_1}\cdots\binom{n_M}{k_M}\\
    \bm k+\bm n&=(k_1+n_1,\dots,k_M+n_M)\\
    e^{\bm\alpha}&=e^{\alpha_1}\cdots e^{\alpha_M}.
    \end{aligned}
\end{equation}
A multivariate polynomial of degree $p\in\mathbb N$ may then be written in the compact form $P(\bm x)=\sum_{|\bm l|\le p}p_{\bm l}\bm x^{\bm l}$, where the sum is over all the tuples $\bm l\in\N^M$ such that $|\bm l|\le m$, also known as the weak compositions of the integer $m$. There are $\binom{M+m}m$ such tuples. In what follows, we will also consider sums over all the tuples $\bm l\in\N^M$ such that $\bm l\le\bm k$, for $\bm k=(k_1,\dots,k_M)\in\N^M$. There are $\pi_{\bm k}$ such tuples.
In particular, for all $\bm x\in\R_+^M$ and all $\bm k\in\N^M$,
\begin{equation}
    L_ {\bm k}(\bm x)=\sum_{\bm l\le\bm k}\frac{(-1)^{|\bm l|}}{\bm l!}\binom{\bm k}{\bm l}\bm x^{\bm l}\quad\text{and}\quad \bm x^{\bm k}=\sum_{\bm l\le\bm k}(-1)^{|\bm l|}\binom{\bm k}{\bm l}\bm k!L_{\bm l}(\bm x).
\end{equation}
We extend a few definitions from the single-mode case.

For $\bm s=(s_{\bm k})_{\bm k\in\N^M}\in\R^{\N^M}$, we define the associated formal series of multivariate Laguerre functions:
\begin{equation}
    f_{\bm s}:=\sum_{\bm k}s_{\bm k}\mathcal L_{\bm k},
\end{equation}
where $\bm s$ is the so-called sequence of Laguerre moments of $f_{\bm s}$. 
For $m\in\mathbb N$, we also define the associated $\binom{M+m}m\times\binom{M+m}m$ matrix $A_{\bm s}$ by
\begin{equation}
    (A_{\bm s})_{\bm i,\bm j} = \begin{cases} 
      \sum_{\bm k\le\bm l}s_{\bm k}\binom{\bm l}{\bm k}\bm l! &\text{when }\bm i+\bm j=2\bm l, \\
      0 &\text{otherwise,}
   \end{cases}
\end{equation}
where $\bm i,\bm j\in\N^M$ with $|\bm i|\le m$ and $|\bm j|\le m$.

The multimode Laguerre functions $(\mathcal L_{\bm k})_{\bm k\in\N^M}$ form an orthonormal basis of the space $L^2(\R_+^M)$ of real square-integrable functions over $\R_+^M$ equipped with the usual scalar product:
\begin{equation}\label{eq:braketmulti}
\braket{f,g}=\int_{\R_+^M}{f(\bm x)g(\bm x)d\bm x},
\end{equation}
for $f,g\in L^2(\R_+^M)$. We denote by $L^2_+(\R_+^M)$ its subset of non-negative elements. 
The space $L^2(\R_+^M)$ is isomorphic to its dual space ${L^2}'(\R_+^M)$: elements of ${L^2}'(\R_+^M)$ can be identified by the Lebesgue measure on $\R_+^M$ times the corresponding function in $L^2(\R_+^M)$.

Moreover, the elements of the space $\mathcal S(\R_+^M)$ of Schwartz functions over $\R_+^M$, i.e., the space of $C^{\infty}$ functions that go to $0$ at infinity faster than any inverse polynomial, as do their derivatives, can be written as series of Laguerre functions with a sequence indexed by $\N^M$ of rapidly decreasing coefficients (which go to $0$ at infinity faster than any inverse $M$-variate polynomial). Its dual space $\mathcal S'(\R_+^M)$ of tempered distributions over $\R_+^M$ is characterised as the space of formal series of Laguerre functions with a slowly increasing sequence indexed by $\N^M$ of coefficients (sequences that are upper bounded by an $M$-variate polynomial)~\cite{guillemot1971developpements}.
We also extend the definition of the duality $\braket{\dummy,\dummy}$ in~\refeq{braketmulti} to these spaces.

For all $m\in\mathbb N$, the set of series of Laguerre functions over $\R_+^M$ truncated at $m$ is denoted $\mathcal R_m(\R_+^M)$. This is the set of $M$-variate polynomials $P(\bm x)=\sum_{|\bm k|\le m}p_{\bm k}\bm x^{\bm k}$ of degree at most $m$ multiplied by the function $\bm x\mapsto e^{-\frac{\bm x}2}$. 
Let $\mathcal R_{m,+}(\R_+^M)$ denotes its subset of non-negative elements where the polynomial $P$ is such that $\bm x\mapsto P(\bm x^2)$ has a sum-of-squares decomposition. 

Similarly, for $\bm m\in\N^M$, the set of truncated series of Laguerre functions over $\R_+^M$ with monomials smaller than $\bm m$ is denoted $\mathcal R_{\bm m}(\R_+^M)$. This is the set of $M$-variate polynomials $P(\bm x)=\sum_{\bm k\le\bm m}p_{\bm k}\bm x^{\bm k}$, multiplied by the function $\bm x\mapsto e^{-\frac{\bm x}2}$. Let $\mathcal R_{\bm m,+}(\R_+^M)$ denotes its subset of non-negative elements where the polynomial $P$ is such that $\bm x\mapsto P(\bm x^2)$ has a sum-of-squares decomposition. 

We recall here the expressions of the linear program~\refprog{LP_multi} and its dual~\refprog{DLP_multi}:
\leqnomode
\begin{flalign*}
   \label{prog:LP_multiapp}
    \tag*{(LP$_{\bm n}^{L^2}$)}
    \hspace{2.8cm}\left\{
        \begin{aligned}
            & & & \Sup{(F_{\bm k})_{\bm k\in\N^M}\in\ell^2(\N^M)} F_{\bm n}  \\
            & \text{subject to} & & \sum_{\bm k} F_{\bm k} =  1  \\
            & \text{and} & & \forall \bm k \in \N^M, \quad F_{\bm k} \geq  0  \\
            & \text{and} & & \forall \bm x \in \R_+^M, \quad\sum_{\bm k} F_{\bm k}\mathcal L_{\bm k}(\bm x) \geq  0,
        \end{aligned}
    \right. &&
\end{flalign*}
\reqnomode
where the optimisation is over real sequences indexed by elements of $\mathbb N^M$. Its dual linear program reads
\leqnomode
\begin{flalign*}
   \label{prog:DLP_multiapp}
    \tag*{(D-LP$_{\bm n}^{L^2}$)}
    \hspace{2.8cm}\left\{
        \begin{aligned}
            & & & \Inf{\substack{y\in\R\\\mu\in {L^2}'(\R_+^M)}}y  \\
            & \text{subject to} & & \forall \bm k \neq \bm n \in \N^M,\; y \geq \int_{\R_+^M}{\mathcal L_{\bm k}}{d\mu} \\
            & \text{and} & & y \geq 1 + \int_{\R_+^M}{\mathcal L_{\bm n}}{d\mu}\\
            & \text{and} & & \forall f \in L^2_+(\R_+^M), \; \langle \mu,f \rangle \geq  0.
        \end{aligned}
    \right. &&
\end{flalign*}
\reqnomode
\noindent We also recall the expression or \refprog{LP_multiS} and \refprog{DLP_multiS}:
\leqnomode
\begin{flalign*}
   \label{prog:LP_multiappS}
    \tag*{(LP$_{\bm n}^{\mathcal S}$)}
    \hspace{2.8cm}\left\{
        \begin{aligned}
            & & & \Sup{(F_{\bm k})_{\bm k\in\N^M}\in\ \mathcal S(\N^M)} F_{\bm n}  \\
            & \text{subject to} & & \sum_{\bm k} F_{\bm k} =  1  \\
            & \text{and} & & \forall \bm k \in \N^M, \quad F_{\bm k} \geq  0  \\
            & \text{and} & & \forall \bm x \in \R_+^M, \quad\sum_{\bm k} F_{\bm k}\mathcal L_{\bm k}(\bm x) \geq  0,
        \end{aligned}
    \right. &&
\end{flalign*}
\reqnomode
\leqnomode
\begin{flalign*}
   \label{prog:DLP_multiappS}
    \tag*{(D-LP$_{\bm n}^{\mathcal S}$)}
    \hspace{2.8cm}\left\{
        \begin{aligned}
            & & & \Inf{\substack{y\in\R\\\mu\in \mathcal S'(\R_+^M)}}y  \\
            & \text{subject to} & & \forall \bm k \neq \bm n \in \N^M,\; y \geq \int_{\R_+^M}{\mathcal L_{\bm k}}{d\mu} \\
            & \text{and} & & y \geq 1 + \int_{\R_+^M}{\mathcal L_{\bm n}}{d\mu}\\
            & \text{and} & & \forall f \in L^2_+(\R_+^M), \; \langle \mu,f \rangle \geq  0.
        \end{aligned}
    \right. &&
\end{flalign*}
\reqnomode

\subsection{Multimode semidefinite programs}
\label{sec:app_multiSDP}

\begin{figure}[ht]
	\begin{center}
		\centering

\begin{tikzpicture}[scale=0.70]

\node[inner sep=0pt] (A) at (-0.8,-1) {};
\node[inner sep=0pt] (B) at (-0.8,8) {};
\node[inner sep=0pt] (C1) at (-0.8,0) {};
\node[inner sep=0pt] (C2') at (-0.8,2.75) {};
\node[inner sep=0pt] (C2) at (-0.8,4.25) {};
\node[inner sep=0pt] (C3) at (-0.8,7) {};

\draw[|-|] (-0.8,-1) -- (-0.8,-1);
\draw[-|] (-0.8,-1) -- (-0.8,0);
\draw[-|] (-0.8,0) -- (-0.8,2.75);
\draw[dashed] (-0.8,2.85) -- (-0.8,4.2);
\draw[|-|] (-0.8,4.25) -- (-0.8,7);
\draw[-|] (-0.8,7) -- (-0.8,8);

\node[left] (zero) at (A) {0};
\node[left] (one) at (B) {1};
\node[left] (omegainf) at (C1) {$\omega_{\bm n}^{m\bm1,\le}$};
\node[left] (omega) at (C2) {$\omega_{\bm n}^{L^2}$};
\node[left] (omega) at (C2') {$\omega_{\bm n}^{\mathcal S}$};
\node[left] (omegasup) at (C3) {$\omega_{\bm n}^{m\bm1,\ge}$};

\hypersetup{linkcolor=black}
\node[inner sep=0pt] (SDPsup) at (2,7) {\Large \ref{prog:upperSDPnalt1_multiapp}};
\node[inner sep=0pt] (D-SDPsup) at (9,7) {\Large \ref{prog:upperDSDPnalt1_multiapp}};

\node[inner sep=0pt] (LP) at (2,4.25) {\Large \ref{prog:LP_multiapp}};
\node[inner sep=0pt] (D-LP) at (9,4.25) {\Large \ref{prog:DLP_multiapp}};

\node[inner sep=0pt] (LPS) at (2,2.75) {\Large \ref{prog:LP_multiappS}};
\node[inner sep=0pt] (D-LPS) at (9,2.75) {\Large \ref{prog:DLP_multiappS}};

\node[inner sep=0pt] (SDPinf) at (2,0) {\Large \ref{prog:lowerSDPnalt1_multiapp}};
\node[inner sep=0pt] (D-SDPinf) at (9,0) {\Large \ref{prog:lowerDSDPnalt1_multiapp}};
\hypersetup{linkcolor=cyan}

\node[inner sep=0pt] (eqsup) at ($.54*(SDPsup)+.46*(D-SDPsup)$) {$\substack{\text{Theorem}~\ref{th:sduppermulti} \\ =\joinrel=}$};

\node[inner sep=0pt] (eq) at ($.54*(LP)+.46*(D-LP)$) {$\substack{\text{Theorem}~\ref{th:sdLPmulti} \\ =\joinrel=}$};

\node[inner sep=0pt] (eq) at ($.54*(LPS)+.46*(D-LPS)$) {$\substack{\text{Theorem}~\ref{th:sdLPmulti} \\ =\joinrel=}$};

\node[inner sep=0pt] (eqinf) at ($.54*(SDPinf)+.46*(D-SDPinf)$) {$\substack{\text{Theorem}~\ref{th:sdlowermulti} \\ =\joinrel=}$};

\node[inner sep=0pt] (CVlu) at ($.5*(SDPsup)+.5*(LP)$) {$\substack{\text{Theorem~\ref{th:upperCVmulti}} \\\hspace{20pt} \big\downarrow m\rightarrow \infty}$};

\node[inner sep=0pt] (CVld) at ($.5*(SDPinf)+.5*(LPS)$) {$\substack{\hspace{20pt} \big\uparrow m\rightarrow \infty \\ \text{Theorem~\ref{th:lowerCVmulti}}}$};

\node[inner sep=0pt] (CVru) at ($.5*(D-SDPsup)+.5*(D-LP)$) {$\substack{\text{Theorem~\ref{th:upperCVmulti}} \\\hspace{20pt} \big\downarrow m\rightarrow \infty}$};

\node[inner sep=0pt] (CVrd) at ($.5*(D-SDPinf)+.5*(D-LPS)$) {$\substack{\hspace{20pt} \big\uparrow m\rightarrow \infty \\ \text{Theorem~\ref{th:lowerCVmulti}}}$};

\end{tikzpicture}
		\hypersetup{linkcolor=black}
		\caption{Multimode hierarchies of semidefinite relaxations and restrictions converging to the linear program \ref{prog:LP_multiapp}, together with their dual programs. The upper index $m$ denotes the level of the relaxation or restriction. On the left are the associated optimal values. The equal sign denotes strong duality, i.e., equality of optimal values, and the arrows denote convergence of the corresponding sequences of optimal values. The hierarchies \ref{prog:upperSDPn_multi} and \ref{prog:lowerSDPn_multi} in the main text are different from the ones appearing in the figure, but equivalent by Lemma~\ref{lem:equiv}. The question of whether $\omega_{\bm n}^{L^2} = \omega_{\bm n}^{\mathcal S}$ is left open.}
		\hypersetup{linkcolor=cyan}
		\label{fig:structuremulti}
	\end{center}
\end{figure}
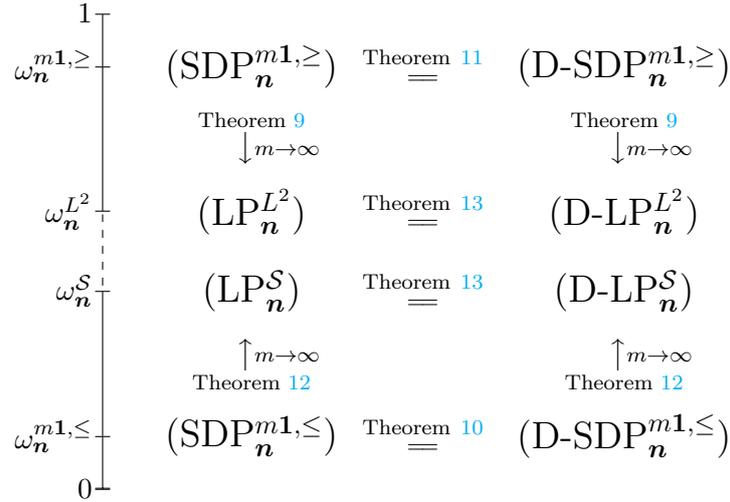

Hereafter, we state without proofs the technical results used to derive the semidefinite programs in section~\ref{sec:multi} and their dual programs. It is a straightforward exercise to obtain the proofs from their single-mode version by using multi-index notations.

There are two natural ways to obtain relaxations and restrictions by replacing constraints on non-negative functions by constraints on non-negative polynomials: either by considering polynomials $P(\bm x)=\sum_{|\bm k|\le m}p_{\bm k}\bm x^{\bm k}$ of degree at most $m$ for $m\in\N$, or by considering polynomials $P(\bm x)=\sum_{\bm k\le \bm m}p_{\bm k}\bm x^{\bm k}$ with monomials smaller than $\bm m$ for $m\in\N^M$. Note that when $M=1$ these two are equivalent. 

All the results below containing conditions of the form $|\bm k|\le m$ for $m\in\N$ are also valid when replaced by conditions of the form $\bm k\le\bm m$ for $\bm m=(m_1,\dots,m_M)\in\N^M$, with the same proofs, by replacing $\binom{M+m}m$ by $\pi_{\bm m}=\prod_{i=1}^M(m_i+1)$.

\begin{lemma}[Equivalent of Lemma~\ref{lem:pospolyR}]\label{lem:pospolyRmulti}
Let $p\in\mathbb N$ and let $P$ be a multivariate polynomial of degree $2p$. Let $\bm X=(\bm x^{\bm k})_{|\bm k|\le p}$ be the vector of monomials. Then, $P$ has a sum-of-squares decomposition if and only if there exists a real ${\binom{M+p}p}\times{\binom{M+p}p}$ positive semidefinite matrix $Q$ such that for all $x\in\R^M$,
\begin{equation}
    P(\bm x)=\bm X^TQ\bm X.
\end{equation}
\end{lemma}

\begin{lemma}[Generalisation of Lemma~\ref{lem:pospolyR+}]
Let $P$ be a non-negative polynomial over $\R_+^M$ such that $\bm x\mapsto P(\bm x^2)$ has a sum-of-squares decomposition. Then, $P$ can be written as a sum of polynomials of the form $\sum_{|\bm l|\le p}\bm x^{\bm l}\sum_{\bm i+\bm j=2\bm l}y_{\bm i}y_{\bm j}$ for $p\in\N$ and $y_{\bm i}\in\mathbb R$ for all $\bm i\in\mathbb N^M$ such that $|\bm i|\le p$.
\end{lemma}

\begin{lemma}[Generalisation of Lemma~\ref{lem:momentmatrix}]\label{lem:momentmatrixmulti}
Let $m\in\N$ and let $\bm s=(s_{\bm k})_{\bm k\in\N^M}\in\R^{\N^M}$. The following propositions are equivalent:
\begin{enumerate}[label=(\roman*)]
\item $\forall g\in\mathcal R_{m,+}(\R_+^M),\;\braket{f_{\bm s},g}\ge0$,
\item $A_{\bm s}\succeq0$.
\end{enumerate}
\end{lemma}

Using Lemma~\ref{lem:momentmatrixmulti} we obtain the semidefinite \textit{relaxations}:
\leqnomode
\begin{flalign*}
   \label{prog:upperSDPn_multiapp}
    \tag*{$(\text{SDP}^{m,\geq}_{\bm n})$}
    \hspace{2.8cm}\left\{
        \begin{aligned}
            & & & \Sup{\substack{A\in\text{Sym}_{\binom{M+m}m}\\ \bm F\in\R^{\binom{M+m}m}}} F_{\bm n}  \\
            & \text{subject to} & & \sum_{|\bm k|\le m} F_{\bm k} =  1  \\
            & \text{and} & & \forall |\bm k|\le m, \quad F_{\bm k} \geq  0 \\
            & \text{and} & & \forall|\bm l|\le m,\forall\bm i+\bm j=2\bm l,\quad A_{\bm i\bm j}=\sum\limits_{\bm k\le\bm l}F_{\bm k}\binom{\bm l}{\bm k}\bm l!\\
            & \text{and} & & \forall|\bm r|\le2m,\bm r\!\neq\!2\bm l,\forall|\bm l|\le m,\forall\bm i+\bm j=\bm r,\;A_{\bm i\bm j}=0\\
            & \text{and} & & A \succeq 0.
        \end{aligned}
    \right. &&
\end{flalign*}
\reqnomode
for all $m\ge|\bm n|$. We denote its optimal value by $\omega_{\bm n}^{m,\ge}$.
The corresponding dual programs are given by:
\leqnomode
\begin{flalign*}
    \label{prog:upperDSDPn_multiapp}
    \tag*{(D-SDP$_{\bm n}^{m,\ge}$)}\hspace{2.8cm}\left\{
        \begin{aligned}
            & & & \Inf{\substack{Q\in\text{Sym}_{\binom{M+m}m}\\y \in \R,\bm\mu\in\R^{\binom{M+m}m}}} y\\
            & \text{subject to} & & y\ge1+\mu_{\bm n}\\
            &\text{and} & & \forall |\bm k|\le m,\bm k\neq\bm n,\quad y\ge\mu_{\bm k}\\
            &\text{and} & & \forall|\bm l|\le m,\quad\sum_{\bm i+\bm j=2\bm l}Q_{\bm i\bm j}=\sum_{\bm k\ge\bm l}\frac{(-1)^{|\bm k|+|\bm l|}}{\bm l!}\binom{\bm k}{\bm l}\mu_{\bm k}\hspace{-2cm}\\
            &\text{and} & & Q\succeq0,
        \end{aligned}
    \right. &&
\end{flalign*}
\reqnomode
for all $m\ge|\bm n|$. Similarly, using Lemma~\ref{lem:pospolyRmulti} we obtain the semidefinite \textit{restrictions}:
\leqnomode
\begin{flalign*}
    \label{prog:lowerSDPn_multiapp}
    \tag*{(SDP$_{\bm n}^{m,\le}$)}\hspace{2.8cm}\left\{
        \begin{aligned}
            & & & \Sup{\substack{Q\in\text{Sym}_{\binom{M+m}m}\\\bm F\in\R^{\binom{M+m}m}}} F_{\bm n}\\
            & \text{subject to} & & \sum_{|\bm k|\le m} F_{\bm k} =  1\\
            &\text{and} & & \forall |\bm k|\le m, \quad F_{\bm k} \geq  0\\
            &\text{and} & & \forall |\bm l|\le m,\quad\sum_{\bm i+\bm j=2\bm l}Q_{\bm i\bm j}=\sum_{\bm k\ge\bm l} \frac{(-1)^{|\bm k|+|\bm l|}}{\bm l!}\binom{\bm k}{\bm l}F_{\bm k}\hspace{-4cm}\\
            &\text{and} & & \forall |\bm r|\le 2m,\bm r\neq2\bm l,\forall |\bm l|\le m,\quad\sum_{\bm i+\bm j=\bm r}Q_{\bm i\bm j}=0\hspace{-4cm}\\
            &\text{and} & & Q\succeq0,
        \end{aligned}
    \right. &&
\end{flalign*}
\reqnomode
for all $m\ge|\bm n|$. We denote its optimal value by $\omega_{\bm n}^{m,\le}$.
The corresponding dual programs are given by:
\leqnomode
\begin{flalign*}
   \label{prog:lowerDSDPn_multiapp}
    \tag*{$(\text{D-SDP}^{m,\leq}_{\bm n})$}
    \hspace{2.8cm}\left\{
        \begin{aligned}
            & & & \Inf{\substack{A\in\text{Sym}_{\binom{M+m}m}\\y,\bm\mu\in\R\times\R^{\binom{M+m}m}}} y\\
            & \text{subject to} & & y\ge1+\mu_{\bm n}\\
            & \text{and} & & \forall |\bm k|\le m,\bm k\neq\bm n,\quad y\ge\mu_{\bm k}\\
            & \text{and} & & \forall|\bm l|\le m,\forall\bm i+\bm j=2\bm l,\quad A_{\bm i\bm j}=\sum\limits_{\bm k\le\bm l}\mu_{\bm k}\binom{\bm l}{\bm k}\bm l!\\
            & \text{and} & & A \succeq 0,
        \end{aligned}
    \right. &&
\end{flalign*}
\reqnomode
for all $m\ge|\bm n|$. Like in the single-mode case, note that without loss of generality the condition $y\le1$, and thus $\mu_{\bm k}\le1$ for all $\bm k$, can be added to the optimisation, since setting $A=0$, $y=1$ and $\bm\mu=0$ gives a feasible solution with objective value 1.

These are the relaxations and restrictions of~\refprog{LP_multiapp} obtained by considering polynomials of degree less or equal to $m$, where the optimisation is over matrices and vectors indexed by elements of $\mathbb N^m$ with sum of coefficients lower that $m$. Alternatively, we may also consider the relaxations and restrictions obtained by considering polynomials with monomials smaller than $\bm m\in\N^M$, where the optimisation is over matrices and vectors indexed by elements of $\mathbb N^m$ lower that $\bm m$. Recalling the notation $\pi_{\bm m}=\prod_{i=1}^M(m_i+1)$, the corresponding semidefinite \textit{relaxations} are given by
\leqnomode
\begin{flalign*}
   \label{prog:upperSDPnalt_multiapp}
    \tag*{$(\text{SDP}^{\bm m,\geq}_{\bm n})$}
    \hspace{2.8cm}\left\{
        \begin{aligned}
            & & & \Sup{\substack{A\in\text{Sym}_{\pi_{\bm m}}\\ \bm F\in\R^{\pi_{\bm m}}}} F_{\bm n}  \\
            & \text{subject to} & & \sum_{\bm k\le\bm m} F_{\bm k} =  1  \\
            & \text{and} & & \forall\bm k\le\bm m, \quad F_{\bm k} \geq  0 \\
            & \text{and} & & \forall\bm l\le\bm m,\forall\bm i+\bm j=2\bm l,\quad A_{\bm i\bm j}=\sum\limits_{\bm k\le\bm l}F_{\bm k}\binom{\bm l}{\bm k}\bm l!\\
            & \text{and} & & \forall\bm r\le2\bm m,\bm r\!\neq\!2\bm l,\forall\bm l\le\bm m,\forall\bm i+\bm j=\bm r,\;A_{\bm i\bm j}=0\\
            & \text{and} & & A \succeq 0.
        \end{aligned}
    \right. &&
\end{flalign*}
\reqnomode
for all $\bm m\ge\bm n$. We denote its optimal value by $\omega_{\bm n}^{\bm m,\ge}$.
The corresponding dual programs are given by:
\leqnomode
\begin{flalign*}
    \label{prog:upperDSDPnalt_multiapp}
    \tag*{(D-SDP$_{\bm n}^{\bm m,\ge}$)}\hspace{2.8cm}\left\{
        \begin{aligned}
            & & & \Inf{\substack{Q\in\text{Sym}_{\pi_{\bm m}}\\y,\bm\mu\in\R\times\R^{\pi_{\bm m}}}} y\\
            & \text{subject to} & & y\ge1+\mu_{\bm n}\\
            &\text{and} & & \forall\bm k\le\bm m,\bm k\neq\bm n,\quad y\ge\mu_{\bm k}\\
            &\text{and} & & \forall\bm l\le\bm m,\quad\sum_{\bm i+\bm j=2\bm l}Q_{\bm i\bm j}=\sum_{\bm k\ge\bm l}\frac{(-1)^{|\bm k|+|\bm l|}}{\bm l!}\binom{\bm k}{\bm l}\mu_{\bm k}\hspace{-2cm}\\
            &\text{and} & & Q\succeq0,
        \end{aligned}
    \right. &&
\end{flalign*}
\reqnomode
for all $\bm m\ge\bm n$.
Similarly, the semidefinite \textit{restrictions} are given by:
\leqnomode
\begin{flalign*}
    \label{prog:lowerSDPnalt_multiapp}
    \tag*{(SDP$_{\bm n}^{\bm m,\le}$)}\hspace{2.8cm}\left\{
        \begin{aligned}
            & & & \Sup{\substack{Q\in\text{Sym}_{\pi_{\bm m}}\\\bm F\in\R^{\pi_{\bm m}}}} F_{\bm n}\\
            & \text{subject to} & & \sum_{\bm k\le\bm m} F_{\bm k} =  1\\
            &\text{and} & & \forall\bm k\le\bm m, \quad F_{\bm k} \geq  0\\
            &\text{and} & & \forall\bm l\le\bm m,\quad\sum_{\bm i+\bm j=2\bm l}Q_{\bm i\bm j}=\sum_{\bm k\ge\bm l} \frac{(-1)^{|\bm k|+|\bm l|}}{\bm l!}\binom{\bm k}{\bm l}F_{\bm k}\hspace{-4cm}\\
            &\text{and} & & \forall \bm r\le 2\bm m,\bm r\neq2\bm l,\forall\bm l\le\bm m,\quad\sum_{\bm i+\bm j=\bm r}Q_{\bm i\bm j}=0\hspace{-4cm}\\
            &\text{and} & & Q\succeq0,
        \end{aligned}
    \right. &&
\end{flalign*}
\reqnomode
for all $\bm m\ge\bm n$. We denote its optimal value by $\omega_{\bm n}^{\bm m,\le}$.
The corresponding dual programs are given by:
\leqnomode
\begin{flalign*}
   \label{prog:lowerDSDPnalt_multiapp}
    \tag*{$(\text{D-SDP}^{\bm m,\leq}_{\bm n})$}
    \hspace{2.8cm}\left\{
        \begin{aligned}
            & & & \Inf{\substack{A\in\text{Sym}_{\pi_{\bm m}}\\y,\bm\mu\in\R\times\R^{\pi_{\bm m}}}} y\\
            & \text{subject to} & & y\ge1+\mu_{\bm n}\\
            & \text{and} & & \forall\bm k\le\bm m,\bm k\neq\bm n,\quad y\ge\mu_{\bm k}\\
            & \text{and} & & \forall\bm l\le\bm m,\forall\bm i+\bm j=2\bm l,\quad A_{\bm i\bm j}=\sum\limits_{\bm k\le\bm l}\mu_{\bm k}\binom{\bm l}{\bm k}\bm l!\\
            & \text{and} & & A \succeq 0,
        \end{aligned}
    \right. &&
\end{flalign*}
\reqnomode
for all $\bm m\ge\bm n$. 

The programs~\refprog{upperSDPn_multiapp} and~\refprog{lowerSDPn_multiapp} respectively provide hierarchies of relaxations and restrictions of~\refprog{LP_multiapp}, since the set of $M$-variate polynomials of degree $m$ is included in the set of $M$-variate polynomials of degree $m+1$.
On the other hand, there is no natural ordering in $\mathbb N^M$ of the relaxations~\refprog{upperSDPnalt_multiapp} or the restrictions~\refprog{lowerSDPnalt_multiapp} (consider for instance $\bm m=(2,1)$ and $\bm m'=(1,2)$). In order to obtain proper hierarchies of semidefinite programs, we thus consider the subset of these programs where the tuple $\bm m$ is of the form $m\bm1=(m,\dots,m)\in\N^M$, for $m\in\N$. We have $\pi_{m\bm1}=(m+1)^M$, and the \textit{relaxations} are then given by
\leqnomode
\begin{flalign*}
   \label{prog:upperSDPnalt1_multiapp}
    \tag*{$(\text{SDP}^{m\bm1,\geq}_{\bm n})$}
    \hspace{2.8cm}\left\{
        \begin{aligned}
            & & & \Sup{\substack{A\in\text{Sym}_{(m+1)^M}\\ \bm F\in\R^{(m+1)^M}}} F_{\bm n}  \\
            & \text{subject to} & & \sum_{\bm k\le m\bm1} F_{\bm k} =  1  \\
            & \text{and} & & \forall\bm k\le m\bm1, \quad F_{\bm k} \geq  0 \\
            & \text{and} & & \forall\bm l\le m\bm1,\forall\bm i+\bm j=2\bm l,\quad A_{\bm i\bm j}=\sum\limits_{\bm k\le\bm l}F_{\bm k}\binom{\bm l}{\bm k}\bm l!\\
            & \text{and} & & \forall\bm r\le2m\bm 1,\bm r\!\neq\!2\bm l,\forall\bm l\le m\bm1,\forall\bm i+\bm j=\bm r,\;A_{\bm i\bm j}=0\\
            & \text{and} & & A \succeq 0.
        \end{aligned}
    \right. &&
\end{flalign*}
\reqnomode
for $m\ge\max_in_i$. We denote its optimal value by $\omega_{\bm n}^{m\bm1,\ge}$.
The corresponding dual programs are given by:
\leqnomode
\begin{flalign*}
    \label{prog:upperDSDPnalt1_multiapp}
    \tag*{(D-SDP$_{\bm n}^{m\bm1,\ge}$)}\hspace{2.8cm}\left\{
        \begin{aligned}
            & & & \Inf{\substack{Q\in\text{Sym}_{(m+1)^M}\\y,\bm\mu\in\R\times\R^{(m+1)^M}}} y\\
            & \text{subject to} & & y\ge1+\mu_{\bm n}\\
            &\text{and} & & \forall\bm k\le m\bm1,\bm k\neq\bm n,\quad y\ge\mu_{\bm k}\\
            &\text{and} & & \forall\bm l\le m\bm1,\quad\sum_{\bm i+\bm j=2\bm l}Q_{\bm i\bm j}=\sum_{\bm k\ge\bm l}\frac{(-1)^{|\bm k|+|\bm l|}}{\bm l!}\binom{\bm k}{\bm l}\mu_{\bm k}\hspace{-2cm}\\
            &\text{and} & & Q\succeq0,
        \end{aligned}
    \right. &&
\end{flalign*}
\reqnomode
for $m\ge\max_in_i$.
Similarly, the \textit{restrictions} are given by:
\leqnomode
\begin{flalign*}
    \label{prog:lowerSDPnalt1_multiapp}
    \tag*{(SDP$_{\bm n}^{m\bm1,\le}$)}\hspace{2.8cm}\left\{
        \begin{aligned}
            & & & \Sup{\substack{Q\in\text{Sym}_{(m+1)^M}\\\bm F\in\R^{(m+1)^M}}} F_{\bm n}\\
            & \text{subject to} & & \sum_{\bm k\le m\bm1} F_{\bm k} =  1\\
            &\text{and} & & \forall\bm k\le m\bm1, \quad F_{\bm k} \geq  0\\
            &\text{and} & & \forall\bm l\le m\bm1,\quad\sum_{\bm i+\bm j=2\bm l}Q_{\bm i\bm j}=\sum_{\bm k\ge\bm l} \frac{(-1)^{|\bm k|+|\bm l|}}{\bm l!}\binom{\bm k}{\bm l}F_{\bm k}\hspace{-4cm}\\
            &\text{and} & & \forall \bm r\le2m\bm1,\bm r\neq2\bm l,\forall\bm l\le m\bm1,\quad\sum_{\bm i+\bm j=\bm r}Q_{\bm i\bm j}=0\hspace{-4cm}\\
            &\text{and} & & Q\succeq0,
        \end{aligned}
    \right. &&
\end{flalign*}
\reqnomode
for $m\ge\max_in_i$. We denote its optimal value by $\omega_{\bm n}^{m\bm1,\le}$.
The corresponding dual programs are given by:
\leqnomode
\begin{flalign*}
   \label{prog:lowerDSDPnalt1_multiapp}
    \tag*{$(\text{D-SDP}^{m\bm1,\leq}_{\bm n})$}
    \hspace{2.8cm}\left\{
        \begin{aligned}
            & & & \Inf{\substack{A\in\text{Sym}_{(m+1)^M}\\y,\bm\mu\in\R\times\R^{(m+1)^M}}} y\\
            & \text{subject to} & & y\ge1+\mu_{\bm n}\\
            & \text{and} & & \forall\bm k\le m\bm1,\bm k\neq\bm n,\quad y\ge\mu_{\bm k}\\
            & \text{and} & & \forall\bm l\le m\bm1,\forall\bm i+\bm j=2\bm l,\quad A_{\bm i\bm j}=\sum\limits_{\bm k\le\bm l}\mu_{\bm k}\binom{\bm l}{\bm k}\bm l!\\
            & \text{and} & & A \succeq 0,
        \end{aligned}
    \right. &&
\end{flalign*}
\reqnomode
for $m\ge\max_in_i$. 

The programs~\refprog{upperSDPnalt1_multiapp} and~\refprog{lowerSDPnalt1_multiapp}  are the relaxations and restrictions of~\refprog{LP_multiapp} obtained by considering polynomials of individual degree in each variable less or equal to $m$. These programs respectively provide hierarchies of relaxations and restrictions of~\refprog{LP_multiapp}, since the set of $M$-variate polynomials with monomials lower than $m\bm1$ is included in the set of $M$-variate polynomials with monomials lower than $(m+1)\bm1$. 

Note that these hierarchies of programs obtained by setting $\bm m$ of the form $m\bm1$ capture the behaviour of all bounds that can be obtained from the more general family of programs indexed by $\bm m=(m_1,\dots,m_M)$, since $\bm m\le(\max_im_i)\bm1$, i.e., the bound obtained by considering the program indexed by $(\max_im_i)\bm1$ supersedes the bound obtained by considering the program indexed by $\bm m$. Formally, for all $\bm m=(m_1,\dots,m_M)\in\N^M$,
\begin{equation}
    \omega_{\bm n}^{\bm m,\ge}\ge\omega_{\bm n}^{(\max_im_i)\bm1,\ge}\quad\text{and}\quad\omega_{\bm n}^{\bm m,\le}\le\omega_{\bm n}^{(\max_im_i)\bm1,\le}.
\end{equation}
Finally, we show that both ways of defining the hierarchies are equivalent:

\begin{lemma}\label{lem:equiv}
For all $m\in\N$,
\begin{equation}\label{eq:interupper}
    \omega_{\bm n}^{m,\ge}\ge\omega_{\bm n}^{m\bm1,\ge}\ge\omega_{\bm n}^{Mm,\ge},
\end{equation}
and
\begin{equation}\label{eq:interlower}
    \omega_{\bm n}^{m,\le}\le\omega_{\bm n}^{m\bm1,\le}\le\omega_{\bm n}^{Mm,\le}.
\end{equation}
\end{lemma}

\begin{proof} For all $\bm k\in\N^M$ and all $m\in\N$ we have
\begin{equation}
    |\bm k|\le m\quad\Rightarrow\quad\bm k\le m\bm1=(m,\dots,m)\quad\Rightarrow\quad|\bm k|\le Mm.
\end{equation}
We thus obtain the corresponding inclusions between sets of $M$-variate polynomials:
\begin{enumerate*}[label=(\roman*)] \item$M$-variate polynomials of degree less or equal to $m$ have all their monomials lower than $m\bm1$, and \item all $M$-variate polynomials with monomials lower than $m\bm1$ have degree less or equal to $Mm$. \end{enumerate*}
Hence,
\begin{equation}\label{eq:inclusion}
    \mathcal R_{m,+}(\R_+^M)\overset{(\text{i})}{\subset}\mathcal R_{m\bm1,+}(\R_+^M)\overset{(\text{ii})}{\subset}\mathcal R_{Mm,+}(\R_+^M).
\end{equation}
In particular, \refprog{upperSDPn_multiapp} is a relaxation of~\refprog{upperSDPnalt1_multiapp} which is itself a relaxation of $(\text{SDP}_{\bm n}^{Mm,\ge})$, and \refprog{lowerSDPn_multiapp} is a restriction of~\refprog{lowerSDPnalt1_multiapp} which is itself a restriction of $(\text{SDP}_{\bm n}^{Mm,\le})$.
\end{proof}

\noindent This result implies that the two versions of the hierarchies of relaxations are interleaved (Eq.~\eqref{eq:interupper}), and that the two versions of the hierarchies of restrictions are also interleaved (Eq.~\eqref{eq:interlower}). As such, for any bound obtained with one version of the hierarchy at some fixed level, a better bound can be obtained with the other version at some other level. While this means that the hierarchies are equivalent, note that in practice it may be simpler to solve numerically the version where the parameter space is smaller.

\subsection{Convergence of the multimode hierarchies}
\label{sec:app_multiCV}
\setcounter{theorem}{7}

For $\bm n=(n_1,\dots,n_M)\in\N^M$, the sequences $(\omega_{\bm n}^{m,\ge})_{m\ge|\bm n|}$ and $(\omega_{\bm n}^{m\bm1,\ge})_{m\ge\max_in_i}$ (resp.\ $(\omega_{\bm n}^{m,\le})_{m\ge|\bm n|}$ and $(\omega_{\bm n}^{m\bm1,\le})_{m\ge\max_in_i}$) are decreasing (resp.\ increasing) sequences, lower bounded (resp.\ upper bounded) by $\omega_{\bm n}^{L^2}$ (resp. $\omega_{\bm n}^{\mathcal S})$). Hence, these sequences are converging. 

We show in what follows that $(\omega_{\bm n}^{m\bm1,\ge})_{m\ge\max_in_i}$ (resp.\ $(\omega_{\bm n}^{m\bm1,\le})_{m\ge\max_in_i}$) converges to $\omega_{\bm n}^{L^2}$ (resp. $\omega_{\bm n}^{\mathcal S}$). By Lemma~\ref{lem:equiv}, this implies that the sequence $(\omega_{\bm n}^{Mm,\ge})_{m\ge\max_in_i}$ (resp.\ $(\omega_{\bm n}^{Mm,\le})_{m\ge\max_in_i}$) also converges to $\omega_{\bm n}$. Since this is a subsequence of the converging sequence $(\omega_{\bm n}^{m,\ge})_{m\ge|\bm n|}$ (resp.\ $(\omega_{\bm n}^{m,\le})_{m\ge|\bm n|}$), it implies that the sequence $(\omega_{\bm n}^{m,\ge})_{m\ge|\bm n|}$ (resp.\ $(\omega_{\bm n}^{m,\le})_{m\ge|\bm n|}$) also converges to $\omega_{\bm n}^{L^2}$ (resp. $\omega_{\bm n}^{\mathcal S}$).

With similar proofs to the single-mode case using multi-index notations, we obtain the following result:

\begin{theorem}[Generalisation of Theorem~\ref{th:RHLaguerre}]
Let $\bm\mu=(\mu_{\bm k})_{\bm k\in\N^M}\in\R^{\N^M}$. Then, $\bm\mu$ is the sequence of Laguerre moments $\int_{\R_+^M}\mathcal L_{\bm k}(\bm x)d\mu(\bm x)$ of a non-negative distribution $\mu$ supported on $\R_+^M$ if and only if $\forall m\in\N,\forall g\in\mathcal R_{m,+}(\R_+^M),\;\braket{f_{\bm\mu},g}\ge0$.
\end{theorem}

\noindent The proof of this theorem is identical to the univariate case, with the use of Riesz--Haviland theorem over $\R_+^M$~\cite{haviland1936momentum} rather than $\R_+$. 

With Eq.~\eqref{eq:inclusion}, the proof of convergence of the multimode hierarchy of upper bounds is then obtained directly from its single-mode counterpart using multi-index notations:

\begin{theorem}[Generalisation of Theorem~\ref{th:upperCV}]\label{th:upperCVmulti}
The decreasing sequence of optimal values $\omega^{m\bm1,\ge}_{\bm n}$ of~\refprog{upperSDPnalt1_multiapp} converges to the optimal value $\omega_{\bm n}^{L^2}$ of \refprog{LP_multiapp}:
\begin{equation}
    \lim_{m\rightarrow+\infty}\omega^{m\bm1,\ge}_{\bm n}=\omega_{\bm n}^{L^2}.
\end{equation}
\end{theorem}

\noindent With Lemma~\ref{lem:equiv}, we also obtain
\begin{equation}
    \lim_{m\rightarrow+\infty}\omega^{m,\ge}_{\bm n}=\omega_{\bm n}^{L^2}.
\end{equation}

On the other hand, the proof of convergence of the single-mode hierarchy of lower bounds crucially exploits analytical feasible solutions of the programs~\refprog{lowerSDPn} in order to obtain two results:

\begin{itemize}
    \item Strong duality between programs~\refprog{lowerSDPn} and~\refprog{lowerDSDPn} (Theorem~\ref{th:sdlower}).
    \item The fact that the feasible set of~\refprog{lowerDSDPn} is compact with coefficients bounded independently of $m$ (Eq.~\eqref{eq:mucompact}). 
\end{itemize}

\noindent In what follows, we generalise these two results to the multimode setting by obtaining multimode analytical feasible solutions from products of single-mode ones. 

\begin{lemma}\label{lem:productsol}
For $m,n\in\mathbb N$ with $m\ge n$, let $Q(m,n)\in\text{Sym}_{m+1}$ and $\bm F(m,n)=(F_k(m,n))_k\in\R^{m+1}$ be feasible solutions of~\refprog{lowerSDPn}. Let $\bm m=(m_1,\dots,m_M)\in\N^M$ and $\bm n=(n_1,\dots,n_M)\in\N^M$ with $\bm m\ge\bm n$. Let $Q:=Q(m_1,n_1)\otimes\dots\otimes Q(m_M,n_M)\in\text{Sym}_{\pi_{\bm m}}$ and $\bm F=(F_{\bm k})_{\bm k\le\bm m}\in\R^{\pi_{\bm m}}$, where for all $\bm k=(k_1,\dots,k_M)\le\bm m$, $F_{\bm k}:=\prod_{i=1}^MF_{k_i}(m_i,n_i)$. Then, $(Q,\bm F)$ is a feasible solution of \refprog{lowerSDPnalt_multiapp}. Moreover, if $(Q(m_i,n_i),\bm F(m_i,n_i))$ is strictly feasible for all $i=1,\dots,M$ then $(Q,\bm F)$ is a strictly feasible solution of \refprog{lowerSDPnalt_multiapp}.
\end{lemma}

\begin{proof}
With the notations of the Lemma, we show the feasibility of $(Q,\bm F)$ (resp.\ strict feasibility).
We have $Q\succeq0$, $F_{\bm k}\ge0$ (resp.\ $Q\succ0$, $F_{\bm k}>0$) for all $\bm k\le\bm m$, and $Q_{\bm i\bm j}=\prod_{p=1}^MQ_{i_pj_p}(m_p,n_p)$ for all $\bm i=(i_1,\dots,i_M)\le\bm m$ and $\bm j=(j_1,\dots,j_M)\le\bm m$. Hence, for all $\bm r=(r_1,\dots,r_M)\le2\bm m$,
\begin{equation}
    \begin{aligned}
        \sum_{\bm i+\bm j=\bm r}Q_{\bm i\bm j}&=\sum_{i_1+j_1=r_1,\dots,i_M+j_M=r_M}\prod_{p=1}^MQ_{i_pj_p}(m_p,n_p)\\
        &=\prod_{p=1}^M\sum_{i_p+j_p=r_p}Q_{i_pj_p}(m_p,n_p).
    \end{aligned}
\end{equation}
In particular, if $\bm r\neq2\bm l$ for all $\bm l\le\bm m$, then at least one coefficient $r_p$ is odd, and the corresponding sum gives $0$ since $(Q(m_p,n_p),\bm F(m_p,n_p))$ is feasible for $(\text{SDP}_{n_p}^{m_p,\le})$. In that case, $\sum_{\bm i+\bm j=\bm r}Q_{\bm i\bm j}=0$. Otherwise, for all $\bm l=(l_1,\dots,l_M)\le\bm m$,
\begin{equation}
    \begin{aligned}
        \sum_{\bm i+\bm j=2\bm l}Q_{\bm i\bm j}&=\prod_{p=1}^M\sum_{i_p+j_p=2l_p}Q_{i_pj_p}(m_p,n_p)\\
        &=\prod_{p=1}^M\sum_{k_p\ge l_p} \frac{(-1)^{k_p+l_p}}{l_p!}\binom{k_p}{l_p}F_{k_p}(m_p,n_p)\\
        &=\sum_{l_1\le k_1\le m_1,\dots,l_M\le k_M\le m_M}\prod_{p=1}^M \frac{(-1)^{k_p+l_p}}{l_p!}\binom{k_p}{l_p}F_{k_p}(m_p,n_p)\\
        &=\sum_{\bm k\ge\bm l} \frac{(-1)^{|\bm k|+|\bm l|}}{\bm l!}\binom{\bm k}{\bm l}F_{\bm k},
    \end{aligned}
\end{equation}
where we used the feasibility of $(Q(m_p,n_p),\bm F(m_p,n_p))$ in the second line. Finally, 
\begin{equation}
    \begin{aligned}
        \sum_{\bm k\le\bm m}F_{\bm k}&=\sum_{k_1\le m_1,\dots,k_M\le m_M}\prod_{i=1}^MF_{k_i}(m_i,n_i)\\
        &=\prod_{i=1}^M\sum_{k_i=0}^{m_i}F_{k_i}(m_i,n_i)\\
        &=1,
    \end{aligned}
\end{equation}
since $\sum_{k=0}^mF_{k}(m,n)=1$ for all $m,n\in\N$ with $m\ge n$. This shows that $(Q,\bm F)$ is a feasible solution of~\refprog{lowerSDPnalt_multiapp} (resp.\ strictly feasible).
\end{proof}

\noindent A direct consequence of this construction is the following result:

\begin{theorem}[Generalisation of Theorem~\ref{th:sdlower}]\label{th:sdlowermulti}
Strong duality holds between the programs \refprog{lowerSDPnalt_multiapp} and \refprog{lowerDSDPnalt_multiapp}.
\end{theorem}

\begin{proof}
The proof of Theorem~\ref{th:sdlower} gives a strictly feasible solution $(Q(m,n),\bm F(m,n))$ of \refprog{lowerSDPn} for all $m\ge n$. By Lemma~\ref{lem:productsol}, the program \refprog{lowerSDPnalt_multiapp} thus has a strictly feasible solution, for all $\bm m\ge\bm n$. By Slater condition, this implies that strong duality holds between the programs \refprog{lowerSDPnalt_multiapp} and \refprog{lowerDSDPnalt_multiapp}.
\end{proof}

\noindent In particular, strong duality holds between the programs \refprog{lowerSDPnalt1_multiapp} and \refprog{lowerDSDPnalt1_multiapp}. Note that the multimode generalisation of Theorem~\ref{th:sdupper} is a direct consequence of Theorem~\ref{th:sdlowermulti}:

\begin{theorem}[Generalisation of Theorem~\ref{th:sdupper}]\label{th:sduppermulti}
Strong duality holds between the programs \refprog{upperSDPnalt_multiapp} and \refprog{upperDSDPnalt_multiapp}.
\end{theorem}

\begin{proof}
the strictly feasible solution of \refprog{lowerSDPnalt_multiapp} derived in the proof of Theorem~\ref{th:sdlowermulti} yields a strictly feasible solution for \refprog{upperSDPnalt_multiapp}. 
With Slater condition, this shows again that strong duality holds between the programs \refprog{upperSDPnalt_multiapp} and \refprog{upperDSDPnalt_multiapp}.
\end{proof}

\noindent In particular, strong duality holds between the programs \refprog{upperSDPnalt1_multiapp} and \refprog{upperDSDPnalt1_multiapp}.

We recall the following definition from the main text: for all $n\in\mathbb N$, $\bm F^n=(F_k^n)_{k\in\N}\in\R^{\N}$ where\\
$\bullet$ if $n$ is even:
\begin{equation}
        F_k^n:=\begin{cases}\frac1{2^n}\binom k{\frac k2}\binom{n-k}{\frac{n-k}2}&\text{when }k\le n, k\text{ even},\\0&\text{otherwise},\end{cases}
\end{equation}
$\bullet$ if $n$ is odd:
\begin{equation}
        F_k^n:=\begin{cases} \frac1{2^n}\frac{\binom n{\floor{\frac n2}}\binom{\floor{\frac n2}}{\floor{\frac k2}}^2 }{\binom nk},&\text{when }k\le n,\\0&\text{otherwise}. \end{cases}
\end{equation}
Let us define, for all $\bm n=(n_1,\dots,n_M)\in\mathbb N^m$, $\bm F^{\bm n}=(F_{\bm k}^{\bm n})_{k\in\N^M}\in\R^{\N^M}$ where
\begin{equation}\label{eq:defFbm}
    F_{\bm k}^{\bm n}:=\begin{cases}
      \prod_{i=1}^MF_{k_i}^{n_i} &\text{when }\bm k\le\bm n, \\
      0 &\text{otherwise.}
      \end{cases}
\end{equation}
By~\eqref{eq:boundFnn}, for all $n\in\mathbb N$, $F_n^n\ge\frac1{n+1}$, so for all $\bm n=(n_1,\dots,n_M)\in\mathbb N^M$,
\begin{equation}\label{eq:boundFnnmulti}
    F_{\bm n}^{\bm n}\ge\frac1{\pi_{\bm n}}.
\end{equation}
Like in the single-mode case, the program~\refprog{lowerSDPnalt_multiapp} is equivalent to
\leqnomode
\begin{flalign*}
    \label{prog:lowerSDPnalt_multiapp2}
    \tag*{(SDP$_{\bm n}^{\bm m,\le}$)}\hspace{2.8cm}\left\{
        \begin{aligned}
            & & & \Sup{\bm F\in\R^{\pi_{\bm m}}} F_{\bm n}\\
            & \text{subject to} & & \sum_{\bm k\le\bm m} F_{\bm k} =  1\\
            &\text{and} & & \forall\bm k\le\bm m, \quad F_{\bm k} \geq  0\\
            &\text{and} & & \sum_{\bm k\le\bm m}F_{\bm k}\mathcal L_{\bm k}\in\mathcal R_{\bm m,+}(\R_+^M),
        \end{aligned}
    \right. &&
\end{flalign*}
\reqnomode
with the dual program given by
\leqnomode
\begin{flalign*}
   \label{prog:lowerDSDPnalt_multiapp2}
    \tag*{$(\text{D-SDP}^{\bm m,\leq}_{\bm n})$}
    \hspace{2.8cm}\left\{
        \begin{aligned}
            & & & \Inf{y,\bm\mu\in\R\times\R^{\pi_{\bm m}}} y\\
            & \text{subject to} & & y\ge1+\mu_{\bm n}\\
            & \text{and} & & \forall\bm k\le\bm m,\bm k\neq\bm n,\quad y\ge\mu_{\bm k}\\
            & \text{and} & & \forall g\in\mathcal R_{\bm m,+}(\R_+^M),\left\langle \sum_{\bm k\le\bm m}\mu_{\bm k}\mathcal L_{\bm k},g\right\rangle\ge0,
        \end{aligned}
    \right. &&
\end{flalign*}
\reqnomode
for all $\bm m\ge\bm n$. Moreover, adding the condition $\mu_{\bm k}\le1$ for all $\bm k\le\bm m$ does not change the optimal value of the program. We enforce this condition in what follows.
With Lemma~\ref{lem:feasible} and Lemma~\ref{lem:productsol}, we thus obtain the following result:

\begin{lemma}[Generalisation of Lemma~\ref{lem:feasible}]\label{lem:feasiblemulti}
For all $\bm m\ge\bm n$, $\bm F^{\bm m}$ (defined in Eq.~\eqref{eq:defFbm}) is a feasible solution of~\refprog{lowerSDPnalt_multiapp2}.
\end{lemma}

\noindent In particular, for all $\bm m\in\N^M$, $\sum_{\bm k\le\bm m}F_{\bm k}^{\bm m}\mathcal L_{\bm k}\in\mathcal R_{\bm m,+}(\R_+^M)$. For $\bm m,\bm n\in\N^M$ with $\bm m\ge\bm n$, let $\bm\mu\in\R^{\pi_{\bm m}}$ be a feasible solution of~\refprog{lowerDSDPnalt_multiapp2}. Then, for all $\bm l\le\bm m$
\begin{equation}
    \left\langle\sum_{\bm k\le\bm m}\mu_{\bm k}\mathcal L_{\bm k},\sum_{\bm k\le\bm l}F_{\bm k}^{\bm l}\mathcal L_{\bm k}\right\rangle\ge0,
\end{equation}
so that
\begin{equation}
    \sum_{\bm k\le\bm l}\mu_{\bm k}F_{\bm k}^{\bm l}\ge0.
\end{equation}
Hence, for all $\bm l\ge\bm m$,
\begin{equation}
    \begin{aligned}
        \mu_{\bm l}&\ge-\frac1{F_{\bm l}^{\bm l}}\sum_{\substack{\bm k\le\bm l\\\bm k\neq\bm l}}\mu_{\bm k}F_{\bm k}^{\bm l}\\
        &\ge-\frac1{F_{\bm l}^{\bm l}}\sum_{\substack{\bm k\le\bm l\\\bm k\neq\bm l}}F_{\bm k}^{\bm l}\\
        &=1-\frac1{F_{\bm l}^{\bm l}}\\
        &\ge1-\pi_{\bm l},
    \end{aligned}
\end{equation}
where we used $F_{\bm l}^{\bm l}>0$ in the first line, $\mu_{\bm k}\le1$ and $F_{\bm k}^{\bm l}\ge0$ in the second line, $\sum_{\bm k\le\bm l}F_{\bm k}^{\bm l}=1$ in the third line, and Eq.~\eqref{eq:boundFnnmulti} in the last line. 

With these additional results, the proof of convergence of the multimode hierarchy of lower bounds \refprog{lowerSDPnalt1_multiapp} is then obtained directly from its single-mode counterpart using multi-index notations:

\begin{theorem}[Generalisation of Theorem~\ref{th:lowerCV}]\label{th:lowerCVmulti}
The increasing sequence of optimal values $\omega^{m\bm1,\le}_{\bm n}$ of \refprog{lowerSDPnalt1_multiapp} converges to the optimal value $\omega_{\bm n}^\mathcal S$ of \refprog{LPS}:
\begin{equation}
    \lim_{m\rightarrow+\infty}\omega^{m\bm1,\le}_{\bm n}=\omega_{\bm n}^\mathcal S.
\end{equation}
\end{theorem}

\noindent With Lemma~\ref{lem:equiv}, we also obtain
\begin{equation}
    \lim_{m\rightarrow+\infty}\omega^{m,\le}_{\bm n}=\omega_{\bm n}^\mathcal S.
\end{equation}
Like in the single-mode case, Theorem~\ref{th:upperCVmulti} and Theorem~\ref{th:lowerCVmulti} imply strong duality between the linear programs:

\begin{theorem}[Generalisation of Theorem~\ref{th:sdLP}]\label{th:sdLPmulti}
Strong duality holds between the programs \refprog{LP_multiapp} and \refprog{DLP_multiapp} and between programs \refprog{LP_multiappS} and \refprog{DLP_multiappS}.
\end{theorem}


\newpage 
\section{Bounds on threshold values of several witness for \texorpdfstring{$n=3$}{n=3} }
\label{sec:weightedwitness_n3}
Below, we provide tables of numerical upper bounds and lower bounds obtained on the threshold values for witnesses of the form:
\begin{equation}
    \hat \Omega_{(a_1,a_2,a_3)} = a_1 \ket 1\! \bra 1 + a_2 \ket 2\! \bra 2 + a_3 \ket 3\! \bra 3
\end{equation}
where $\forall i \in \{1,2,3\},\, 0 \leq a_i \leq 1$ and $\max_i a_i =1$. We focused on these particular witnesses for experimental considerations as it is challenging to obtain fidelities with higher Fock states. We vary each $a_i$ from $0$ to $1$ with a step of $0.1$.

\begin{table}[ht]
\small
\centering
\begin{tabular}{||ccc|cc||ccc|cc||ccc|cc||}
$a_1$ & $a_2$ & $a_3$ & $\omega_{\bm a}^{\leq}$ & $\omega_{\bm a}^{\geq}$ & $a_1$ & $a_2$ & $a_3$ & $\omega_{\bm a}^{\leq}$ & $\omega_{\bm a}^{\geq}$ & $a_1$ & $a_2$ & $a_3$ & $\omega_{\bm a}^{\leq}$ & $\omega_{\bm a}^{\geq}$ \\ \midrule
1.0 & 0.0 & 0.0 & 0.500 & 0.529 & 1.0 & 0.3 & 0.8 & 0.589 & 0.590 & 1.0 & 0.7 & 0.5 & 0.705 & 0.715 \\
1.0 & 0.0 & 0.1 & 0.500 & 0.529 & 1.0 & 0.3 & 0.9 & 0.606 & 0.610 & 1.0 & 0.7 & 0.6 & 0.718 & 0.720 \\
1.0 & 0.0 & 0.2 & 0.500 & 0.529 & 1.0 & 0.3 & 1.0 & 0.626 & 0.633 & 1.0 & 0.7 & 0.7 & 0.735 & 0.738 \\
1.0 & 0.0 & 0.3 & 0.500 & 0.529 & 1.0 & 0.4 & 0.0 & 0.610 & 0.615 & 1.0 & 0.7 & 0.8 & 0.754 & 0.758 \\
1.0 & 0.0 & 0.4 & 0.500 & 0.529 & 1.0 & 0.4 & 0.1 & 0.610 & 0.615 & 1.0 & 0.7 & 0.9 & 0.774 & 0.781 \\
1.0 & 0.0 & 0.5 & 0.500 & 0.528 & 1.0 & 0.4 & 0.2 & 0.610 & 0.615 & 1.0 & 0.7 & 1.0 & 0.795 & 0.805 \\
1.0 & 0.0 & 0.6 & 0.500 & 0.529 & 1.0 & 0.4 & 0.3 & 0.610 & 0.615 & 1.0 & 0.8 & 0.0 & 0.739 & 0.751 \\
1.0 & 0.0 & 0.7 & 0.500 & 0.529 & 1.0 & 0.4 & 0.4 & 0.610 & 0.615 & 1.0 & 0.8 & 0.1 & 0.739 & 0.751 \\
1.0 & 0.0 & 0.8 & 0.500 & 0.529 & 1.0 & 0.4 & 0.5 & 0.610 & 0.615 & 1.0 & 0.8 & 0.2 & 0.739 & 0.751 \\
1.0 & 0.0 & 0.9 & 0.500 & 0.530 & 1.0 & 0.4 & 0.6 & 0.610 & 0.615 & 1.0 & 0.8 & 0.3 & 0.739 & 0.751 \\
1.0 & 0.0 & 1.0 & 0.500 & 0.563 & 1.0 & 0.4 & 0.7 & 0.614 & 0.615 & 1.0 & 0.8 & 0.4 & 0.739 & 0.751 \\
1.0 & 0.1 & 0.0 & 0.526 & 0.529 & 1.0 & 0.4 & 0.8 & 0.629 & 0.631 & 1.0 & 0.8 & 0.5 & 0.742 & 0.751 \\
1.0 & 0.1 & 0.1 & 0.526 & 0.529 & 1.0 & 0.4 & 0.9 & 0.648 & 0.653 & 1.0 & 0.8 & 0.6 & 0.759 & 0.761 \\
1.0 & 0.1 & 0.2 & 0.526 & 0.529 & 1.0 & 0.4 & 1.0 & 0.669 & 0.677 & 1.0 & 0.8 & 0.7 & 0.777 & 0.780 \\
1.0 & 0.1 & 0.3 & 0.526 & 0.529 & 1.0 & 0.5 & 0.0 & 0.640 & 0.649 & 1.0 & 0.8 & 0.8 & 0.796 & 0.801 \\
1.0 & 0.1 & 0.4 & 0.526 & 0.529 & 1.0 & 0.5 & 0.1 & 0.640 & 0.649 & 1.0 & 0.8 & 0.9 & 0.816 & 0.824 \\
1.0 & 0.1 & 0.5 & 0.526 & 0.529 & 1.0 & 0.5 & 0.2 & 0.640 & 0.649 & 1.0 & 0.8 & 1.0 & 0.837 & 0.848 \\
1.0 & 0.1 & 0.6 & 0.526 & 0.529 & 1.0 & 0.5 & 0.3 & 0.640 & 0.649 & 1.0 & 0.9 & 0.0 & 0.773 & 0.790 \\
1.0 & 0.1 & 0.7 & 0.526 & 0.529 & 1.0 & 0.5 & 0.4 & 0.640 & 0.649 & 1.0 & 0.9 & 0.1 & 0.773 & 0.790 \\
1.0 & 0.1 & 0.8 & 0.526 & 0.529 & 1.0 & 0.5 & 0.5 & 0.640 & 0.649 & 1.0 & 0.9 & 0.2 & 0.773 & 0.790 \\
1.0 & 0.1 & 0.9 & 0.526 & 0.528 & 1.0 & 0.5 & 0.6 & 0.640 & 0.649 & 1.0 & 0.9 & 0.3 & 0.773 & 0.790 \\
1.0 & 0.1 & 1.0 & 0.542 & 0.563 & 1.0 & 0.5 & 0.7 & 0.654 & 0.655 & 1.0 & 0.9 & 0.4 & 0.773 & 0.790 \\
1.0 & 0.2 & 0.0 & 0.552 & 0.555 & 1.0 & 0.5 & 0.8 & 0.671 & 0.673 & 1.0 & 0.9 & 0.5 & 0.783 & 0.790 \\
1.0 & 0.2 & 0.1 & 0.552 & 0.555 & 1.0 & 0.5 & 0.9 & 0.690 & 0.696 & 1.0 & 0.9 & 0.6 & 0.800 & 0.803 \\
1.0 & 0.2 & 0.2 & 0.552 & 0.555 & 1.0 & 0.5 & 1.0 & 0.711 & 0.719 & 1.0 & 0.9 & 0.7 & 0.818 & 0.822 \\
1.0 & 0.2 & 0.3 & 0.552 & 0.555 & 1.0 & 0.6 & 0.0 & 0.672 & 0.682 & 1.0 & 0.9 & 0.8 & 0.838 & 0.844 \\
1.0 & 0.2 & 0.4 & 0.552 & 0.555 & 1.0 & 0.6 & 0.1 & 0.672 & 0.682 & 1.0 & 0.9 & 0.9 & 0.858 & 0.867 \\
1.0 & 0.2 & 0.5 & 0.552 & 0.555 & 1.0 & 0.6 & 0.2 & 0.672 & 0.682 & 1.0 & 0.9 & 1.0 & 0.879 & 0.891 \\
1.0 & 0.2 & 0.6 & 0.552 & 0.555 & 1.0 & 0.6 & 0.3 & 0.672 & 0.683 & 1.0 & 1.0 & 0.0 & 0.809 & 0.830 \\
1.0 & 0.2 & 0.7 & 0.552 & 0.555 & 1.0 & 0.6 & 0.4 & 0.672 & 0.682 & 1.0 & 1.0 & 0.1 & 0.809 & 0.830 \\
1.0 & 0.2 & 0.8 & 0.552 & 0.555 & 1.0 & 0.6 & 0.5 & 0.672 & 0.682 & 1.0 & 1.0 & 0.2 & 0.809 & 0.830 \\
1.0 & 0.2 & 0.9 & 0.565 & 0.567 & 1.0 & 0.6 & 0.6 & 0.678 & 0.682 & 1.0 & 1.0 & 0.3 & 0.809 & 0.830 \\
1.0 & 0.2 & 1.0 & 0.584 & 0.594 & 1.0 & 0.6 & 0.7 & 0.694 & 0.696 & 1.0 & 1.0 & 0.4 & 0.809 & 0.830 \\
1.0 & 0.3 & 0.0 & 0.581 & 0.585 & 1.0 & 0.6 & 0.8 & 0.712 & 0.716 & 1.0 & 1.0 & 0.5 & 0.824 & 0.830 \\
1.0 & 0.3 & 0.1 & 0.581 & 0.585 & 1.0 & 0.6 & 0.9 & 0.732 & 0.739 & 1.0 & 1.0 & 0.6 & 0.841 & 0.845 \\
1.0 & 0.3 & 0.2 & 0.581 & 0.584 & 1.0 & 0.6 & 1.0 & 0.753 & 0.762 & 1.0 & 1.0 & 0.7 & 0.860 & 0.865 \\
1.0 & 0.3 & 0.3 & 0.581 & 0.584 & 1.0 & 0.7 & 0.0 & 0.705 & 0.716 & 1.0 & 1.0 & 0.8 & 0.879 & 0.887 \\
1.0 & 0.3 & 0.4 & 0.581 & 0.584 & 1.0 & 0.7 & 0.1 & 0.705 & 0.715 & 1.0 & 1.0 & 0.9 & 0.900 & 0.910 \\
1.0 & 0.3 & 0.5 & 0.581 & 0.584 & 1.0 & 0.7 & 0.2 & 0.705 & 0.715 & 1.0 & 1.0 & 1.0 & 0.922 & 0.934 \\
1.0 & 0.3 & 0.6 & 0.581 & 0.584 & 1.0 & 0.7 & 0.3 & 0.705 & 0.715 &     &     &     &       &       \\
1.0 & 0.3 & 0.7 & 0.581 & 0.584 & 1.0 & 0.7 & 0.4 & 0.705 & 0.715 &     &     &     &       &      
\end{tabular}
\label{tab:n3first}
\caption{Upper and lower bounds on the threshold values of witnesses of the form $\hat \Omega_{(1,a_2,a_3)}$.}
\end{table}

\begin{table}[ht]
\small
\centering
\begin{tabular}{||ccc|cc||ccc|cc||ccc|cc||}
$a_1$ & $a_2$ & $a_3$ & $\omega_{\bm a}^{\leq}$ & $\omega_{\bm a}^{\geq}$ & $a_1$ & $a_2$ & $a_3$ & $\omega_{\bm a}^{\leq}$ & $\omega_{\bm a}^{\geq}$ & $a_1$ & $a_2$ & $a_3$ & $\omega_{\bm a}^{\leq}$ & $\omega_{\bm a}^{\geq}$ \\ \midrule
0.0 & 1.0 & 0.0 & 0.500 & 0.546 & 0.3 & 1.0 & 0.4 & 0.594 & 0.603 & 0.6 & 1.0 & 0.8 & 0.767 & 0.779 \\
0.0 & 1.0 & 0.1 & 0.500 & 0.551 & 0.3 & 1.0 & 0.5 & 0.617 & 0.629 & 0.6 & 1.0 & 0.9 & 0.792 & 0.806 \\
0.0 & 1.0 & 0.2 & 0.500 & 0.546 & 0.3 & 1.0 & 0.6 & 0.642 & 0.655 & 0.6 & 1.0 & 1.0 & 0.817 & 0.849 \\
0.0 & 1.0 & 0.3 & 0.500 & 0.546 & 0.3 & 1.0 & 0.7 & 0.667 & 0.683 & 0.7 & 1.0 & 0.0 & 0.680 & 0.701 \\
0.0 & 1.0 & 0.4 & 0.517 & 0.546 & 0.3 & 1.0 & 0.8 & 0.693 & 0.712 & 0.7 & 1.0 & 0.1 & 0.680 & 0.702 \\
0.0 & 1.0 & 0.5 & 0.543 & 0.559 & 0.3 & 1.0 & 0.9 & 0.719 & 0.741 & 0.7 & 1.0 & 0.2 & 0.680 & 0.702 \\
0.0 & 1.0 & 0.6 & 0.569 & 0.588 & 0.3 & 1.0 & 1.0 & 0.747 & 0.789 & 0.7 & 1.0 & 0.3 & 0.691 & 0.702 \\
0.0 & 1.0 & 0.7 & 0.597 & 0.618 & 0.4 & 1.0 & 0.0 & 0.570 & 0.603 & 0.7 & 1.0 & 0.4 & 0.710 & 0.714 \\
0.0 & 1.0 & 0.8 & 0.625 & 0.650 & 0.4 & 1.0 & 0.1 & 0.570 & 0.600 & 0.7 & 1.0 & 0.5 & 0.729 & 0.735 \\
0.0 & 1.0 & 0.9 & 0.654 & 0.682 & 0.4 & 1.0 & 0.2 & 0.579 & 0.603 & 0.7 & 1.0 & 0.6 & 0.750 & 0.759 \\
0.0 & 1.0 & 1.0 & 0.683 & 0.740 & 0.4 & 1.0 & 0.3 & 0.600 & 0.607 & 0.7 & 1.0 & 0.7 & 0.772 & 0.782 \\
0.1 & 1.0 & 0.0 & 0.505 & 0.551 & 0.4 & 1.0 & 0.4 & 0.622 & 0.631 & 0.7 & 1.0 & 0.8 & 0.794 & 0.806 \\
0.1 & 1.0 & 0.1 & 0.505 & 0.551 & 0.4 & 1.0 & 0.5 & 0.644 & 0.653 & 0.7 & 1.0 & 0.9 & 0.817 & 0.830 \\
0.1 & 1.0 & 0.2 & 0.505 & 0.551 & 0.4 & 1.0 & 0.6 & 0.667 & 0.679 & 0.7 & 1.0 & 1.0 & 0.842 & 0.870 \\
0.1 & 1.0 & 0.3 & 0.517 & 0.551 & 0.4 & 1.0 & 0.7 & 0.692 & 0.705 & 0.8 & 1.0 & 0.0 & 0.722 & 0.744 \\
0.1 & 1.0 & 0.4 & 0.542 & 0.556 & 0.4 & 1.0 & 0.8 & 0.717 & 0.733 & 0.8 & 1.0 & 0.1 & 0.722 & 0.744 \\
0.1 & 1.0 & 0.5 & 0.567 & 0.583 & 0.4 & 1.0 & 0.9 & 0.743 & 0.763 & 0.8 & 1.0 & 0.2 & 0.722 & 0.744 \\
0.1 & 1.0 & 0.6 & 0.593 & 0.610 & 0.4 & 1.0 & 1.0 & 0.769 & 0.813 & 0.8 & 1.0 & 0.3 & 0.724 & 0.744 \\
0.1 & 1.0 & 0.7 & 0.619 & 0.641 & 0.5 & 1.0 & 0.0 & 0.604 & 0.632 & 0.8 & 1.0 & 0.4 & 0.741 & 0.745 \\
0.1 & 1.0 & 0.8 & 0.647 & 0.671 & 0.5 & 1.0 & 0.1 & 0.604 & 0.638 & 0.8 & 1.0 & 0.5 & 0.760 & 0.764 \\
0.1 & 1.0 & 0.9 & 0.675 & 0.701 & 0.5 & 1.0 & 0.2 & 0.610 & 0.632 & 0.8 & 1.0 & 0.6 & 0.779 & 0.786 \\
0.1 & 1.0 & 1.0 & 0.704 & 0.750 & 0.5 & 1.0 & 0.3 & 0.629 & 0.635 & 0.8 & 1.0 & 0.7 & 0.800 & 0.809 \\
0.2 & 1.0 & 0.0 & 0.519 & 0.556 & 0.5 & 1.0 & 0.4 & 0.650 & 0.658 & 0.8 & 1.0 & 0.8 & 0.822 & 0.832 \\
0.2 & 1.0 & 0.1 & 0.519 & 0.556 & 0.5 & 1.0 & 0.5 & 0.672 & 0.681 & 0.8 & 1.0 & 0.9 & 0.844 & 0.857 \\
0.2 & 1.0 & 0.2 & 0.522 & 0.556 & 0.5 & 1.0 & 0.6 & 0.694 & 0.705 & 0.8 & 1.0 & 1.0 & 0.867 & 0.891 \\
0.2 & 1.0 & 0.3 & 0.544 & 0.556 & 0.5 & 1.0 & 0.7 & 0.717 & 0.729 & 0.9 & 1.0 & 0.0 & 0.765 & 0.787 \\
0.2 & 1.0 & 0.4 & 0.567 & 0.578 & 0.5 & 1.0 & 0.8 & 0.742 & 0.756 & 0.9 & 1.0 & 0.1 & 0.765 & 0.787 \\
0.2 & 1.0 & 0.5 & 0.592 & 0.605 & 0.5 & 1.0 & 0.9 & 0.767 & 0.784 & 0.9 & 1.0 & 0.2 & 0.765 & 0.787 \\
0.2 & 1.0 & 0.6 & 0.617 & 0.633 & 0.5 & 1.0 & 1.0 & 0.793 & 0.831 & 0.9 & 1.0 & 0.3 & 0.765 & 0.787 \\
0.2 & 1.0 & 0.7 & 0.643 & 0.660 & 0.6 & 1.0 & 0.0 & 0.641 & 0.665 & 0.9 & 1.0 & 0.4 & 0.774 & 0.787 \\
0.2 & 1.0 & 0.8 & 0.669 & 0.690 & 0.6 & 1.0 & 0.1 & 0.641 & 0.665 & 0.9 & 1.0 & 0.5 & 0.791 & 0.795 \\
0.2 & 1.0 & 0.9 & 0.697 & 0.722 & 0.6 & 1.0 & 0.2 & 0.641 & 0.665 & 0.9 & 1.0 & 0.6 & 0.810 & 0.815 \\
0.2 & 1.0 & 1.0 & 0.725 & 0.770 & 0.6 & 1.0 & 0.3 & 0.660 & 0.665 & 0.9 & 1.0 & 0.7 & 0.829 & 0.836 \\
0.3 & 1.0 & 0.0 & 0.542 & 0.575 & 0.6 & 1.0 & 0.4 & 0.679 & 0.685 & 0.9 & 1.0 & 0.8 & 0.850 & 0.860 \\
0.3 & 1.0 & 0.1 & 0.542 & 0.575 & 0.6 & 1.0 & 0.5 & 0.700 & 0.708 & 0.9 & 1.0 & 0.9 & 0.872 & 0.883 \\
0.3 & 1.0 & 0.2 & 0.550 & 0.575 & 0.6 & 1.0 & 0.6 & 0.722 & 0.732 & 0.9 & 1.0 & 1.0 & 0.894 & 0.913 \\
0.3 & 1.0 & 0.3 & 0.572 & 0.580 & 0.6 & 1.0 & 0.7 & 0.744 & 0.756 &     &     &     &       &      
\end{tabular}
\label{tab:n3second}
\caption{Upper and lower bounds on the threshold values of witnesses of the form $\hat \Omega_{(a_1,1,a_3)}$.}
\end{table}

\begin{table}[ht]
\small
\centering
\begin{tabular}{||ccc|cc||ccc|cc||ccc|cc||}
$a_1$ & $a_2$ & $a_3$ & $\omega_{\bm a}^{\leq}$ & $\omega_{\bm a}^{\geq}$ & $a_1$ & $a_2$ & $a_3$ & $\omega_{\bm a}^{\leq}$ & $\omega_{\bm a}^{\geq}$ & $a_1$ & $a_2$ & $a_3$ & $\omega_{\bm a}^{\leq}$ & $\omega_{\bm a}^{\geq}$ \\ \midrule
0.0 & 0.0 & 1.0 & 0.377 & 0.428 & 0.3 & 0.4 & 1.0 & 0.529 & 0.547 & 0.6 & 0.8 & 1.0 & 0.734 & 0.752 \\
0.0 & 0.1 & 1.0 & 0.401 & 0.437 & 0.3 & 0.5 & 1.0 & 0.561 & 0.583 & 0.6 & 0.9 & 1.0 & 0.775 & 0.793 \\
0.0 & 0.2 & 1.0 & 0.429 & 0.452 & 0.3 & 0.6 & 1.0 & 0.594 & 0.621 & 0.7 & 0.0 & 1.0 & 0.463 & 0.521 \\
0.0 & 0.3 & 1.0 & 0.458 & 0.474 & 0.3 & 0.7 & 1.0 & 0.628 & 0.657 & 0.7 & 0.1 & 1.0 & 0.490 & 0.522 \\
0.0 & 0.4 & 1.0 & 0.488 & 0.510 & 0.3 & 0.8 & 1.0 & 0.667 & 0.694 & 0.7 & 0.2 & 1.0 & 0.521 & 0.542 \\
0.0 & 0.5 & 1.0 & 0.519 & 0.533 & 0.3 & 0.9 & 1.0 & 0.707 & 0.733 & 0.7 & 0.3 & 1.0 & 0.555 & 0.580 \\
0.0 & 0.6 & 1.0 & 0.550 & 0.572 & 0.4 & 0.0 & 1.0 & 0.426 & 0.480 & 0.7 & 0.4 & 1.0 & 0.594 & 0.615 \\
0.0 & 0.7 & 1.0 & 0.583 & 0.605 & 0.4 & 0.1 & 1.0 & 0.451 & 0.485 & 0.7 & 0.5 & 1.0 & 0.635 & 0.653 \\
0.0 & 0.8 & 1.0 & 0.616 & 0.642 & 0.4 & 0.2 & 1.0 & 0.480 & 0.501 & 0.7 & 0.6 & 1.0 & 0.676 & 0.692 \\
0.0 & 0.9 & 1.0 & 0.649 & 0.678 & 0.4 & 0.3 & 1.0 & 0.511 & 0.528 & 0.7 & 0.7 & 1.0 & 0.717 & 0.733 \\
0.1 & 0.0 & 1.0 & 0.390 & 0.440 & 0.4 & 0.4 & 1.0 & 0.543 & 0.567 & 0.7 & 0.8 & 1.0 & 0.758 & 0.774 \\
0.1 & 0.1 & 1.0 & 0.414 & 0.448 & 0.4 & 0.5 & 1.0 & 0.576 & 0.602 & 0.7 & 0.9 & 1.0 & 0.800 & 0.815 \\
0.1 & 0.2 & 1.0 & 0.441 & 0.464 & 0.4 & 0.6 & 1.0 & 0.610 & 0.637 & 0.8 & 0.0 & 1.0 & 0.475 & 0.535 \\
0.1 & 0.3 & 1.0 & 0.471 & 0.486 & 0.4 & 0.7 & 1.0 & 0.649 & 0.674 & 0.8 & 0.1 & 1.0 & 0.504 & 0.535 \\
0.1 & 0.4 & 1.0 & 0.501 & 0.519 & 0.4 & 0.8 & 1.0 & 0.689 & 0.713 & 0.8 & 0.2 & 1.0 & 0.537 & 0.562 \\
0.1 & 0.5 & 1.0 & 0.532 & 0.549 & 0.4 & 0.9 & 1.0 & 0.729 & 0.752 & 0.8 & 0.3 & 1.0 & 0.576 & 0.597 \\
0.1 & 0.6 & 1.0 & 0.565 & 0.588 & 0.5 & 0.0 & 1.0 & 0.439 & 0.494 & 0.8 & 0.4 & 1.0 & 0.617 & 0.633 \\
0.1 & 0.7 & 1.0 & 0.597 & 0.623 & 0.5 & 0.1 & 1.0 & 0.464 & 0.497 & 0.8 & 0.5 & 1.0 & 0.658 & 0.672 \\
0.1 & 0.8 & 1.0 & 0.631 & 0.659 & 0.5 & 0.2 & 1.0 & 0.494 & 0.513 & 0.8 & 0.6 & 1.0 & 0.700 & 0.713 \\
0.1 & 0.9 & 1.0 & 0.665 & 0.696 & 0.5 & 0.3 & 1.0 & 0.525 & 0.544 & 0.8 & 0.7 & 1.0 & 0.742 & 0.754 \\
0.2 & 0.0 & 1.0 & 0.402 & 0.453 & 0.5 & 0.4 & 1.0 & 0.558 & 0.583 & 0.8 & 0.8 & 1.0 & 0.784 & 0.808 \\
0.2 & 0.1 & 1.0 & 0.426 & 0.460 & 0.5 & 0.5 & 1.0 & 0.592 & 0.617 & 0.8 & 0.9 & 1.0 & 0.825 & 0.849 \\
0.2 & 0.2 & 1.0 & 0.454 & 0.476 & 0.5 & 0.6 & 1.0 & 0.631 & 0.655 & 0.9 & 0.0 & 1.0 & 0.488 & 0.549 \\
0.2 & 0.3 & 1.0 & 0.484 & 0.499 & 0.5 & 0.7 & 1.0 & 0.671 & 0.693 & 0.9 & 0.1 & 1.0 & 0.518 & 0.549 \\
0.2 & 0.4 & 1.0 & 0.515 & 0.530 & 0.5 & 0.8 & 1.0 & 0.711 & 0.732 & 0.9 & 0.2 & 1.0 & 0.559 & 0.578 \\
0.2 & 0.5 & 1.0 & 0.547 & 0.570 & 0.5 & 0.9 & 1.0 & 0.752 & 0.773 & 0.9 & 0.3 & 1.0 & 0.600 & 0.613 \\
0.2 & 0.6 & 1.0 & 0.579 & 0.604 & 0.6 & 0.0 & 1.0 & 0.451 & 0.506 & 0.9 & 0.4 & 1.0 & 0.642 & 0.652 \\
0.2 & 0.7 & 1.0 & 0.613 & 0.640 & 0.6 & 0.1 & 1.0 & 0.477 & 0.509 & 0.9 & 0.5 & 1.0 & 0.684 & 0.693 \\
0.2 & 0.8 & 1.0 & 0.646 & 0.676 & 0.6 & 0.2 & 1.0 & 0.507 & 0.527 & 0.9 & 0.6 & 1.0 & 0.726 & 0.736 \\
0.2 & 0.9 & 1.0 & 0.685 & 0.714 & 0.6 & 0.3 & 1.0 & 0.540 & 0.564 & 0.9 & 0.7 & 1.0 & 0.768 & 0.779 \\
0.3 & 0.0 & 1.0 & 0.414 & 0.466 & 0.6 & 0.4 & 1.0 & 0.573 & 0.598 & 0.9 & 0.8 & 1.0 & 0.810 & 0.821 \\
0.3 & 0.1 & 1.0 & 0.439 & 0.505 & 0.6 & 0.5 & 1.0 & 0.613 & 0.635 & 0.9 & 0.9 & 1.0 & 0.852 & 0.866 \\
0.3 & 0.2 & 1.0 & 0.467 & 0.488 & 0.6 & 0.6 & 1.0 & 0.653 & 0.673 &     &     &     &       &       \\
0.3 & 0.3 & 1.0 & 0.497 & 0.513 & 0.6 & 0.7 & 1.0 & 0.693 & 0.712 &     &     &     &       &      
\end{tabular}
\label{tab:n3third}
\caption{Upper and lower bounds on the threshold values of witnesses of the form $\hat \Omega_{(a_1,a_2,1)}$.}
\end{table}

\end{document}